\let\csname equation*\endcsname\relax
\let\csname endequation*\endcsname\relax
\theoremstyle{theorem}
\newtheorem{thm}{Theorem}
\newtheorem{prop}{Proposition}
\newtheorem{lemm}{Lemma}
\theoremstyle{definition}
\newtheorem{defi}{Definition}
\newtheorem{exam}{Example}
\theoremstyle{remark}
\newtheorem{rem}{Remark}
\newcommand{\defarrow}{\stackrel{\mathrm{def.}}{\Leftrightarrow}}
\newcommand{\natn}{\mathbb{N}}
\newcommand{\realn}{\mathbb{R}}
\newcommand{\cmplx}{\mathbb{C}}
\newcommand{\Nn}{\mathbb{N}_n}
\newcommand{\NN}{\mathbb{N}_N}
\newcommand{\Nm}{\mathbb{N}_m}
\newcommand{\Nd}{\mathbb{N}_d}
\newcommand{\cH}{\mathcal{H}}
\newcommand{\cK}{\mathcal{K}}
\newcommand{\cJ}{\mathcal{J}}
\newcommand{\oA}{\mathsf{A}}
\newcommand{\oB}{\mathsf{B}}
\newcommand{\oM}{\mathsf{M}}
\newcommand{\oN}{\mathsf{N}}
\newcommand{\oU}{\mathsf{U}}
\newcommand{\oAj}{\mathsf{A}^{(j)}}
\newcommand{\oAk}{\mathsf{A}^{(k)}}
\newcommand{\oBj}{\mathsf{B}^{(j)}}
\newcommand{\oBk}{\mathsf{B}^{(k)}}
\newcommand{\oMseq}{(\oM(j))_{j=0}^{m-1}}
\newcommand{\oNseq}{(\oN(k))_{k=0}^{n-1}}
\newcommand{\unit}{\mathbbm{1}}
\newcommand{\pp}{\preceq_{\mathrm{post}}}
\newcommand{\ppeq}{\sim_{\mathrm{post}}}
\newcommand{\CPpp}{\preceq_{\mathrm{CP}}}
\newcommand{\CPeq}{\sim_{\mathrm{CP}}}
\newcommand{\id}{\mathrm{id}}
\newcommand{\ordsim}{\mathord\sim}
\newcommand{\ordpreceq}{\mathord\preceq}
\newcommand{\ordpp}{\mathord\pp}
\newcommand{\ordCPpp}{\mathord\preceq_{\mathrm{CP}}}
\newcommand{\Mfin}{\mathfrak{M}_{\mathrm{fin}}}
\newcommand{\MfinH}{\mathfrak{M}_{\mathrm{fin}}(\mathcal{H})}
\newcommand{\evmfH}{\mathrm{EVM}_{\mathrm{fin}}(\DeH{})}
\newcommand{\cbit}{\mathrm{cbit}}
\newcommand{\Ocbit}{\Omega_{\mathrm{cbit}}}
\newcommand{\MfOc}{\Mfin (\Ocbit)}
\newcommand{\MfinO}{\mathfrak{M}_{\mathrm{fin}}(\Omega)}
\newcommand{\evm}{\mathrm{EVM}}
\newcommand{\evmfO}{\mathrm{EVM}_{\mathrm{fin}}(\Omega)}
\newcommand{\bCh}{\mathbf{Ch}}
\newcommand{\fC}{\mathfrak{C}}
\newcommand{\fCH}{\mathfrak{C} (\mathcal{H})}
\newcommand{\dimord}{\dim_{\mathrm{ord}}}
\newcommand{\dimordR}{\dim_{\mathrm{ord}, \, \realn}}
\newcommand{\dpc}{\mathrm{dpc}}
\newcommand{\BH}{\mathbf{B}(\mathcal{H})}
\newcommand{\BK}{\mathbf{B}(\mathcal{K})}
\newcommand{\BKF}{\mathbf{B}(P_F\mathcal{K})}
\newcommand{\BJ}{\mathbf{B}(\mathcal{J})}
\newcommand{\BCn}{\mathbf{B}(\mathbb{C}^n)}
\newcommand{\TcH}{\mathbf{T}(\mathcal{H})}
\newcommand{\TcK}{\mathbf{T}(\mathcal{K})}
\newcommand{\BsaH}{\mathbf{B}_{\mathrm{sa}}(\mathcal{H})}
\newcommand{\TsaH}{\mathbf{T}_{\mathrm{sa}}(\mathcal{H})}
\newcommand{\BpH}{\mathbf{B}_+(\mathcal{H})}
\newcommand{\TpH}{\mathbf{T}_+(\mathcal{H})}
\newcommand{\DeH}{\mathbf{D}(\mathcal{H})}
\newcommand{\DeK}{\mathbf{D}(\mathcal{K})}
\newcommand{\EO}{\mathcal{E}(\Omega)}
\newcommand{\Pg}{P_\mathrm{g}}
\newcommand{\E}{\mathcal{E}}
\newcommand{\F}{\mathcal{F}}
\newcommand{\FI}{\mathbb{F}(I)}
\newcommand{\uwarrow}{\xrightarrow{\mathrm{uw}}}
\DeclareMathOperator*{\uwlim}{uw-lim}
\begin{document}

\title[Infinite dimensionality of the post-processing order of measurements]
{Infinite dimensionality of the post-processing order of measurements on a general state space}

\author{Yui Kuramochi}

\address{Department of Physics, Faculty of Science, Kyushu University, 744 Motooka, Nishi-ku, Fukuoka, Japan}
\ead{yui.tasuke.kuramochi@gmail.com}
\vspace{10pt}
\begin{indented}
\item[]\today
\end{indented}

\begin{abstract}
For a partially ordered set $(S, \mathord\preceq)$, the order (monotone) dimension is the minimum cardinality of total orders (respectively, real-valued order monotone functions) on $S$ that characterize the order $\preceq$.
In this paper we consider an arbitrary generalized probabilistic theory and the set of finite-outcome measurements on it, which can be described by effect-valued measures, equipped with the classical post-processing orders.
We prove that the order and order monotone dimensions of the post-processing order are (countably) infinite if the state space is not a singleton (and is separable in the norm topology).
This result gives a negative answer to the open question for quantum measurements posed in [Guff T \textit{et al.\/} 2021 \textit{J.\ Phys.\ A: Math.\ Theor.} \textbf{54} 225301].
We also consider the quantum post-processing relation of channels with a fixed input quantum system described by a separable Hilbert space $\mathcal{H}$ and show that the order (monotone) dimension is countably infinite when $\dim \mathcal{H} \geq 2$.
\end{abstract}

\vspace{2pc}
\noindent
\textit{Keywords}: effect-valued measure, quantum channels, post-processing relation of measurements, state discrimination probability, order dimension

\section{Introduction} \label{sec:intro}

Order structures are ubiquitous in broad range of physical theories.
Examples of them are
the adiabatic accessibility relation in thermodynamics~\cite{giles1964mathematical,LIEB19991}, 
the convertibility relation by local operations and classical communication (LOCC) of bipartite quantum states~\cite{RevModPhys.81.865,nielsen_chuang_2010}, the post-processing relation of quantum measurements, or positive operator-valued measures (POVMs)~\cite{Martens1990,Dorofeev1997349,buscemi2005clean,jencova2008,10.1063/1.4934235,10.1063/1.4961516,Guff_2021}, and so on.
These relations are related to (quantum) resource theories~\cite{RevModPhys.91.025001}, which are recently being intensively studied in the area of quantum information.

In thermodynamics, the adiabatic accessibility relation is characterized by a single function, the entropy.
This means that a thermodynamic state is adiabatically convertible to another one if and only if the entropy increases.
From this, we can see that the adiabatic accessibility relation is a total order.
On the other hand, the LOCC convertibility relation of finite-dimensional bipartite pure state is not a total order, while the well-known characterization of this relation by the majorization order (\cite{nielsen_chuang_2010}, Theorem~12.15) suggests that we have still finite number of order monotones (i.e.\ order-preserving real-valued functions) that characterize the order.

How about the case of the post-processing relation of POVMs?
It is shown by many authors~\cite{buscemi2016degradable,PhysRevLett.122.140403,Guff_2021,kuramochi2020compact} that the state discrimination probabilities characterize the post-processing order (Proposition~\ref{prop:bss}), while they are uncountably infinite as pointed out in \cite{Guff_2021}.
These in mind, the authors of \cite{Guff_2021} asked if we can characterize the post-processing order of POVMs by a \textit{finite} number of order monotones, as in the case of the LOCC convertibility relation of bipartite pure states.

In this paper we give a negative answer to this open question.
More strongly, we show that any choice of finite number of order monotones or total orders cannot characterize the post-processing relation of measurements on any non-trivial generalized probabilistic theory (GPT)~\cite{1751-8121-47-32-323001,ddd.uab.cat:187745,kuramochi2020compact}, which is a general framework of physical theories and contains the quantum and classical theories as special examples.
Moreover, we demonstrate that a \textit{countable} number of order monotones characterizing the post-processing order exists if the state space is separable in the norm topology. 
As a corollary, we also prove a similar statement for the quantum post-processing relation for quantum channels with a fixed input Hilbert space.

The main theorems (Theorems~\ref{thm:main1} and \ref{thm:main2}) and their proofs are based on the notion of the \textit{order dimension} (Definition~\ref{def:dim}) known in the area of order theory~\cite{Dushnik1941,Hiraguti1955,BA18230795}.
The order dimension of an order is defined as the minimum number of total orders that characterize the order and roughly quantifies the complexity of the order or its deviation from a simple total order.

In the main part, we also introduce another related quantity called the \textit{order monotone dimension} of an order (Definition~\ref{def:dim}).
This is defined as the minimum number of order monotones that characterize the order and directly connected to the open question in \cite{Guff_2021}.
The order monotone dimension is shown to be always greater than or equal to the order dimension (see Lemma~\ref{lemm:leq}).
The present work is the first attempt to evaluate these dimensions of orders appearing in quantum information.

The rest of this paper is organized as follows.
In Section~\ref{sec:prel} we introduce some preliminaries and definitions.
In Section~\ref{sec:main} we state and prove the main theorems (Theorems~\ref{thm:main1} and \ref{thm:main2}).
We then conclude the paper in Section~\ref{sec:conclusion}.

\section{Preliminaries} \label{sec:prel}
In this section, we give preliminaries on the order theory and post-processing relations of measurements and quantum channels and fix the notation.

In this paper, we denote by $\natn = \set{1,2, \dots}$, $\realn$, and $\cmplx$ the sets of natural, real, and complex numbers, respectively. 
For each $m \in \natn $, we write as $\Nm := \{ 0,1,\dots , m-1 \}$.
The vector spaces in this paper are over $\realn$ unless otherwise stated.

\subsection{Order theory} \label{subsec:ord}
In this paper we identify each binary relation $R$ on a set $S$ with its graph, that is, $R$ is the subset of $S \times S$ consisting of pairs satisfying the relation $R .$
For a binary relation $R \subseteq S \times S ,$ the relation $(x,y) \in R $ $(x,y \in S)$ is occasionally written as $x R y ,$ which is consistent with the common notation.
If $A$ is a subset of $S$ and $R$ is a binary relation on $S ,$ we define the \textit{restriction} of $R$ to $A$ by $ R\rvert_A  := R \cap (A \times A) . $

Let $R$ be a binary relation on a set $S .$
We consider the following conditions for $R.$
\begin{itemize}
\item
$R$ is \textit{reflexive} $:\defarrow$ $x R x$ $(\forall x \in S).$
\item
$R$ is \textit{symmetric} $:\defarrow$ $x R y$ implies $yRx$ $(\forall x ,y \in S).$
\item
$R$ is \textit{antisymmetric} $:\defarrow$ $x R y$ and $yRx$ imply $x=y$ $(\forall x ,y \in S).$
\item
$R$ is \textit{transitive} $: \defarrow$ $xRy$ and $yRz$ imply $xRz$ $(\forall x ,y, z \in S).$
\item
$R$ is \textit{total} $:\defarrow$ either $xRy$ or $yRx$ holds $(\forall x, y \in S) .$
\end{itemize}
A reflexive, transitive binary relation is called a \textit{preorder}.
An antisymmetric preorder is called a \textit{partial order}.
A total partial order is just called a \textit{total order} (or a \textit{linear order}).
If $R$ is a partial order on a set $S ,$ $(S,R)$ is called a \textit{partially ordered set}, or a \textit{poset}.
A poset $(S, R)$ is called a \textit{chain} if $R$ is a total order.

If $P$ is a preorder on $S ,$ then the relation $\sim$ defined via
\[
	x \sim y : \defarrow  \text{$xPy$ and $yPx$} \quad (x,y \in S)
\]
is an equivalence relation (i.e.\ a reflexive, symmetric, and transitive relation).
If we denote by $[x]$ the equivalence class to which $x \in S$ belongs, then we may define a binary relation $R$ on the quotient space $S/ \ordsim$ via
\[
	[x] R [y] :\defarrow x Py
\]
and $R$ is a partial order on $S/ \ordsim$.

A binary relation $P$ on $S$ is said to be an \textit{extension} of a binary relation $R$ on $S$ if $R \subseteq P ,$ i.e.\ $xRy$ implies $xPy$ for every $x,y \in S .$
If $L$ is an extension of a binary relation $R $ and is a total order, then $L$ is called a \textit{linear extension} of $R$.

A family $\mathcal{L}$ of binary relations on a set $S$ is said to \textit{realize} a binary relation $R$ on $S$ (or \textit{realize} $(S,R)$) if $R = \bigcap \mathcal{L} ,$ 
i.e.\ for every $x,y \in S $
\[
	x R y \iff [ x L y \quad (\forall L \in \mathcal{L})].
\]

Let $(S , \ordpreceq)$ be a poset.
A real-valued function $f \colon S \to \realn$ is called an \textit{order monotone}, or a \textit{resource monotone}, (with respect to $\ordpreceq$) if for every $x,y \in S$
\begin{equation}
	x \preceq y \implies f(x) \leq f(y) .
	\notag
\end{equation}
A family $\F$ of real-valued functions on $S$ is said to \textit{characterize}
(or to be a \textit{complete family} of) $(S, \ordpreceq) $ if for every $x,y \in S$
\begin{equation}
	x \preceq y \iff [f(x) \leq f(y) \quad (\forall f \in \F )] .
	\label{eq:condiF}
\end{equation}
If \eqref{eq:condiF} holds, then each $f\in \F$ is necessarily an order monotone.

The following notions of the order and order monotone dimensions play the central role in this paper.

\begin{defi} \label{def:dim}
Let $(S , \ordpreceq)$ be a poset.
\begin{enumerate}
\item 
By the \textit{order dimension} \cite{Dushnik1941,Hiraguti1955} of $(S , \ordpreceq) ,$ written as $\dimord (S , \ordpreceq ) ,$ we mean the minimum cardinality~\cite{halmos1960naive} $|\mathcal{L} |$ of a family $\mathcal{L}$ of linear extensions of $\ordpreceq$ that realizes $(S ,\ordpreceq ).$
Here $|A|$ denotes the cardinality of a set $A .$
\item
By the \textit{order monotone dimension} of $(S , \ordpreceq) ,$ written as $\dimordR (S , \ordpreceq ) ,$ we mean the minimum cardinality $|\F |$ of a family $\F$ of order monotones on $S$ that characterizes $(S, \ordpreceq ) .$
\end{enumerate}
\end{defi}
The order monotone dimension of a poset $(S , \ordpreceq)$ is well-defined.
Indeed, according to \cite{COECKE201659} (Proposition~5.2), if we define the order monotone
\begin{equation}
	M_a (x) := 
	\begin{cases}
	1 & (\text{when $a \preceq x$}); \\
	0 &(\text{otherwise})
	\end{cases}
	\quad
	(x \in S) 
	\notag
\end{equation}
for each $a\in S ,$
then the family $\{ M_a \}_{a \in S}$ characterizes $(S ,\preceq ) .$
Hence $\dimordR (S , \ordpreceq)$ is well-defined and at most $|S| .$
The well-definedness of the order dimension is proved in \cite{Dushnik1941} (Theorem~2.32) by using the Szpilrajn extension theorem \cite{SzpilrajnSurLD}.
We also note that, since the cardinals are well-ordered~\cite{halmos1960naive}, we can always take a family $\mathcal{F}$ of monotones that characterizes $\preceq$ and the cardinality $|\mathcal{F}|$ is minimum, i.e.\ $|\F | = \dimordR (S , \ordpreceq) .$
A similar statement for the order dimension is also true.
We will prove in Lemma~\ref{lemm:leq} that the order monotone dimension is always greater than or equal to the order dimension.

\subsection{General probabilistic theory} \label{subsec:gpt}
A general probabilistic theory (GPT) with the no-restriction hypothesis~\cite{PhysRevA.87.052131} is mathematically described by the following notion of the base-norm Banach space.

\begin{defi} \label{def:gpt}
A triple $(V, V_+ , \Omega)$ is called a \textit{base-norm Banach space} if the following conditions hold.
\begin{enumerate}
\item
$V$ is a real vector space.
\item
$V_+$ is a positive cone of $V$, i.e.\ $\lambda V_+ \subseteq V_+$ $(\forall \lambda \in [0,\infty))$, $V_+ + V_+ \subseteq V_+$, and $V_+ \cap (-V_+) = \{ 0\}$ hold.
We define the linear order on $V$ induced from $V_+$ by
\begin{equation*}
	x \leq y : \defarrow y-x \in V_+ \quad (x,y \in V) .
\end{equation*}
\item
$V_+$ is generating, i.e.\ $V = V_+ + (-V_+)$.
\item
$\Omega$ is a base of $V_+$, i.e.\ $\Omega$ is a convex subset of $V_+$ and for every $x \in V_+$ there exists a unique $\lambda \in [0,\infty )$ such that $x \in \lambda \Omega $.
\item
We define the base-norm on $V$ by
\begin{equation*}
	\| x \| := \inf \set{\alpha + \beta | x = \alpha \omega_1 + \beta \omega_2 ; \, \alpha , \beta \in [0,\infty ); \, \omega_1 , \omega_2 \in \Omega}  
	\quad (x \in V).
\end{equation*}
We require that the base-norm $\| \cdot \|$ is a complete norm on $V$.
\end{enumerate}
If these conditions are satisfied, $\Omega$ is called a \textit{state space} and each element $\omega \in \Omega$ is called a \textit{state}.
\end{defi}

The reader can find in \cite{ddd.uab.cat:187745} (Chapter~1) how the notion of the base-norm Banach space is derived from operationally natural requirements on the GPT.

Let $(V , V_+ , \Omega)$ be a base-norm Banach space.
We denote by $V^\ast$ the continuous dual of $V$ (i.e.\ the set of norm-continuous real linear functionals on $V$) equipped with the dual norm
\begin{equation*}
	\| f\| := \sup_{x \in V  , \, \| x\| \leq 1} |f(x)| \quad (f\in V^\ast) .
\end{equation*}
For each $f \in V^\ast$ and each $x \in V$ we occasionally write as $\braket{f,x} := f(x)$.
The dual positive cone $V^\ast_+$ of $V^\ast$ is defined by
\begin{equation*}
	V^\ast_+ := \set{f \in V^\ast | \braket{f,x} \geq 0 \, (\forall x \in V_+)}
\end{equation*}
and the dual linear order by
\begin{equation*}
	f \leq g : \defarrow g - f \in V^\ast_+ \Leftrightarrow [f(x) \leq g(x) \quad (\forall x \in V_+)] \qquad (f,g \in V^\ast).
\end{equation*}
It can be shown that there exists a unique positive element, called the unit element, $u_\Omega \in V^\ast_+$ such that $\braket{u_\Omega,\Omega} = 1.$
Then the dual norm on $V^\ast$ coincides with the order unit norm of $u_\Omega$:
\begin{equation*}
	\| f \| = 
	\inf \set{\lambda \in [0,\infty) | - \lambda u \leq f \leq \lambda u}
	\quad (f \in V^\ast).
\end{equation*}
An element $e \in V^\ast$ satisfying $0 \leq e \leq u_\Omega$ is called an \textit{effect} (on $\Omega$).
The set of effects on $\Omega$ is denoted by $\EO$.

In the main part of the paper, we will consider the following examples of quantum and classical theories.

\begin{exam}[Quantum theory] \label{ex:quantum}
Let $\cH$ be a complex Hilbert space.
We write the inner product of $\cH$ as $\braket{ \cdot | \cdot }$ which is antilinear and linear in the first and second components, respectively, and the complete norm as $\| \psi \| := \braket{\psi | \psi}^{1/2} .$
The sets of bounded and trace-class linear operators on $\cH$ are denoted by $\BH$ and $\TcH ,$ respectively.
The self-adjoint and positive parts of these sets are defined by
\begin{gather*}
	\BsaH := \set{a \in \BH | a = a^\ast} , \\
	\BpH := \set{a \in \BH | \braket{\psi | a \psi} \geq 0 \, (\forall \psi \in \cH)}, \\
	\TsaH := \set{a \in \TcH | a = a^\ast}, \\
	\TpH := \set{a \in \TcH | \braket{\psi | a \psi} \geq 0 \, (\forall \psi \in \cH)} ,
\end{gather*}
where $a^\ast$ denotes the adjoint operator of $a \in \BH$.
The uniform and the trace norms are respectively defined by
\begin{gather*}
	\| a \| := \sup_{\psi \in \cH , \, \| \psi \| \leq 1} \| a \psi \|  
	\quad (a \in \BH) ,
	\\
	\| b \|_1 : = \tr (\sqrt{b^\ast b}) \quad (b \in \TcH),
\end{gather*}
where $\tr(\cdot)$ denotes the trace.
A non-negative trace-class operator $\rho$ satisfying the normalization condition $\tr (\rho) =1$ is called a density operator.
The set of density operators on $\cH$ is denoted by $\DeH$.

In the GPT framework in Definition~\ref{def:gpt}, the quantum theory corresponds to the case
$(V,V_+ , \Omega) = (\TsaH , \TpH , \DeH)$.
The base-norm on $\TsaH$ then coincides with the trace norm.
The continuous dual space $\TsaH^\ast$, the dual norm on it, the dual positive cone $\TsaH^\ast_+$, and the unit element $u_{\DeH}$ are respectively identified with $\BsaH$, the uniform norm, $\BpH$, and the identity operator $\unit_{\cH}$ on $\cH$ by the duality
\begin{equation*}
	\braket	{a , b} = \tr (ab)  \quad (a \in \BsaH ; \, b \in \TsaH).
\end{equation*}
By this duality, we identify $\TsaH^\ast$ with $\BsaH$.
\end{exam}

\begin{exam}[Discrete classical theory] \label{ex:classical}
Let $X$ be a non-empty set.
We define $\ell^1(X)$ and $\ell^\infty(X)$ and their positive parts by 
\begin{gather*}
	\ell^1(X) := \set{f = (f(x))_{x\in X} \in \realn^X | \| f\|_1 < \infty}, \\
	\ell^\infty(X) := \set{f = (f(x))_{x\in X} \in \realn^X | \| f\|_\infty < \infty}, \\
	\ell^1_+(X) := \set{(f(x))_{x\in X} \in \ell^1(X) | f(x) \geq 0 \, (\forall x \in X)}, \\
	\ell^\infty_+(X) := \set{(f(x))_{x\in X} \in \ell^\infty (X) | f(x) \geq 0 \, (\forall x \in X)},
\end{gather*}
where for $f \in \realn^X$
\begin{gather*}
	\|f \|_1 := \sum_{x\in X} | f(x) |, \\
	\| f \|_\infty := \sup_{x\in X} |f(x)| 
\end{gather*}
are respectively the $\ell^1$- and the $\ell^\infty$-norms.
We also define the simplex of the probability distributions on $X$ by
\begin{equation*}
	\mathcal{P} (X) := \Set{(p (x))_{x\in X} \in \ell^1_+(X) | \sum_{x\in X} p(x) =1 } .
\end{equation*}

In the framework of the GPT in Definition~\ref{def:gpt}, a discrete classical theory corresponds to the case $(V,V_+, \Omega) = (\ell^1 (X) , \ell^1_+(X), \mathcal{P}(X))$.
The base-norm on $\ell^1(X)$ coincides with the $\ell^1$-norm $\| \cdot \|_1$.
The continuous dual $\ell^1(X)^\ast$, the dual norm on $\ell^1(X)^\ast$, the dual positive cone $\ell^1(X)^\ast_+$, and the unit element $u_{\mathcal{P}(X)}$ are respectively identified with $\ell^\infty(X)$, the $\ell^\infty$-norm $\| \cdot \|_\infty$, $\ell^\infty_+(X)$, and the constant function
$
	1_X := (1)_{x\in X}
$
by the duality
\begin{equation*}
	\braket{f,g} = \sum_{x\in X} f(x) g(x) \quad (f\in \ell^\infty (X) ; \, g \in \ell^1(X)) .
\end{equation*}
\end{exam}

\subsection{Post-processing relation of measurements on a GPT} \label{subsec:gptpost}
Now we fix a base-norm Banach space $(V, V_+ , \Omega)$ corresponding to a GPT.

For a natural number $m \in \natn$, a finite sequence $(\oM(k))_{k =0}^{m-1} \in (V^\ast)^m$ is called an \textit{($m$-outcome) effect-valued measure} (EVM) (on $\Omega$) if $\oM(k) \geq 0$ $(k\in \Nm)$ and $\sum_{k =0}^{m-1}\oM(k) = u_\Omega $ hold.
We denote by $\evm_m (\Omega) $ the set of $m$-outcome EVMs.
We also define $\evmfO := \bigcup_{m \in \natn} \evm_m (\Omega)$, which is the set of finite-outcome EVMs on $\Omega$.
For each EVM $\oM = (\oM(k))_{k =0}^{m-1}$ and each state $\omega \in \Omega$, the sequence $(p^\oM_\omega (k))_{k =0}^{m-1}$ defined by 
\begin{equation*}
	p^\oM_\omega (k) := \braket{\oM(k) , \omega} \quad (k \in \Nm)
\end{equation*}
is a probability distribution.
In the physical context, $p^\oM_\omega$ is the outcome probability distribution of the measurement $\oM$ when the state of the system is prepared to be $\omega$.

Let $\oM = \oMseq$ and $\oN = \oNseq$ be EVMs on $\Omega$.
$\oM$ is said to be a \textit{post-processing} of (or, \textit{less or equally informative} than) $\oN$ \cite{Martens1990,Dorofeev1997349,buscemi2005clean,jencova2008,10.1063/1.4934235,10.1063/1.4961516}, written as $\oM \pp \oN $, if there exists a matrix $(p(j|k))_{j \in \Nm , \, k \in \Nn}$ such that 
\begin{gather}
	p(j|k) \geq 0  \quad (j \in \Nm , \, k \in \Nn) , 
	\label{eq:MK1} 
	\\
	\sum_{j=0}^{m-1} p(j|k) =1 \quad ( k \in \Nn) ,
	\label{eq:MK2}
	\\
	\oM(j) = \sum_{k=0}^{n-1} p(j|k) \oN(k) \quad (j \in \Nm) .
	\notag
\end{gather}
A matrix $(p(j|k))_{j \in \Nm , \, k \in \Nn}$ satisfying \eqref{eq:MK1} and \eqref{eq:MK2} is called a \textit{Markov matrix}.
The relation $\oM \pp \oN $ means that the measurement $\oM$ is realized if we first perform $\oN ,$ which gives a measurement outcome $k \in \Nn$, then randomly generate $j \in \Nm$ according to the probability distribution $(p(j|k))_{ j \in \Nm }$, forget the original measurement outcome $k$, and finally record $j$ as the measurement outcome.
We also say that $\oM$ and $\oN$ are \textit{post-processing equivalent} (or \textit{equally informative}), written as $\oM \ppeq \oN $, if both $\oM \pp \oN$ and $\oN \pp \oM$ hold.
The binary relations $\pp$ and $\ppeq$ are respectively preorder and equivalence relations on $\evmfO .$

We write as $\MfinO := \evmfO / \mathord\ppeq $  
and, for each $\oM \in \evmfO ,$ denote by $[\oM]$ the equivalence class to which $\oM$ belongs.
We define the binary relation $\pp$ on $\MfinO$ by
\begin{equation*}
	[\oM] \pp [\oN] :\defarrow \oM \pp \oN \quad ([\oM] , [\oN] \in \MfinO) .
\end{equation*}
Then $(\MfinO , \ordpp )$ is a poset.

An EVM $\oM = (\oM(k))_{k =0}^{m-1}$ is called \textit{trivial} if each component $\oM(k)$ is proportional to the unit $u_\Omega .$
The equivalence class $[\oM]$ of a trivial EVM $\oM$ is the minimum element of the poset $(\MfinO , \pp)$, i.e.\ $[\oM] \pp [\oN]$ for all $[\oN] \in \MfinO$.

The post-processing relation $\pp$ on $\MfinO$ is characterized by the state discrimination probabilities defined as follows.
A finite sequence $\E = (\rho_k)_{k=0}^{N-1} \in V_+^N$ of non-negative elements in $V$ is called an \textit{ensemble} (on $\Omega$) if the normalization condition $\sum_{k=0}^{N-1} \braket{u_\Omega , \rho_k}=1$ holds.
For an ensemble $\E = (\rho_k)_{k=0}^{N-1}$ and an EVM $\oM = \oMseq$ on $\Omega$, 
we define the \textit{state discrimination probability} by
\begin{align}
	\Pg (\E ; \oM) 
	&:= \sup_{\text{$(p(k|j))_{k \in \NN, \, j \in \Nm}$: Markov matrix}}
	\sum_{k \in \NN , \, j \in \Nm} p(k|j) \braket{ \oM (j) ,\rho_k} 
	\label{eq:Pgdef}
	\\
	&= \sup_{\oA \in \evm (N; \oM)} \sum_{k \in \NN} \braket{\oA(k), \rho_k}, \notag
\end{align}
where
\begin{equation}
	\evm (N ; \oM)
	:= \set{\oA = \left( \oA (k) ) \right)_{k=0}^{N-1} \in \evm_N (\Omega) | \oA \pp \oM }  ,
	\notag
\end{equation}
is the set of $N$-outcome EVMs obtained by post-processing $\oM .$
The ensemble $\E = (\rho_k)_{k=0}^{N-1}$ corresponds to the situation in which the system is prepared in the state $\braket{u_\Omega , \rho_k}^{-1}\rho_k$ with the probability $\braket{u_\Omega , \rho_k}.$
The state discrimination probability \eqref{eq:Pgdef} is the optimum average probability of the correct guessing of the index $k$ of the ensemble when we are given the measurement outcome $j$ of $\oM .$
The maximum of the optimization problem in the RHS of \eqref{eq:Pgdef} is attained when $p(k|j) = \delta_{k, k(j)} ,$ where $k(j)$ is chosen so that $k(j) \in \arg \max_{k \in \NN} \braket{\oM(j), \rho_k}$, i.e.\ when the maximum likelihood estimation is adopted. 
The optimal value is then given by
\begin{equation}
	\Pg (\E ; \oM) = \sum_{j \in \Nm} \max_{k \in \NN} \braket{\oM(j) , \rho_k} 
	\label{eq:mle}
\end{equation}
(\cite{kuramochi2020compact}, Lemma~5).

The following Blackwell-Sherman-Stein (BSS) theorem for EVMs~\cite{buscemi2016degradable,PhysRevLett.122.140403,Guff_2021,kuramochi2020compact} can be regarded as the generalization of the corresponding BSS theorem for statistical experiments known in mathematical statistics~\cite{lecam1986asymptotic,torgersen1991comparison}.
\begin{prop}[BSS theorem for EVMs] \label{prop:bss}
Let $(V, V_+ , \Omega)$ be a base-norm Banach space and let $\oM$ and $\oN$ be finite-outcome EVMs on $\Omega$.
Then $\oM \pp \oN$ if and only if $\Pg (\E ; \oM) \leq \Pg (\E; \oN)$ for all ensemble $\E$ on $\Omega$.
\end{prop}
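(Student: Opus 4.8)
The plan is to prove the two implications of the equivalence separately. The ``only if'' direction is immediate from the maximum-likelihood formula \eqref{eq:mle}. Suppose $\oM = \oMseq \pp \oN = \oNseq$, so that $\oM(j) = \sum_k p(j|k)\oN(k)$ for some Markov matrix $(p(j|k))_{j\in\Nm,\,k\in\Nn}$. For an arbitrary ensemble $\E = (\rho_i)_{i=0}^{N-1}$, substituting this into \eqref{eq:mle} and using the elementary bound $\max_i\sum_k c_k a_{i,k} \leq \sum_k c_k \max_i a_{i,k}$ (valid for $c_k\geq 0$),
\begin{align*}
	\Pg(\E;\oM) &= \sum_j\max_i\sum_k p(j|k)\braket{\oN(k),\rho_i} \\
	&\leq \sum_j\sum_k p(j|k)\max_i\braket{\oN(k),\rho_i} = \sum_k\max_i\braket{\oN(k),\rho_i} = \Pg(\E;\oN) ,
\end{align*}
where the last two equalities use $\sum_j p(j|k) = 1$ and \eqref{eq:mle} again. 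The substance of the proposition is the converse, which I would prove by contraposition using Hahn--Banach separation.

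Assume $\Pg(\E;\oM)\leq\Pg(\E;\oN)$ for every ensemble $\E$, but $\oM \not\pp \oN$, where $\oM$ and $\oN$ have $m$ and $n$ outcomes, respectively. Let $C_\oN \subseteq (V^\ast)^m$ be the set of all tuples $(\sum_k p(j|k)\oN(k))_{j\in\Nm}$ arising from Markov matrices $(p(j|k))_{j\in\Nm,\,k\in\Nn}$ --- that is, the set of $m$-outcome post-processings of $\oN$, so that $\oM\not\pp\oN$ means exactly $\oM\notin C_\oN$ once we identify $\oM$ with $(\oM(j))_{j\in\Nm}$. Now $C_\oN$ is convex, and it is the image of the compact convex set of all Markov matrices $(p(j|k))_{j\in\Nm,\,k\in\Nn}$ under a linear map which is continuous into the product weak-$\ast$ topology of $(V^\ast)^m$ (because for each fixed $x\in V$ the number $\sum_k p(j|k)\braket{\oN(k),x}$ depends linearly, hence continuously, on the entries of $p$); therefore $C_\oN$ is weak-$\ast$ compact, in particular weak-$\ast$ closed. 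Since $(V^\ast)^m$ with the weak-$\ast$ topology is a locally convex Hausdorff space and $\oM\notin C_\oN$, the Hahn--Banach strict separation theorem provides a weak-$\ast$ continuous linear functional, which is necessarily of the form $L\mapsto\sum_j\braket{L(j),x_j}$ for some $x_0,\dots,x_{m-1}\in V$, strictly separating $\oM$ from $C_\oN$. Since the columns $(p(j|k))_j$ of a Markov matrix range independently over all probability vectors on $\Nm$, this separation reads
\[
	\sum_j\braket{\oM(j),x_j} \;>\; \sup_{L\in C_\oN}\sum_j\braket{L(j),x_j} \;=\; \sum_k\max_{j\in\Nm}\braket{\oN(k),x_j} .
\]

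It remains to turn $x_0,\dots,x_{m-1}$ into an ensemble. Since $V_+$ is generating I may write $x_j = a_j - b_j$ with $a_j,b_j\in V_+$ and set $y := \sum_j b_j\in V_+$, so that $x_j+y\in V_+$ for all $j$. Replacing each $x_j$ by $x_j+y$ shifts \emph{both} sides of the last display by the same amount $\braket{u_\Omega,y}$: on the left because $\sum_j\oM(j)=u_\Omega$, and on the right because $\sum_k\oN(k)=u_\Omega$ and $\braket{\oN(k),y}$ is independent of $j$ inside the maximum. Hence the strict inequality persists and I may assume $x_j\geq 0$ for all $j$. Set $s:=\sum_j\braket{u_\Omega,x_j}\geq 0$; if $s=0$ then, $\Omega$ being a base of $V_+$ (so that $x\in V_+$ and $\braket{u_\Omega,x}=0$ force $x=0$), every $x_j=0$ and the display becomes $0>0$, absurd. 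So $s>0$, and dividing every $x_j$ by $s$ (which scales both sides by $1/s$) I obtain an ensemble $\E:=(x_j)_{j=0}^{m-1}$ still satisfying $\sum_j\braket{\oM(j),x_j}>\sum_k\max_j\braket{\oN(k),x_j}$. For this $\E$, the guessing strategy $p(k|j)=\delta_{kj}$ in \eqref{eq:Pgdef} gives $\Pg(\E;\oM)\geq\sum_j\braket{\oM(j),x_j}$, while $\Pg(\E;\oN)=\sum_k\max_j\braket{\oN(k),x_j}$ by \eqref{eq:mle}; together with the separation inequality these yield $\Pg(\E;\oM)>\Pg(\E;\oN)$, contradicting the hypothesis. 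Therefore $\oM\pp\oN$, completing the proof.

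I expect the main obstacle to be the infinite-dimensional separation step: one must verify that $C_\oN$ is weak-$\ast$ closed, so that Hahn--Banach strict separation applies --- this works because $C_\oN$ is a continuous image of a \emph{compact} set of Markov matrices --- and then normalize the separating vectors $x_j$ into a legitimate ensemble without destroying the strict inequality. The two observations that make the normalization succeed are that a common translation $x_j\mapsto x_j+y$ moves both sides of the inequality by the same amount, and that $\sup_{L\in C_\oN}\sum_j\braket{L(j),x_j}$ has the explicit closed form $\sum_k\max_j\braket{\oN(k),x_j}$.
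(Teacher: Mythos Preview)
The paper does not prove Proposition~\ref{prop:bss} itself; it merely cites \cite{kuramochi2020compact} (Theorem~1) for the proof. Your argument is therefore not being compared against an in-paper proof but stands on its own, and it is correct.

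Your approach---direct verification of the easy direction via \eqref{eq:mle}, then Hahn--Banach strict separation in the weak-$\ast$ topology for the converse---is a standard and clean way to prove BSS-type theorems. The key technical points are all handled properly: (i) $C_\oN$ is weak-$\ast$ compact as the continuous linear image of the finite-dimensional compact set of Markov matrices, so strict separation applies; (ii) weak-$\ast$ continuous functionals on $(V^\ast)^m$ are indeed of the stated form because the $\sigma(V^\ast,V)$-dual of $V^\ast$ is $V$; (iii) the explicit evaluation $\sup_{L\in C_\oN}\sum_j\braket{L(j),x_j}=\sum_k\max_j\braket{\oN(k),x_j}$ follows from the column-wise independence of Markov matrices; and (iv) the translation $x_j\mapsto x_j+y$ and the subsequent scaling preserve the strict inequality exactly as you argue, using $\sum_j\oM(j)=\sum_k\oN(k)=u_\Omega$ and the base property to rule out $s=0$. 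The final comparison $\Pg(\E;\oM)\geq\sum_j\braket{\oM(j),x_j}>\Pg(\E;\oN)$ is correct since the ensemble has the same number of elements $m$ as $\oM$, making the identity Markov matrix a legitimate guessing strategy in \eqref{eq:Pgdef}.
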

The BSS theorem for continuous EVMs on a possibly infinite-dimensional GPT is proved in \cite{kuramochi2020compact}.
We give a straightforward proof of Proposition~\ref{prop:bss} in \ref{app:bssevm}.
Proposition~\ref{prop:bss} implies that the functions
\begin{equation}
	\MfinH  \ni [\oM]  \mapsto \Pg (\E ; \oM) =: \Pg (\E ; [\oM]) \in \realn
	\quad 
	(\text{$\E$: ensemble})
	\notag 
\end{equation}
are well-defined and the family 
\[
\set{\Pg (\E ; \cdot) \mid \text{$\E$ is an ensemble on $\Omega$}}
\] 
characterizes the poset $(\MfinO , \ordpp) .$

\subsection{Post-processing relation of quantum channels} \label{subsec:cp}
Let $\cH$ and $\cK$ be complex Hilbert spaces and let $\Gamma \colon \BK \to \BH$ be a complex linear map.
We have the following definitions.
\begin{itemize}
\item
$\Gamma$ is \textit{unital} $:\defarrow$ $\Gamma (\unit_{\cK}) = \unit_{\cH}$.
\item
$\Gamma$ is \textit{positive} $:\defarrow$ $a \geq 0$ implies $\Gamma (a) \geq 0$ for every $a \in \BK$.
\item
$\Gamma$ is \textit{completely positive} (\textit{CP}) \cite{1955stinespring,paulsen_2003,takesakivol1} $: \defarrow$ the product linear map $\Gamma \otimes \id_n \colon \mathbf{B}(\cK \otimes \cmplx^n) \to \mathbf{B}(\cH \otimes \cmplx^n)$ is positive for all $n \in \natn ,$ where $\id_n \colon \BCn \to \BCn$ is the identity map on $\BCn .$
\item
$\Gamma$ is a \textit{channel} (in the Heisenberg picture) $:\defarrow$ $\Gamma$ is unital and CP.
\item
For positive $\Gamma ,$ $\Gamma$ is normal $:\defarrow$ $\sup_i \Gamma (a_i) = \Gamma (\sup_i a_i)$ for every upper bounded increasing net $(a_i)$ in $\BK .$
This condition is equivalent to the ultraweak continuity of $\Gamma $, which means that $b_j \uwarrow b$ implies $\Gamma (b_j) \uwarrow \Gamma(b)$ for every net $(b_j)$ and every element $b$ in $\BK $, where $\uwarrow$ denotes the ultraweak (or $\sigma$-weak~\cite{takesakivol1}) convergence.
\end{itemize}
For a channel $\Gamma \colon \BK \to \BH ,$ the Hilbert spaces $\cH$ and $\cK$ are called respectively the input and output Hilbert spaces of $\Gamma$. 

If $\Gamma \colon \BK \to \BH$ is positive and normal, then there exists the unique positive linear map $\Gamma_\ast \colon \TcH \to \TcK ,$ called the predual map of $\Gamma$, such that
\begin{equation}
	\tr (\rho \Gamma (a)) = \tr (\Gamma_\ast (\rho) a) 
	\quad (\rho \in \TcH ; a \in \BK) .
	\label{eq:predual}
\end{equation}
Conversely if $\Gamma_\ast \colon \TcH \to \TcK$ is a positive linear map then there exists unique normal positive linear map $\Gamma$ satisfying \eqref{eq:predual}.
The predual of a normal channel describes the state change of the system (channel in the Schr\"odinger picture).

Let $\Gamma \colon \BK \to \BH$ and $\Lambda \colon \BJ \to \BH$ be normal channels with the same input Hilbert space $\cH .$
We define the post-processing relations for channels as follows.
\begin{itemize}
\item
$\Gamma \CPpp \Lambda$ ($\Gamma$ is less or equally informative than $\Lambda$) $:\defarrow$ there exists a normal channel $\Psi \colon \BK \to \BJ$ such that $\Gamma = \Lambda \circ \Psi .$
It is known that $\Gamma \CPpp \Lambda$ holds if and only if there exists a (not necessarily normal) channel $\Phi \colon \BK \to \BJ$ such that $\Gamma = \Lambda \circ \Phi $
(\cite{gutajencova2007}, Lemma~3.12; \cite{kuramochi2018incomp}, Theorem~2).
\item
$\Gamma \CPeq \Lambda$ ($\Gamma$ and $\Lambda$ are equally informative) 
$:\defarrow$ $\Gamma \CPpp \Lambda$ and $\Lambda \CPpp \Gamma .$
\end{itemize}
The binary relations $\mathord\CPpp$ and $\mathord\CPeq$ are respectively a preorder and an equivalence relation on the class $\bCh (\to \BH)$ of normal channels with a fixed input Hilbert space $\cH$. 
It can be shown that there exist a set $\fCH$ and a class-to-set surjection
\begin{equation}
	\bCh (\to \BH) \ni \Gamma \mapsto [\Gamma] \in \fCH
	\label{eq:Chmap}
\end{equation}
such that 
\[ \Gamma \CPeq \Lambda \iff [\Gamma] = [ \Lambda ] \] 
for every $\Gamma , \Lambda \in \bCh (\to \BH) . $
This follows from a more general result for normal channels with arbitrary input and output von Neumann algebras (\cite{kuramochi2020directed}, Section~3.3).
We fix such a set $\fCH$ and a map \eqref{eq:Chmap}.
We also define the post-processing relation on $\fCH$ by 
\begin{equation*}
	[\Gamma] \CPpp [\Lambda] :\defarrow \Gamma \CPpp \Lambda 
	\quad ([\Gamma ] , [\Lambda ] \in \fCH ) .
\end{equation*}
Then $(\fCH , \ordCPpp)$ is a poset.


Let $\cH$ be a complex Hilbert space, let $\E = (\rho_k)_{k=0}^{n-1}$ be an ensemble on $\DeH$, and let $\Gamma \colon \BK \to \BH $ a channel (or more generally, a unital positive map).
We define the state discrimination probability by 
\begin{equation}
	\Pg (\E ; \Gamma)
	:= \sup_{(\oM(k))_{k=0}^{n-1} \in \evm_n (\DeK)} 
	\sum_{k=0}^{n-1} \braket{\Gamma (\oM(k)), \rho_k} .
	\label{eq:Pgch}
\end{equation}
This quantity is the maximal state discrimination probability of the index $k$ of the ensemble $\E$ when the operation $\Gamma$ is performed on the system whose state is prepared according to $\E$, and then an optimal measurement $\oM$ on the output space $\cK $ is performed. 

The post-processing relation for channels is characterized by the state discrimination probabilities \textit{with quantum side information} as shown in the following BSS-type theorem.

\begin{prop}[BSS theorem for channels] \label{prop:qbss}
Let $\Gamma \colon \BK \to \BH$ and $\Lambda \colon \BJ \to \BH$ be normal channels.
Then the following conditions are equivalent.
\begin{enumerate}[(i)]
\item \label{it:qbss1}
$\Gamma \CPpp \Lambda .$
\item \label{it:qbss2}
For every $n \in \natn$ and every ensemble $\E$ on $\mathbf{D} (\cH \otimes \cmplx^n)$
\begin{equation}
	\Pg ( \E ; \Gamma \otimes \id_n) \leq \Pg (\E  ;\Lambda \otimes \id_n)  
	\label{eq:Pgleq}
\end{equation}
holds.
\end{enumerate}
\end{prop}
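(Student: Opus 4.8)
I would prove (\ref{it:qbss1})$\Rightarrow$(\ref{it:qbss2}) by composition and (\ref{it:qbss2})$\Rightarrow$(\ref{it:qbss1}) by contraposition via a Hahn--Banach separation. For the first implication: if $\Gamma=\Lambda\circ\Psi$ with $\Psi\colon\BK\to\BJ$ a normal channel, then $\Gamma\otimes\id_n=(\Lambda\otimes\id_n)\circ(\Psi\otimes\id_n)$ with $\Psi\otimes\id_n$ again a normal channel, so for every ensemble $\E=(\rho_k)_k$ on $\mathbf{D}(\cH\otimes\cmplx^n)$ and every $\oM\in\evm_n(\mathbf{D}(\cK\otimes\cmplx^n))$ the sequence $(\Psi\otimes\id_n)\circ\oM$ is an EVM on $\mathbf{D}(\cJ\otimes\cmplx^n)$ with $\sum_k\braket{(\Gamma\otimes\id_n)(\oM(k)),\rho_k}=\sum_k\braket{(\Lambda\otimes\id_n)((\Psi\otimes\id_n)(\oM(k))),\rho_k}\le\Pg(\E;\Lambda\otimes\id_n)$; taking the supremum over $\oM$ gives \eqref{eq:Pgleq}.

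For the converse, by the cited equivalence (\cite{gutajencova2007}, Lemma~3.12; \cite{kuramochi2018incomp}, Theorem~2) it suffices to place $\Gamma$ in the convex set $\mathcal{C}:=\{\Lambda\circ\Phi\mid\Phi\colon\BK\to\BJ\ \text{a unital CP map}\}$. I would work in the real vector space of bounded linear maps $\BsaK\to\BsaH$ with the point-ultraweak topology, generated by the seminorms $\Theta\mapsto|\tr(\sigma\Theta(b))|$ ($\sigma\in\TsaH$, $b\in\BsaK$). In this topology $\mathcal{C}$ is compact: the set of unital CP maps $\BK\to\BJ$ is point-ultraweakly closed inside a product of norm balls, hence compact by Tychonoff, and $\Phi\mapsto\Lambda\circ\Phi$ is continuous because the normal map $\Lambda$ is ultraweakly continuous on bounded sets. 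If $\Gamma\notin\mathcal{C}$, Hahn--Banach separation produces a continuous real linear functional $F$ with $F(\Gamma)>\sup_{\Theta\in\mathcal{C}}F(\Theta)$, and any such $F$ has the form $F(\Theta)=\sum_{l=1}^{L}\tr(\sigma_l\Theta(b_l))$ with $\sigma_l\in\TsaH$ and $b_l\in\BsaK$.

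It then remains to turn this separation into a violation of \eqref{eq:Pgleq}. The idea is to encode the data $\{(\sigma_l,b_l)\}_l$, for a suitably large $n$, into an ensemble $\E$ on $\mathbf{D}(\cH\otimes\cmplx^n)$ together with a finite-outcome EVM $\oM$ on $\cK\otimes\cmplx^n$: one places the $b_l$ in off-diagonal blocks of an effect on $\cK\otimes\cmplx^n$ completed to $\oM$, places the $\sigma_l$ in the matching blocks of a self-adjoint trace-class operator on $\cH\otimes\cmplx^n$, splits that operator into positive and negative parts, normalizes, and uses the unitality of the channels to absorb the negative parts; this yields $\sum_k\braket{(\Theta\otimes\id_n)(\oM(k)),\rho_k}=\alpha F(\Theta)+\beta$ for fixed $\alpha>0$, $\beta\in\realn$ and all channels $\Theta$. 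Then $\Pg(\E;\Gamma\otimes\id_n)\ge\alpha F(\Gamma)+\beta>\alpha\sup_{\mathcal{C}}F+\beta$, and --- \emph{provided} the encoding can be arranged so that every output measurement on $\cJ\otimes\cmplx^n$ that is optimal for $\E$ has the form $(\Phi\otimes\id_n)\circ\oM$ for some channel $\Phi\colon\BK\to\BJ$ --- one also has $\Pg(\E;\Lambda\otimes\id_n)=\alpha\sup_{\mathcal{C}}F+\beta$, contradicting \eqref{eq:Pgleq}; hence $\Gamma\in\mathcal{C}$, i.e.\ $\Gamma\CPpp\Lambda$.

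The hard part will be exactly this last italicized provision. A \emph{signed} Hahn--Banach functional separates $\Gamma$ from $\mathcal{C}$, but state-discrimination probabilities see only the ``positive'' part of the dual, so one has to show that any such separation can be realized by a genuine ensemble equipped with quantum side information --- equivalently, that the fixed effect in the encoding is rich enough that composing it with output channels recovers all the measurements relevant to $\E$. This is the real content of the Blackwell--Sherman--Stein phenomenon, and it is where the ancillary factor $\cmplx^n$ is indispensable (restricting to $n=1$ yields a strictly weaker relation). I would carry it out by passing to Choi operators, so that $\Gamma\CPpp\Lambda$ becomes a relation between bipartite states under channels acting on the output system, for which the required comparison theorem is available (Buscemi's comparison theorem for quantum statistical models, and its extension to quantum channels), and then treat the infinite-dimensional case by truncating to finite rank and using the compactness of $\mathcal{C}$; alternatively one may invoke the author's related results on post-processing of channels between von Neumann algebras (\cite{kuramochi2020directed}).
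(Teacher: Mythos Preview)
Your argument for \eqref{it:qbss1}$\Rightarrow$\eqref{it:qbss2} is correct and coincides with the paper's.

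For \eqref{it:qbss2}$\Rightarrow$\eqref{it:qbss1} the Hahn--Banach route has a genuine gap exactly where you flag it. The separating functional $F$ gives $F(\Gamma)>\sup_{\mathcal C}F$, and your encoding yields the lower bound $\Pg(\E;\Gamma\otimes\id_n)\ge\alpha F(\Gamma)+\beta$. But $\Pg(\E;\Lambda\otimes\id_n)$ is a supremum over \emph{all} EVMs on $\cJ\otimes\cmplx^n$, not only those of the form $(\Phi\otimes\id_n)(\oM)$ for your fixed $\oM$; nothing in the separation argument bounds that larger supremum, so the equality $\Pg(\E;\Lambda\otimes\id_n)=\alpha\sup_{\mathcal C}F+\beta$ is unsupported. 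Your italicized proviso is not a technicality but the whole content of the implication --- arranging such an encoding is tantamount to proving the theorem. The fallbacks you list do not close the gap within your argument: citing ``the extension to quantum channels'' is the very statement being proved; the Choi/Shmaya route for bipartite states is a legitimate alternative (it is Chefles's finite-dimensional proof \cite{chefles2009quantum}, which the paper cites), but then the Hahn--Banach construction is superfluous and you are importing the result rather than proving it; and \cite{kuramochi2020directed} concerns the set structure of equivalence classes, not a BSS characterization.

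The paper proceeds differently and constructively. It invokes {\L}uczak's BSS theorem for normal unital positive maps \cite{luczak2019}: if $\Pg(\E;\Gamma')\le\Pg(\E;\Lambda')$ for every ensemble, then every EVM $\oM$ on the output of $\Gamma'$ is matched by an EVM $\oN$ on the output of $\Lambda'$ with $\Gamma'(\oM(j))=\Lambda'(\oN(j))$ for all $j$. This lemma involves discrimination \emph{without} side information, so citing it is not circular. Applying it to $\Gamma\otimes\id_d$ and $\Lambda\otimes\id_d$ with $d=\dim\cK$ (first assumed finite) and taking $\oM$ to be the Bell basis $(\ket{\eta_{k,m}}\bra{\eta_{k,m}})_{k,m}$ of $\cK\otimes\cmplx^d$, one obtains a matching EVM $\widetilde\oM$ on $\cJ\otimes\cmplx^d$; averaging its blocks over the Weyl shifts yields an explicit $\Phi\colon\BK\to\BJ$ with $\Gamma=\Lambda\circ\Phi$, whose unitality and complete positivity are read off from its Choi matrix. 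Infinite-dimensional $\cK$ is then handled by finite-rank truncation and point-ultraweak compactness, as you also suggest. The point is that the ancilla is used not to enrich a separating functional, but to make a \emph{single} EVM (the Bell basis) informationally complete enough that its matching counterpart already encodes the entire post-processing channel.
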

Proposition~\ref{prop:qbss} for finite-dimensional channels is proved in \cite{chefles2009quantum} by using Shmaya\rq{}s theorem \cite{Shmaya_2005}.
In \ref{app:bss}, we give a proof of Proposition~\ref{prop:qbss} based on another (infinite-dimensional) BSS theorem for normal positive maps obtained in \cite{luczak2019}.

Proposition~\ref{prop:qbss} implies that the function
\begin{equation*}
	\fCH \ni [\Gamma] \mapsto \Pg (\E ; \Gamma \otimes \id_n)
	=: \Pg^{(n)} (\E ; [\Gamma]) \in \realn
\end{equation*}
is a well-defined order monotone for every $n\in \natn $ and every ensemble $\E$ on $\mathbf{D}(\cH\otimes \cmplx^n)$ and that the family 
\[
	\{ \Pg^{(n)} (\E ; \cdot) \mid  \text{$n\in \natn$ and $\E$ is an ensemble on $\mathbf{D}(\cH\otimes \cmplx^n)$}\}
\] 
characterizes the poset $(\fCH , \ordCPpp)$.

\section{Main theorems and their proofs} \label{sec:main}
In this section we prove the following main theorems of this paper:

\begin{thm} \label{thm:main1}
Let $(V, V_+, \Omega)$ be a base-norm Banach space with $\dim V \geq 2$.
Then the following assertions hold.
\begin{enumerate}[1.]
\item \label{it:thm1.1}
Both $\dimord (\MfinO , \mathord\pp)$ and $\dimordR (\MfinO , \mathord\pp)$ are infinite.
\item \label{it:thm1.2}
If $V$ is separable in the norm topology, i.e.\ $V$ has a countable norm dense subset, then 
\begin{equation}
	\dimord (\MfinO , \mathord\pp) = \dimordR (\MfinO , \mathord\pp) = \aleph_0 ,
	\label{eq:main1}
\end{equation}
where $\aleph_0 = |\natn |$ denotes the cardinality of a countably infinite set.
\end{enumerate}
\end{thm}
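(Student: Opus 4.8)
The plan is to prove the lower bound $\dimord(\MfinO,\ordpp)=\infty$ and the upper bound $\dimordR(\MfinO,\ordpp)\le\aleph_0$ (the latter under the separability hypothesis), and to deduce both assertions from these together with Lemma~\ref{lemm:leq}. Since $\dimordR\ge\dimord$, the bound $\dimord=\infty$ forces $\dimordR=\infty$, which is Assertion~\ref{it:thm1.1}; and under separability one then has $\aleph_0\le\dimord\le\dimordR\le\aleph_0$, which is Assertion~\ref{it:thm1.2}.

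For the upper bound, note that for fixed $\oM$ the map $\E=(\rho_k)_k\mapsto\Pg(\E;\oM)=\sum_j\max_k\braket{\oM(j),\rho_k}$ is Lipschitz in the base norm, because $\|\oM(j)\|\le1$ and $\max$ and $\sum$ are nonexpansive. If $V$ is norm-separable, then for each $N$ the set of $N$-outcome ensembles is a separable metric subspace of $V^N$; choosing a countable dense subset for each $N$ and taking the union gives a countable set $\mathcal{D}$ of ensembles. By the Lipschitz property and density, $\Pg(\E;\oM)\le\Pg(\E;\oN)$ for all $\E\in\mathcal{D}$ implies the same inequality for all ensembles $\E$, which by the BSS theorem (Proposition~\ref{prop:bss}) is equivalent to $\oM\pp\oN$. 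Hence $\{\Pg(\E;\cdot)\}_{\E\in\mathcal{D}}$ is a countable characterizing family of order monotones and $\dimordR(\MfinO,\ordpp)\le|\mathcal{D}|=\aleph_0$.

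The lower bound is the heart of the argument, and I would first reduce it to the classical bit. Since $\dim V\ge2$ there are two distinct states, and Hahn--Banach together with a rescaling and a shift produces a non-trivial effect $e$ on $\Omega$. I would then check that sending a finite two-state experiment (a ``dichotomy'') $\oM=((p_k)_k,(q_k)_k)\in\mathfrak{M}_{\mathrm{fin}}(\mathcal{P}(\{0,1\}))$ to the EVM $\widetilde{\oM}$ on $\Omega$ defined by $\widetilde{\oM}(k):=(q_k-p_k)e+p_ku_\Omega$ is an order-embedding of posets: each $\widetilde{\oM}(k)$ is a genuine effect since $\braket{\widetilde{\oM}(k),\omega}=(1-t)p_k+tq_k$ with $t:=\braket{e,\omega}\in[0,1]$; $\sum_k\widetilde{\oM}(k)=u_\Omega$; a Markov matrix realizing $\oM\pp\oN$ realizes $\widetilde{\oM}\pp\widetilde{\oN}$ by linearity; and conversely, since $e$ and $u_\Omega$ are linearly independent, a Markov matrix realizing $\widetilde{\oM}\pp\widetilde{\oN}$ can be read off from the coefficients of $e$ and $u_\Omega$ to realize $\oM\pp\oN$. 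As order dimension is monotone under order-embeddings, it suffices to prove that $(\mathfrak{M}_{\mathrm{fin}}(\mathcal{P}(\{0,1\})),\ordpp)$ has infinite order dimension.

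This last step is where I expect the main difficulty. I would invoke the classical Blackwell comparison of dichotomies (itself a consequence of Proposition~\ref{prop:bss} applied to two-element ensembles on $\mathcal{P}(\{0,1\})$): $\oM\pp\oN$ iff the ROC curves satisfy $f_\oM\le f_\oN$ pointwise on $[0,1]$, where each $f_\oM$ is a concave, nondecreasing, piecewise-linear function $[0,1]\to[0,1]$ with $f_\oM(0)=0$, $f_\oM(1)=1$, $f_\oM\ge\id$, and conversely every such function arises. For each $n\ge3$ I would realize the standard example $S_n$ of an $n$-dimensional poset (\cite{Dushnik1941}), with ground set $\{a_1,\dots,a_n,b_1,\dots,b_n\}$ and $a_i<b_j\iff i\ne j$, as follows: pick distinct $p_1,\dots,p_n\in(0,1)$, let $f_j$ be the ``single-kink'' ROC curve through $(0,0)$, $(p_j,\sqrt{p_j})$, $(1,1)$ (a two-outcome dichotomy), and let $g_j$ be the least concave majorant of $\{f_i:i\ne j\}$, which is again a valid ROC curve. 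Since the points $(p_j,\sqrt{p_j})$ lie on the strictly concave graph of the square-root function they are in strictly convex position, so $g_j<f_j$ at the abscissa $p_j$ while $f_i\le g_j$ whenever $i\ne j$ by construction; a short verification shows $\{f_j\}$ and $\{g_j\}$ are antichains with no further comparabilities, so the induced subposet is exactly $S_n$. As $\dim S_n=n$ and order dimension cannot increase on passing to an induced subposet, $(\mathfrak{M}_{\mathrm{fin}}(\mathcal{P}(\{0,1\})),\ordpp)$, and hence $(\MfinO,\ordpp)$, has order dimension at least $n$ for every $n$, i.e.\ infinite. The delicate point is that the concavity of ROC curves defeats the naive attempts to embed $S_n$ (for instance by functions with a localized ``dip''); routing through single-kink curves and their concave hulls is precisely what makes the standard examples fit inside this constrained family.
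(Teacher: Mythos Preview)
Your proposal is correct, and its overall architecture matches the paper's: reduce the lower bound to the classical bit via an injective unital positive map $\realn^2\to V^\ast$, embed the standard $n$-dimensional example $S_n$ into classical-bit measurements for every $n\ge3$, and obtain the countable upper bound from a dense family of ensembles together with the norm-continuity of $\Pg$. Your classical-bit embedding and the separable upper bound are essentially the paper's Lemmas~\ref{lemm:cbit} and~\ref{lemm:countable}.

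The genuinely different step is the realization of $S_n$ inside $\Mfin(\mathcal{P}(\natn_2))$. The paper (Lemma~\ref{lemm:main1}) works with \emph{direct mixtures}: with $s_j=3^{j-n}$ it sets $a_j\mapsto \tfrac1n\oA_{s_j,s_j^2}\oplus\tfrac{n-1}{n}\oU$ and $b_j\mapsto\tfrac1n\oU\oplus\bigoplus_{k\ne j}\tfrac1n\oA_{s_k,s_k^2}$, and verifies the required (in)comparabilities partly via the affinity of $\Pg$ under $\oplus$ (Lemma~\ref{lemm:affinity}) and partly by a hands-on cone/parallelogram argument exploiting the geometric estimate $\sum_{l<j}3^{l-n}<\tfrac12\cdot 3^{j-n}$. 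Your route through ROC curves is cleaner for this particular step: once the post-processing order on the bit is identified with pointwise comparison of concave functions, placing single kinks on the strictly concave graph $y=\sqrt{x}$ and taking concave hulls makes all five checks one-line consequences of strict concavity, and produces $2$- and $n$-outcome EVMs rather than $3$- and $(2n{-}1)$-outcome ones. The paper's approach, in return, never leaves the $\Pg$/BSS framework and introduces the direct-mixture operation (Definition~\ref{def:mix}, Lemma~\ref{lemm:affinity}), which is of independent interest. One small caveat: your parenthetical that the Blackwell ROC characterization follows from Proposition~\ref{prop:bss} restricted to two-element ensembles is not quite a derivation, since Proposition~\ref{prop:bss} as stated uses all ensembles; you should either cite the classical Blackwell theorem for dichotomies directly, or insert the short zonotope/support-function argument showing that on a two-point state space two-element ensembles already suffice.
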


\begin{thm} \label{thm:main2}
Let $\cH$ be a separable complex Hilbert space with $\dim \cH \geq 2 .$
Then 
\begin{gather}
	\dimord ( \Mfin (\DeH) ,\mathord\pp ) = \dimordR ( \Mfin (\DeH) ,\mathord\pp ) = \aleph_0 ,
	\label{eq:main2-1} \\
	\dimord ( \fC ( \cH) ,\mathord\CPpp ) = \dimordR ( \fC (\cH) ,\mathord\CPpp ) = \aleph_0 .
	\label{eq:main2-2}
\end{gather}
\end{thm}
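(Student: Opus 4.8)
The two equalities \eqref{eq:main2-1} and \eqref{eq:main2-2} are handled separately. Equation \eqref{eq:main2-1} is a direct corollary of Theorem~\ref{thm:main1} applied to the quantum base-norm Banach space $(\TsaH , \TpH , \DeH)$ of Example~\ref{ex:quantum}: its base norm is the trace norm, which is separable when $\cH$ is separable (the finite-rank self-adjoint operators with coordinates in $\mathbb{Q}+i\mathbb{Q}$ relative to a fixed orthonormal basis form a countable norm-dense set), and $\dim \TsaH \geq 2$ since $\dim \cH \geq 2$; hence Theorem~\ref{thm:main1}.\ref{it:thm1.2} gives \eqref{eq:main2-1}. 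The remainder is devoted to \eqref{eq:main2-2}, which I would establish by separately bounding $\dimord (\fCH , \ordCPpp)$ from below and $\dimordR (\fCH , \ordCPpp)$ from above by $\aleph_0$; the conclusion then follows from $\dimord \leq \dimordR$ (Lemma~\ref{lemm:leq}).

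For the lower bound, the plan is to realize $(\mathfrak{M}_{\mathrm{fin}}(\DeH) , \ordpp)$ as a sub-poset of $(\fCH , \ordCPpp)$. To an $m$-outcome EVM $\oM = (\oM(k))_{k=0}^{m-1}$ on $\DeH$ I associate the quantum--classical channel $\Gamma_{\oM} \colon \BCm \to \BH$, $\Gamma_{\oM}(a) := \sum_{k=0}^{m-1} \braket{e_k | a | e_k}\, \oM(k)$, with $(e_k)$ the standard basis of $\cmplx^m$; this is unital and completely positive, hence a normal channel with input $\cH$. The key step is the equivalence $\oM \pp \oN \Leftrightarrow \Gamma_{\oM} \CPpp \Gamma_{\oN}$: from a Markov matrix $(p(j|k))$ with $\oM(j) = \sum_k p(j|k) \oN(k)$ one gets the classical channel $\Phi(a) := \sum_{j,k} p(j|k) \braket{e_j | a | e_j}\, E_{kk}$ (with $E_{kk}$ the diagonal matrix units of $\BCn$) satisfying $\Gamma_{\oM} = \Gamma_{\oN} \circ \Phi$, and conversely from any channel $\Phi \colon \BCm \to \BCn$ with $\Gamma_{\oM} = \Gamma_{\oN} \circ \Phi$ the numbers $p(j|k) := \braket{e_k | \Phi(E_{jj}) | e_k}$ form a Markov matrix realizing $\oM \pp \oN$. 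Hence $\oM \ppeq \oN \Leftrightarrow \Gamma_{\oM} \CPeq \Gamma_{\oN}$, so $[\oM] \mapsto [\Gamma_{\oM}]$ is a well-defined injection that is an order isomorphism onto its image. Restricting a realizing family of linear extensions (resp.\ a characterizing family of order monotones) of the ambient poset to this sub-poset again realizes (resp.\ characterizes) it, so both dimensions of $(\fCH , \ordCPpp)$ are at least those of $(\mathfrak{M}_{\mathrm{fin}}(\DeH) , \ordpp)$, which are $\aleph_0$ by \eqref{eq:main2-1} (or already infinite by Theorem~\ref{thm:main1}.\ref{it:thm1.1}).

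For the upper bound I would thin the characterizing family supplied by the BSS theorem for channels. By Proposition~\ref{prop:qbss} the family $\{ \Pg^{(n)}(\E ; \cdot) \}$, taken over all $n \in \natn$ and all finite ensembles $\E$ on $\mathbf{D}(\cH \otimes \cmplx^n)$, characterizes $(\fCH , \ordCPpp)$. For fixed $n$ and fixed outcome number $N$ the $N$-outcome ensembles on $\mathbf{D}(\cH \otimes \cmplx^n)$ form a subset of $\mathbf{T}_{\mathrm{sa}}(\cH \otimes \cmplx^n)^N$, which is separable because $\cH \otimes \cmplx^n$ is; fix a countable dense subset $D^{(n)}_N$ for the metric $d\big( (\rho_k) , (\rho'_k) \big) := \sum_k \| \rho_k - \rho'_k \|_1$. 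The crucial observation is that $\E \mapsto \Pg^{(n)}(\E ; [\Gamma])$ is $1$-Lipschitz for $d$ \emph{uniformly in} $[\Gamma]$: it is a supremum of maps $\E \mapsto \sum_k \braket{ (\Gamma \otimes \id_n)(\oM(k)) , \rho_k }$ that are affine in $\E$ with coefficients that are effects, hence of operator norm at most $1$. Consequently the countable family $\{ \Pg^{(n)}(\E ; \cdot) \mid n , N \in \natn , \ \E \in D^{(n)}_N \}$ still characterizes $(\fCH , \ordCPpp)$: if the associated inequalities hold for every member, density and the uniform Lipschitz bound propagate them to \emph{all} ensembles, and Proposition~\ref{prop:qbss} yields $\Gamma \CPpp \Lambda$. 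Thus $\dimordR (\fCH , \ordCPpp) \leq \aleph_0$, and with the lower bound and Lemma~\ref{lemm:leq} all four quantities in \eqref{eq:main2-2} equal $\aleph_0$.

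The main obstacle I anticipate is the verification that $[\oM] \mapsto [\Gamma_{\oM}]$ is an order embedding, in particular the implication $\Gamma_{\oM} \CPpp \Gamma_{\oN} \Rightarrow \oM \pp \oN$: one must reconstruct a Markov matrix from an \emph{a priori arbitrary}, not necessarily normal, channel between the finite-dimensional classical output algebras, which is precisely where the characterization of $\CPpp$ by not-necessarily-normal channels recorded in Section~\ref{subsec:cp} enters. The remaining points --- uniformity in $[\Gamma]$ of the Lipschitz estimate, and the monotonicity of $\dimord$ and $\dimordR$ under passage to sub-posets for infinite cardinals --- are routine.
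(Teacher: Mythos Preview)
Your proposal is correct and follows essentially the same route as the paper: \eqref{eq:main2-1} is derived from Theorem~\ref{thm:main1}.\ref{it:thm1.2} applied to $(\TsaH,\TpH,\DeH)$; the lower bound in \eqref{eq:main2-2} comes from the order embedding $[\oM]\mapsto[\Gamma_{\oM}]$ via quantum--classical channels (the paper's Lemma~\ref{lemm:embedding2}, which cites \cite{1751-8121-50-13-135302} for the equivalence you spell out by hand); and the upper bound comes from thinning the BSS family of Proposition~\ref{prop:qbss} to a countable dense subfamily of ensembles using the $1$-Lipschitz dependence of $\Pg^{(n)}(\cdot\,;[\Gamma])$ on the ensemble (the paper's Lemma~\ref{lemm:countable2}, which writes out the same estimate as an explicit $\epsilon$-argument). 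The only cosmetic difference is that you package the continuity step as a uniform Lipschitz bound, whereas the paper unfolds it directly.
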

\begin{rem} \label{rem:1}
The assumption $\dim V \geq 2$ in Theorem~\ref{thm:main1} holds if and only if the state space $\Omega$ contains at least two distinct points because $V$ is the linear span of $\Omega$ and $\Omega$ is contained in the hyperplane $\set{x \in V | \braket{u_\Omega , x} =1}$, which does not contain the origin.
Similarly the separability of $V$ in Theorem~\ref{thm:main1} is equivalent to that of $\Omega$.
\end{rem}

\begin{rem} \label{rem:2}
Theorem~\ref{thm:main1}.\ref{it:thm1.1} is proved by explicitly constructing a sequence of finite subsets of $\MfinO$ with arbitrarily large dimensions (see Lemmas~\ref{lemm:cbit} and \ref{lemm:main1}).
Indeed, it is proved in \cite{harzheim1970} that for every poset with an infinite order dimension we can always find such a sequence of finite subsets.
We give another simple proof of this fact in \ref{app:compact} based on Tychonoff\rq{}s theorem.
\end{rem}

\begin{rem} \label{rem:3} 
As shown in \cite{kuramochi2020compact}, we can define the set $\mathfrak{M} (\Omega)$ of post-processing equivalence classes of general (i.e.\ possibly continuous outcome) EVMs on $\Omega$.
Then we can easily show that Theorem~\ref{thm:main1}.\ref{it:thm1.1} is also valid for $\mathfrak{M} (\Omega)$ since $\MfinO$ is a subset of $\mathfrak{M} (\Omega)$, i.e.\ the finite-outcome EVM is a special kind of the general EVM.
We can also show that the proof of Theorem~\ref{thm:main1}.\ref{it:thm1.2} (Lemma~\ref{lemm:countable}) can be straightforwardly generalized to $\mathfrak{M} (\Omega)$ since the BSS theorem is also valid for general EVMs (\cite{kuramochi2020compact}, Theorem~1).
\end{rem}

\begin{rem} \label{rem:4}
A parametrized family $(P_\theta)_{\theta \in \Theta}$ of classical probabilities with a fixed sample space is called a statistical experiment, or a statistical model, and is one of the basic concepts in mathematical statistics~\cite{lecam1986asymptotic,torgersen1991comparison}.
As shown in \cite{kuramochi2020compact} (Appendix~D), we can identify the class of statistical experiments with a fixed parameter set $\Theta$ with the class of measurements (EVMs) on the input classical GPT $(\ell^1(\Theta), \ell^1_+(\Theta),\mathcal{P}(\Theta))$.
Based on this correspondence Theorem~\ref{thm:main1} straightforwardly applies to the post-processing order of statistical experiments.
\end{rem}

In the rest of this section, we prove Theorems~\ref{thm:main1}.\ref{it:thm1.1}, \ref{thm:main1}.\ref{it:thm1.2}, and \ref{thm:main2} in Sections~\ref{subsec:m1}, \ref{subsec:m1.2}, and \ref{subsec:m2}, respectively.

\subsection{Proof of Theorem~\ref{thm:main1}.\ref{it:thm1.1}} \label{subsec:m1}
The proof is split into some lemmas.
We first establish some general properties of the order dimensions necessary for the proof.

\begin{defi} \label{def:embedding}
Let $(S, \ordpreceq_1)$ and $(T, \ordpreceq_2)$ be posets.
A map $f \colon S \to T$ is called an \textit{order embedding} from $(S, \ordpreceq_1)$ into $(T, \ordpreceq_2)$ if 
$
	x \preceq_1 y 
$
if and only if $f (x) \preceq_2 f(y)$ for every $x,y \in S .$
An order embedding is necessarily an injection.
If such an order embedding exists, $(S, \ordpreceq_1)$ is said to be \textit{embeddable} into $(T, \ordpreceq_2) .$
If $(S, \ordpreceq_1)$ is embeddable into $(T, \ordpreceq_2) ,$ $(S, \ordpreceq_1)$ is order isomorphic to a subset of $T$ equipped with the restriction order of $\preceq_2 .$ 
\end{defi}

The following lemma is implicit in the literature~\cite{Dushnik1941,Hiraguti1955}, while here we give a proof for completeness.

\begin{lemm} \label{lemm:embedding}
Let $(S, \ordpreceq_1)$ and $(T, \ordpreceq_2)$ be posets.
Suppose that $(S, \ordpreceq_1)$ is embeddable into $(T, \ordpreceq_2) .$
Then $\dimord (S, \ordpreceq_1) \leq \dimord (T, \ordpreceq_2) $ and $\dimordR (S, \ordpreceq_1) \leq \dimordR (T, \ordpreceq_2) $ hold.
\end{lemm}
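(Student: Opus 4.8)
The plan is to pull back, along the order embedding $f \colon S \to T$, an optimal realizer (respectively, an optimal characterizing family of monotones) for $(T, \preceq_2)$, and to check that the pullback still does the job for $(S, \preceq_1)$ without increasing cardinality.

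For the order monotone dimension, I would fix a family $\mathcal{F}$ of order monotones on $T$ that characterizes $(T, \preceq_2)$ with $|\mathcal{F}| = \dimordR (T, \preceq_2)$, and consider $\mathcal{F} \circ f := \set{ g \circ f \mid g \in \mathcal{F}}$. Each $g \circ f$ is order monotone on $S$: if $x \preceq_1 y$ then $f(x) \preceq_2 f(y)$ by the embedding property, hence $g(f(x)) \leq g(f(y))$. Conversely, if $g(f(x)) \leq g(f(y))$ for every $g \in \mathcal{F}$, then $f(x) \preceq_2 f(y)$ since $\mathcal{F}$ characterizes $\preceq_2$, and therefore $x \preceq_1 y$, again by the embedding property. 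Thus $\mathcal{F} \circ f$ characterizes $(S, \preceq_1)$, and since $|\mathcal{F} \circ f| \leq |\mathcal{F}|$ we obtain $\dimordR (S, \preceq_1) \leq \dimordR (T, \preceq_2)$.

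For the order dimension, I would fix a family $\mathcal{L}$ of linear extensions of $\preceq_2$ that realizes $(T, \preceq_2)$ with $|\mathcal{L}| = \dimord (T, \preceq_2)$, and for each $L \in \mathcal{L}$ define its pullback $f^\ast L := \set{ (x,y) \in S \times S \mid (f(x), f(y)) \in L}$. Since an order embedding is injective, $f^\ast L$ is a total order on $S$: reflexivity, transitivity, and totality are inherited directly from $L$, and antisymmetry uses the injectivity of $f$. Moreover $f^\ast L$ extends $\preceq_1$, because $x \preceq_1 y$ implies $f(x) \preceq_2 f(y)$, which implies $(f(x), f(y)) \in L$ as $L$ extends $\preceq_2$. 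The family $\set{ f^\ast L \mid L \in \mathcal{L}}$ then realizes $(S, \preceq_1)$: the inclusion $\preceq_1 \subseteq \bigcap_{L \in \mathcal{L}} f^\ast L$ is the extension property just noted, and conversely, if $(f(x), f(y)) \in L$ for all $L \in \mathcal{L}$, then $f(x) \preceq_2 f(y)$ since $\mathcal{L}$ realizes $\preceq_2$, hence $x \preceq_1 y$. As $|\set{ f^\ast L \mid L \in \mathcal{L}}| \leq |\mathcal{L}|$, we conclude $\dimord (S, \preceq_1) \leq \dimord (T, \preceq_2)$.

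There is no serious obstacle here; the only points requiring care are that the pulled-back total orders remain antisymmetric — which is exactly where the injectivity of the order embedding is used — and that passing to an image family can only decrease cardinality, so that the minimality defining the dimension on the $T$-side transfers as an upper bound to the $S$-side. Both the monotone and the linear-extension cases follow the same template: pull back along $f$, verify the two defining properties, and bound the cardinality.
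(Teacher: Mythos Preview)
Your proposal is correct and matches the paper's own proof essentially line for line: both pull back an optimal realizing family of total orders (respectively, characterizing family of monotones) along the embedding, invoke injectivity precisely for antisymmetry of the pulled-back orders, and conclude via the obvious cardinality bound. The only cosmetic differences are the order in which you treat the two claims and your explicit verification that each $f^\ast L$ extends $\preceq_1$, which in the paper is left implicit in the realization equivalence.
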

\begin{proof}
Let $g \colon S \to T$ be an order embedding and let $\mathcal{L}$ be a family of total orders on $T$ such that $\ordpreceq_2 = \bigcap \mathcal{L}$ and $| \mathcal{L} | = \dimord (T, \ordpreceq_2) .$
For each $L \in \mathcal{L} ,$ we define a binary relation $g^{-1}(L)$ on $S$ by
\[
	x g^{-1}(L) y : \defarrow g(x) L g(y) \quad (x, y \in S).
\]
Then, since $g$ is an injection, each $g^{-1}(L)$ $(L \in \mathcal{L})$ is a total order on $S .$
Moreover for every $x,y \in S$ we have
\begin{align*}
	x \preceq_1 y 
	& \iff g(x) \preceq_2 g(y)
	\\
	& \iff g(x) L g(y) \quad (\forall L \in \mathcal{L})
	\\
	& \iff  x g^{-1} (L) y \quad (\forall L \in \mathcal{L}) ,
\end{align*}
which implies that the family $\set{g^{-1} (L) | L \in \mathcal{L}}$ realizes $\preceq_1 .$ 
Then the first claim $\dimord (S, \ordpreceq_1) \leq \dimord (T, \ordpreceq_2) $ immediately follows from the definition of the order dimension.

Let $\F$ be a family of order monotones characterizing $(T , \ordpreceq_2)$
such that $| \F| =  \dimordR (T , \ordpreceq_2)$. 
Then for every $x,y \in S$ we have
\begin{align*}
	x \preceq_1 y 
	& \iff g(x) \preceq_2 g(y)
	\\
	& \iff f \circ g(x) \leq  f \circ g(y) \quad (\forall f \in \F).
\end{align*}
This implies that $\set{f\circ g}_{f \in \F}$ characterizes $(S , \ordpreceq_1 )$. 
From this the second claim $\dimordR (S, \ordpreceq_1) \leq \dimordR (T, \ordpreceq_2) $ immediately follows.
\end{proof}

Let $((S_i , \ordpreceq_i))_{i\in I}$ be an indexed family of posets.
We define a poset, called the \textit{direct product}, by
\begin{gather*}
	\bigotimes_{i\in I} (S_i , \ordpreceq_i) := \left(\prod_{i\in I} S_i , \ordpreceq \right) ,
	\\
	(x_i)_{i \in I} \preceq (y_i)_{i \in I} 
	: \defarrow
	[x_i \preceq_i y_i \quad (\forall i \in I)] .
\end{gather*}
For a poset $(S, \ordpreceq)$ we denote by $\dpc (S , \ordpreceq)$ the minimum cardinality $|I|$ of a family $((C_i, \ordpreceq_i))_{i\in I}$ of \textit{chains} such that $(S, \ordpreceq)$ is embeddable into the direct product $\bigotimes_{i\in I} (C_i , \ordpreceq_i) . $
Then it is known~\cite{milner1990note} that
\begin{equation}
	\dimord (S, \ordpreceq) = \dpc (S, \ordpreceq) 
	\label{eq:dpc}
\end{equation}
holds for every poset $(S, \ordpreceq)$.
From this we can show 
\begin{lemm} \label{lemm:leq}
Let $(S , \ordpreceq)$ be a poset.
Then $\dimord (S , \ordpreceq) \leq \dimordR (S , \ordpreceq) .$
\end{lemm}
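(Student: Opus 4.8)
Proof plan for Lemma~\ref{lemm:leq} ($\dimord (S,\ordpreceq) \leq \dimordR (S,\ordpreceq)$).

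The plan is to use the characterization \eqref{eq:dpc}, namely $\dimord (S,\ordpreceq) = \dpc (S,\ordpreceq)$, and reduce everything to exhibiting an embedding of $(S,\ordpreceq)$ into a direct product of chains whose index set has cardinality $\dimordR (S,\ordpreceq)$. If $\dimordR (S,\ordpreceq)$ is infinite there is nothing subtle about cardinalities; if it is finite the same argument works verbatim, so I will not treat the cases separately.

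First I would pick a family $\F$ of order monotones on $S$ that characterizes $(S,\ordpreceq)$ with $|\F| = \dimordR (S,\ordpreceq)$; such a family exists by the well-ordering remark after Definition~\ref{def:dim}. For each $f \in \F$, the real line $(\realn, \leq)$ is a chain, so I get an indexed family $((\realn, \leq))_{f \in \F}$ of chains indexed by $\F$. Next I would define the map
\begin{equation*}
	\Phi \colon S \to \prod_{f \in \F} \realn , \qquad \Phi(x) := (f(x))_{f \in \F} .
\end{equation*}
The claim is that $\Phi$ is an order embedding of $(S,\ordpreceq)$ into $\bigotimes_{f\in\F} (\realn,\leq)$. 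Indeed, by the product order, $\Phi(x) \preceq \Phi(y)$ means $f(x) \leq f(y)$ for all $f \in \F$, and since $\F$ characterizes $(S,\ordpreceq)$ this is equivalent to $x \preceq y$ by \eqref{eq:condiF}; in particular the forward implication uses that each $f$ is an order monotone. So the two conditions in Definition~\ref{def:embedding} hold, and $\Phi$ is an order embedding.

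Having the embedding, the definition of $\dpc$ gives $\dpc (S,\ordpreceq) \leq |\F| = \dimordR (S,\ordpreceq)$, and then \eqref{eq:dpc} yields $\dimord (S,\ordpreceq) = \dpc (S,\ordpreceq) \leq \dimordR (S,\ordpreceq)$, which is the assertion. There is essentially no obstacle here: the only thing one must be slightly careful about is the direction of the biconditional in the definition of an order embedding — one needs \emph{both} that monotones preserve $\preceq$ (this is automatic from the definition of order monotone) and that they jointly \emph{detect} $\preceq$ (this is exactly the defining property of a characterizing family). Everything else is a direct unwinding of definitions together with the already-quoted identity \eqref{eq:dpc} from \cite{milner1990note}.
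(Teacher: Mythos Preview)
Your argument is correct and is essentially identical to the paper's own proof: both pick a characterizing family $\F$ of order monotones of minimum cardinality, observe that $x\mapsto (f(x))_{f\in\F}$ is an order embedding of $(S,\ordpreceq)$ into the direct product $\bigotimes_{f\in\F}(\realn,\mathord\leq)$ of chains, and then invoke the identity $\dimord = \dpc$ from \eqref{eq:dpc}. Your write-up just spells out the verification of the embedding a bit more explicitly.
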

\begin{proof}
Let $\mathcal{F}$ be a set of order monotones on $S$ such that $\mathcal{F}$ characterizes $\ordpreceq$ and $|\mathcal{F} | = \dimordR (S , \ordpreceq ) .$
Then the map
\begin{equation*}
	S \ni x \mapsto (f(x))_{f \in \F} \in \realn^{\F}
\end{equation*}
is an order embedding from $(S , \ordpreceq)$ into the direct product $\bigotimes_{f \in \F}(\realn , \mathord\leq ) ,$ where the order $\leq$ on the reals $\realn$ is the usual order.
Since $(\realn , \mathord\leq)$ is a chain, the claim follows from~\eqref{eq:dpc}.
\end{proof}

We now consider the specific base-norm Banach space $(\ell^2(\natn_2) , \ell^2_+(\natn_2) , \mathcal{P}(\natn_2))$ and the poset $(\Mfin (\mathcal{P}(\natn_2)), \ordpp) $.
We can and do identify $(\ell^2(\natn_2) , \ell^2_+(\natn_2) , \mathcal{P}(\natn_2))$ and $(\ell^\infty(\natn_2) , \ell^\infty_+ (\natn_2), u_{\mathcal{P}(\natn_2)} )$ with $(\realn^2 , \realn_+^2 , \Ocbit)$ and $(\realn^2 , \realn_+^2 , u_\cbit)$, respectively, where 
\begin{gather*}
	\realn_+ := [0,\infty) ,\\
	\Ocbit := \set{(p_0 , p_1) \in \realn^2 | p_0 , p_1 \geq 0 ,\, p_0 + p_1 =1 }, \\
	u_\cbit := (1,1).
\end{gather*}
Here the duality of $\realn^2$ and $\realn^{2 \ast} = \realn^2$ is given by
\begin{equation*}
	\braket{(a_0,a_1) , (b_0,b_1)}  := a_0 b_0 + a_1  b_1 
	\quad ((a_0,a_1) , (b_0 ,b_1) \in \realn^2) . 
\end{equation*}
The set $\Ocbit$ is a line segment and corresponds to the state space of a classical bit.

\begin{lemm} \label{lemm:cbit}
$(\Mfin (\Ocbit), \ordpp) $ is embeddable into $(\MfinO, \ordpp) .$
\end{lemm}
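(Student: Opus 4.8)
The plan is to construct an explicit order embedding from $(\Mfin(\mathcal{P}(\natn_2)), \ordpp)$ into $(\MfinO, \ordpp)$ by exploiting the hypothesis $\dim V \geq 2$, which (by Remark~\ref{rem:1}) guarantees two distinct states $\omega_0, \omega_1 \in \Omega$. First I would use these two states to embed the classical-bit theory into the given GPT at the level of effects: since $\omega_0 \neq \omega_1$, there is an effect $e \in \EO$ with $\braket{e,\omega_0} \neq \braket{e,\omega_1}$, and after an affine rescaling I can arrange a pair of effects $e, u_\Omega - e$ that ``separate'' $\omega_0$ and $\omega_1$ in the same way the coordinate effects of $\realn^2$ separate the two classical-bit states. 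The idea is then to send an EVM $\oM = (\oM(k))_{k=0}^{m-1}$ on $\mathcal{P}(\natn_2)$ — whose components are vectors $\oM(k) = (a_k, b_k) \in \realn^2_+$ with $\sum_k a_k = \sum_k b_k = 1$ — to the EVM $\widetilde{\oM}$ on $\Omega$ with $\widetilde{\oM}(k) := a_k e + b_k (u_\Omega - e)$ (or a suitably normalized version so that the components sum to $u_\Omega$; with $e$ an effect this is automatic).

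The main work is to verify that this map $\oM \mapsto \widetilde{\oM}$ descends to a well-defined order embedding on the quotient posets, i.e.\ that $\oM \pp \oN$ holds in $\Mfin(\mathcal{P}(\natn_2))$ \emph{if and only if} $\widetilde{\oM} \pp \widetilde{\oN}$ holds in $\MfinO$. The ``only if'' direction is easy: a Markov matrix realizing $\oM \pp \oN$ realizes $\widetilde{\oM} \pp \widetilde{\oN}$ verbatim, since the assignment $(a,b) \mapsto a e + b(u_\Omega - e)$ is linear. For the ``if'' direction I would use the BSS theorem (Proposition~\ref{prop:bss}): given an ensemble $\E = (\rho_k)$ on $\mathcal{P}(\natn_2)$, I need to produce an ensemble $\widetilde{\E}$ on $\Omega$ such that $\Pg(\E; \oM) = \Pg(\widetilde{\E}; \widetilde{\oM})$ for all $\oM$, so that the chain of inequalities $\widetilde{\oM} \pp \widetilde{\oN} \Rightarrow \Pg(\widetilde{\E}; \widetilde{\oM}) \leq \Pg(\widetilde{\E}; \widetilde{\oN})$ for all $\widetilde{\E}$ transfers back to $\oM \pp \oN$. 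Concretely, a classical ensemble on $\natn_2$ is a pair of sub-probability vectors; using the maximum-likelihood formula \eqref{eq:mle} and the fact that $\braket{\widetilde{\oM}(k), \omega_i}$ reproduces the two coordinates $a_k, b_k$ when $\omega_0, \omega_1$ are chosen appropriately, the ensemble built from $\omega_0, \omega_1$ with the right weights gives matching discrimination probabilities.

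The step I expect to be the main obstacle is making the correspondence between classical ensembles and ensembles on $\Omega$ precise enough that the discrimination probabilities match \emph{exactly} rather than merely up to inequality — in particular handling the normalization $\braket{u_\Omega, \cdot}$ bookkeeping and confirming that the separating effect $e$ can be chosen with $\braket{e,\omega_0} = 1$, $\braket{e,\omega_1} = 0$ (or, if the geometry of $\EO$ does not permit the extreme values $0$ and $1$, working with intermediate values $\alpha = \braket{e,\omega_1} < \braket{e,\omega_0} = \beta$ and checking that the discrimination-probability identity still holds after an affine reparametrization of the ensembles). A clean way to sidestep delicate casework is to observe that it suffices to embed $(\Mfin(\mathcal{P}(\natn_2)), \ordpp)$ into $(\MfinO, \ordpp)$ \emph{up to order isomorphism onto a sub-poset}, and to note that the two-outcome sub-theory generated by $\{e, u_\Omega - e\}$ inside $\EO$ is itself (affinely isomorphic to) the classical bit, so that the post-processing structure is literally inherited. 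I would present the argument in that form: define the embedding, check the trivial direction directly, and invoke Proposition~\ref{prop:bss} together with the explicit ensemble construction for the converse, relegating the affine-rescaling details to a short computation.
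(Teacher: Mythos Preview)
Your construction of the embedding map coincides with the paper's: send $(a,b)\in\realn^2$ to $a\,e + b\,(u_\Omega - e)$ for an effect $e$ not proportional to $u_\Omega$, and push EVMs forward componentwise. The forward implication $\oM \pp \oN \Rightarrow \widetilde{\oM} \pp \widetilde{\oN}$ is also handled the same way, by reusing the Markov matrix.

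Where you diverge is in the converse implication, and here you are overcomplicating things. The paper observes that the linear map $\Psi\colon \realn^2 \to V^\ast$, $\Psi(\alpha_0,\alpha_1) = \alpha_0 e + \alpha_1 (u_\Omega - e)$, is \emph{injective} (since $e$ and $u_\Omega - e$ are linearly independent as soon as $e$ is not a scalar multiple of $u_\Omega$, which your separation condition $\braket{e,\omega_0}\neq\braket{e,\omega_1}$ already guarantees). Injectivity alone gives the converse in one line: if $\Psi(\oM(k)) = \sum_j p(k|j)\,\Psi(\oN(j))$ for some Markov matrix, apply the inverse of $\Psi$ on its range to get $\oM(k) = \sum_j p(k|j)\,\oN(j)$. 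No BSS, no ensembles.

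Your BSS route, by contrast, has a genuine positivity obstacle that you flag but do not resolve. To match $\Pg(\E;\oM) = \Pg(\widetilde{\E};\widetilde{\oM})$ you must write each classical sub-state $(p_k,q_k)$ as $p_k'\omega_0 + q_k'\omega_1$ with $p_k' \braket{e,\omega_0} + q_k'\braket{e,\omega_1} = p_k$ and $p_k' + q_k' = p_k + q_k$; when $\braket{e,\omega_0},\braket{e,\omega_1}\notin\{0,1\}$ the solution $(p_k',q_k')$ typically has a negative entry, so $\widetilde{\rho}_k\notin V_+$ and $\widetilde{\E}$ is not an ensemble. A general GPT need not contain two perfectly distinguishable states, so this cannot be repaired by choosing $\omega_0,\omega_1$ better. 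Your closing fallback---that the sub-theory generated by $\{e,u_\Omega - e\}$ ``literally inherits'' the post-processing structure---is essentially the injectivity argument; make that the main line and drop the BSS detour.
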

\begin{proof}
From the assumption $\dim V \geq 2$, we have $\dim V^\ast \geq 2$.
Thus there exists an element $a_0 \in V^\ast$ such that $(a_0, u_\Omega)$ is linearly independent.
We put 
\begin{gather*}
	 a := \left\| a_0 + \| a_0 \| u_\Omega \right\|^{-1}  (a_0 + \| a_0 \| u_\Omega),
	 \\
	 a^\prime := u_\Omega - a.
\end{gather*}
Then $a$ and $a^\prime$ are effects and $(a, a^\prime)$ is linearly independent.
We define a linear map $\Psi \colon \realn^2 \to V^\ast$ by
\begin{equation*}
	\Psi ((\alpha_0 , \alpha_1)) := \alpha_0 a + \alpha_1 a^\prime
	\quad ((\alpha_0,\alpha_1) \in \realn^2) .
\end{equation*}
Then, by the linear independence of $(a, a^\prime)$, $\Psi$ is injective.
Moreover, $\Psi$ is unital and positive, i.e.\ $\Psi (u_\cbit ) = u_\Omega$ and $\Psi (\realn_+^2) \subseteq V_+^\ast$ hold.

For each EVM $\oM = \oMseq $ on $\Ocbit$, we define 
\begin{equation*}
	\Psi (\oM) := (\Psi(\oM(j)))_{j=0}^{m-1}  \in V^{\ast m}.
\end{equation*}
Since $\Psi$ is unital and positive, $\Psi (\oM)$ is an EVM on $\Omega$.
Then for every EVMs $\oM = \oMseq$ and $\oN = \oNseq$ on $\Ocbit$ the equivalence
\begin{equation}
	\oM \pp \oN \iff \Psi (\oM) \pp \Psi (\oN) 
	\label{eq:cbitiff}
\end{equation}
holds.
The claim~\eqref{eq:cbitiff} can be shown as follows:
\begin{align*}
	&\oM \pp \oN \\
	&\iff \exists \text{$(p(j|k))_{j \in \Nm , \, k \in \Nn}$: Markov matrix s.t.\ }
	\left[\oM(j) = \sum_{k=0}^{n-1}p(j|k) \oN(k) \quad (\forall j \in \Nm) \right]
	\\
	&\iff \exists \text{$(p(j|k))_{j \in \Nm , \, k \in \Nn}$: Markov matrix s.t.\ }
	\left[\Psi(\oM(j)) = \sum_{k=0}^{n-1}p(j|k) \Psi(\oN(k))   \quad (\forall j \in \Nm) \right]
	\\
	&\iff \Psi (\oM) \pp \Psi (\oN) ,
\end{align*}
where the second equivalence follows from the injectivity of $\Psi$.
From \eqref{eq:cbitiff}, the map
\begin{equation*}
	\Mfin (\Ocbit) \ni [\oM] \mapsto [\Psi (\oM)] \in \MfinO 
\end{equation*}
is a well-defined order embedding, which completes the proof.
\end{proof}

From Lemmas~\ref{lemm:embedding} and \ref{lemm:cbit}, the proof of the infinite dimensionality of $(\MfinO , \ordpp)$ reduces to the case of the classical bit space $\Ocbit$.
This is done by proving that the following \textit{standard example of an $n$-dimensional poset} (\cite{Dushnik1941,Hiraguti1955}; \cite{BA18230795}, Chapter~1, \S~5) is embeddable into  $(\MfOc , \ordpp)$.

\begin{defi}[Standard example of an $n$-dimensional poset] \label{def:standard}
For each natural number $n \geq 2$, we define a poset $(S_n , \ordpreceq_n)$, called the standard example of an $n$-dimensional poset, as follows.
$S_n $ is a $2n$-element set given by $S_n := \{a_j\}_{j=0}^{n-1} \cup \{b_j \}_{j=0}^{n-1} $ and the order $\preceq_n$ is given by
\begin{equation*}
	\ordpreceq_n := \{(a_j , a_j) \}_{j=0}^{n-1} \cup \{(b_j , b_j) \}_{j=0}^{n-1} \cup
	\bigcup_{j=0}^{n-1} \bigcup_{k\in \Nn \setminus \{ j\}} \set{(a_j ,b_k)}
\end{equation*}
(see Figure~\ref{fig:hasse} for the Hasse diagram).
It is known~\cite{Dushnik1941,BA18230795} that $\dimord (S_n , \ordpreceq_n) = n$ holds for each $n \geq 2$.
\begin{figure}
\centering
\begin{tikzpicture}
	\node[circle,fill=white,draw=black,inner sep=0pt,minimum size=5pt,label=below:{$a_0$}] (a0) at (0,0) {};
	\node[circle,fill=white,draw=black,inner sep=0pt,minimum size=5pt,label=below:{$a_1$}] (a1) at (1,0) {};
	\node[circle,fill=white,draw=black,inner sep=0pt,minimum size=5pt,label=below:{$a_2$}] (a2) at (2,0) {};
	\node[circle,fill=white,draw=black,inner sep=0pt,minimum size=5pt,label=below:{$a_{n-1}$}] (an) at (4,0) {};
	\node[circle,fill=white,draw=black,inner sep=0pt,minimum size=5pt,label=above:{$b_0$}] (b0) at (0,2) {};
	\node[circle,fill=white,draw=black,inner sep=0pt,minimum size=5pt,label=above:{$b_1$}] (b1) at (1,2) {};
	\node[circle,fill=white,draw=black,inner sep=0pt,minimum size=5pt,label=above:{$b_2$}] (b2) at (2,2) {};
	\node[circle,fill=white,draw=black,inner sep=0pt,minimum size=5pt,label=above:{$b_{n-1}$}] (bn) at (4,2) {};
	\node (adots) at (3,0) {$\cdots$};
	\node (bdots) at (3,2) {$\cdots$};
	\draw (a0) -- (b1) -- (a2) -- (b0) -- (a1) -- (b2) -- (a0) -- (bn) -- (a1);
	\draw (a2) --(bn);
	\draw (an) -- (b0);
	\draw (an) -- (b1);
	\draw (an) -- (b2);
\end{tikzpicture}
\caption{The Hasse diagram of the standard example $(S_n , \ordpreceq_n)$ of an $n$-dimensional poset.}
\label{fig:hasse}
\end{figure}
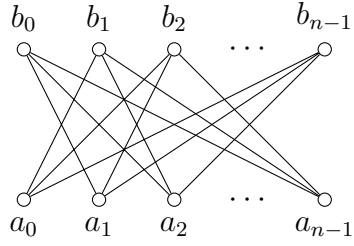
\end{defi}
In order to show that $(S_n , \ordpreceq_n)$ is embeddable into  $(\MfOc , \ordpp)$, we need the following notion of the direct mixture of EVMs and some related lemmas.

\begin{defi} \label{def:mix}
Let $\oM_j = (\oM_j(k))_{k=0}^{m_j -1}$ $(j = 0 ,1 , \dots, N-1)$ be EVMs on $\Omega$ and let $(p_j)_{j=0}^{N-1}$ be a probability distribution.
Then we define a $\sum_{j=0}^{N-1} m_j =: m$-outcome EVM $\oM = (\oM (k))_{k=0}^{m-1}$ by
\begin{equation*}
	\oM (k) := p_j \oM_j \left(k - \sum_{l=0}^{j-1} m_l \right)
	\quad \left(\text{when $\sum_{l=0}^{j-1} m_l \leq k < \sum_{l=0}^{j} m_l$}\right),
\end{equation*}
i.e.\
\begin{align*}
	\oM := &(p_0 \oM_0 (0), \dots, p_0 \oM_0 (m_0-1), p_1 \oM_1 (0) ,  \dots ,  p_1 \oM_1(m_1 -1) , \dots , \\ & p_{N-1} \oM_{N-1} (0) , \dots , p_{N-1} \oM(m_{N-1} -1)   ).
\end{align*}
This $\oM$ is called the \textit{direct mixture} and written as 
\begin{equation*}
\bigoplus_{j=0}^{N-1} p_j \oM_j
\end{equation*} 
or 
\begin{equation*}
	p_0 \oM_0 \oplus p_1 \oM_1 \oplus \dots \oplus p_{N-1} \oM_{N-1} .
\end{equation*}
In the operational language, the measurement corresponding to the direct mixture $\oM$ is realized as follows: we first generate a random number $j$ according to the probability distribution $(p_j)_{j=0}^{N-1}$, then perform $\oM_j$ which gives a measurement outcome $k \in \natn_{m_j},$ and finally record both $j$ and $k$.
The reader should not confuse the direct mixture with the ordinary mixture $( \sum_{j=0}^{N-1} p_j \oM_j (k) )_{k=0}^{m^\prime -1}$ (here we assume $m_j = m^\prime$ for all $j$), which is realized when we forget $j$ and record only $k$.
\end{defi}

The state discrimination probability $\Pg (\E ; \cdot)$ is affine with respect to the direct mixture as shown in the following lemma.

\begin{lemm}[cf.\ \cite{kuramochi2020compact}, Proposition~14.1] \label{lemm:affinity}
Let $\oM_j = (\oM_j(k))_{k=0}^{m_j -1}$ $(j = 0 ,1 , \dots, N-1)$ be EVMs on $\Omega$,
let $(p_j)_{j=0}^{N-1}$ be a probability distribution, and let $\E = (\rho_l)_{l=0}^{n-1}$ be an ensemble on $\Omega$.
Then 
\begin{equation}
	\Pg \left( \E ; \bigoplus_{j=0}^{N-1} p_j \oM_j  \right)
	= \sum_{j=0}^{N-1} p_j  \Pg (\E ; \oM_j) 
	\label{eq:affinity}
\end{equation}
holds.
\end{lemm}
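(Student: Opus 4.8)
The plan is to read off the identity directly from the maximum likelihood formula~\eqref{eq:mle}, so that the whole statement reduces to pulling a non-negative scalar out of a maximum. First I would make the indexing of the direct mixture explicit: by Definition~\ref{def:mix}, after the relabeling described there the outcomes of $\oM := \bigoplus_{j=0}^{N-1} p_j \oM_j$ are indexed by pairs $(j,k)$ with $j \in \{0,\dots,N-1\}$ and $k \in \{0,\dots,m_j-1\}$, and the effect attached to $(j,k)$ is $p_j \oM_j(k)$. Substituting this into~\eqref{eq:mle} gives
\[
	\Pg\left(\E ; \bigoplus_{j=0}^{N-1} p_j \oM_j\right) = \sum_{j=0}^{N-1} \sum_{k=0}^{m_j-1} \max_{l} \braket{p_j \oM_j(k), \rho_l},
\]
where $l$ ranges over the outcome labels of $\E$.

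The only computation is then the elementary observation that, since $p_j \geq 0$,
\[
	\max_{l} \braket{p_j \oM_j(k), \rho_l} = p_j \max_{l} \braket{\oM_j(k), \rho_l}
\]
by positive homogeneity of the maximum together with linearity of the pairing in its first argument. Summing this over $k \in \{0,\dots,m_j-1\}$ and applying~\eqref{eq:mle} to $\oM_j$ yields $p_j \Pg(\E; \oM_j)$, and summing over $j$ produces exactly the right-hand side of~\eqref{eq:affinity}. This completes the argument.

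There is essentially no obstacle here; the only thing that demands (purely notational) care is keeping track of the outcome relabeling of the direct mixture. For completeness I would add one sentence noting that one can also prove~\eqref{eq:affinity} straight from the definition~\eqref{eq:Pgdef} without invoking the maximum likelihood formula: the inequality ``$\geq$'' follows by feeding block-structured Markov matrices, assembled from individually optimal choices for each $\oM_j$, into the supremum defining $\Pg(\E;\oM)$, and ``$\leq$'' follows by restricting an arbitrary Markov matrix for $\oM$ to the outcome block of each $\oM_j$ and using $\sum_{j} p_j = 1$; in that approach ``$\leq$'' is the slightly less transparent direction, though still routine.
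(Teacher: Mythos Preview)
Your proposal is correct and matches the paper's proof essentially line for line: the paper also applies~\eqref{eq:mle} to the direct mixture, pulls the non-negative scalar $p_j$ out of each $\max_l$, and identifies the inner sum as $\Pg(\E;\oM_j)$. Your additional remark about a direct proof from~\eqref{eq:Pgdef} is extra commentary not present in the paper, but it is correct and harmless.
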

\begin{proof}
By using \eqref{eq:mle}, we have
\begin{align*}
	\Pg \left( \E ; \bigoplus_{j=0}^{N-1} p_j \oM_j  \right)
	&= \sum_{j=0}^{N-1} \sum_{k=0}^{m_j-1} \max_{l \in \Nn} \braket{p_j \oM_j(k) , \rho_l}
	\\
	&=  \sum_{j=0}^{N-1} p_j \sum_{k=0}^{m_j-1} \max_{l \in \Nn} \braket{ \oM_j(k) , \rho_l}
	\\
	&=  \sum_{j=0}^{N-1} p_j \Pg (\E ; \oM_j),
\end{align*}
which proves \eqref{eq:affinity}.
\end{proof}

For each $ (s_0 , s_1) \in [0,1]^2(= \mathcal{E} (\Ocbit) ) $ we write as
\begin{equation*}
	\oA_{s_0 , s_1} := ((s_0,s_1) , u_\cbit - (s_0,s_1)) = ((s_0,s_1) , (1-s_0,1-s_1)) \in \evm_2 (\Ocbit),
\end{equation*}
which is the general form of a $2$-outcome EVM on $\Ocbit$.

\begin{lemm} \label{lemm:para}
Let $s_0 , s_1 , t_0 , t_1 \in [0,1]$.
Then $\oA_{s_0 , s_1} \pp \oA_{t_0 , t_1}$ if and only if there exist scalars $p, q \in [0,1] $ such that $(s_0 , s_1) = p (t_0 , t_1) + q (1-t_0 , 1-t_1)$, i.e.\ $(s_0 ,s_1)$ is inside the parallelogram $(0,0)$-$(t_0,t_1)$-$(1,1)$-$(1-t_0 ,1-t_1)$.
\end{lemm}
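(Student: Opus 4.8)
The plan is to expand the definition of $\pp$ directly in the special case of two-outcome EVMs on $\mathcal{P}(\natn_2)$ and observe that it collapses to a single linear identity between vectors in $\realn^2$. Recall that $\oA_{s_0,s_1}(0) = (s_0,s_1)$, $\oA_{s_0,s_1}(1) = (1-s_0,1-s_1)$, and likewise for $\oA_{t_0,t_1}$, and that a Markov matrix on two outcomes is a $2\times2$ matrix with nonnegative entries whose columns sum to $1$.

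For the forward implication, I would start from a witnessing Markov matrix $(p(i|j))_{i,j\in\natn_2}$ with $\oA_{s_0,s_1}(i)=\sum_{j=0}^{1}p(i|j)\,\oA_{t_0,t_1}(j)$ and simply read off the $i=0$ component: it says $(s_0,s_1)=p(0|0)(t_0,t_1)+p(0|1)(1-t_0,1-t_1)$, so $p:=p(0|0)$ and $q:=p(0|1)$ are the required scalars, both in $[0,1]$ by the Markov property.

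For the converse, given $p,q\in[0,1]$ with $(s_0,s_1)=p(t_0,t_1)+q(1-t_0,1-t_1)$, I would build the $2\times2$ matrix $p(0|0)=p$, $p(0|1)=q$, $p(1|0)=1-p$, $p(1|1)=1-q$, which is visibly a Markov matrix (all entries in $[0,1]$, columns summing to $1$), and check the two post-processing equations. The $i=0$ equation is the hypothesis; the $i=1$ equation, $(1-s_0,1-s_1)=(1-p)(t_0,t_1)+(1-q)(1-t_0,1-t_1)$, is then automatic, since subtracting the $i=0$ equation from $(t_0,t_1)+(1-t_0,1-t_1)=(1,1)=u_{\mathcal{P}(\natn_2)}$ on the right and from $(s_0,s_1)+(1-s_0,1-s_1)=(1,1)$ on the left yields exactly it. Hence the matrix witnesses $\oA_{s_0,s_1}\pp\oA_{t_0,t_1}$. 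The parenthetical geometric description is then just the observation that $\set{p(t_0,t_1)+q(1-t_0,1-t_1) | p,q\in[0,1]}$ is the image of the unit square under an affine map, i.e.\ the parallelogram with corners $(0,0)$, $(t_0,t_1)$, $(1,1)$, $(1-t_0,1-t_1)$.

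The statement is elementary and I do not anticipate a genuine obstacle; the only point requiring a moment's care is that the second outcome of a two-outcome post-processing is pinned down by the first together with normalization, so it suffices to control the action on the single effect $(s_0,s_1)$ — and the Markov matrix manufactured in the converse direction automatically respects positivity because every entry lies in $[0,1]$.
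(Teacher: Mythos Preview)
Your proposal is correct and is essentially the same argument as the paper's: both unfold the definition of $\pp$ for $2$-outcome EVMs, identify the Markov matrix entries with $p,q,1-p,1-q$, and observe that the second post-processing equation is forced by the first together with the normalization $(t_0,t_1)+(1-t_0,1-t_1)=(1,1)$. The paper presents this as a terse chain of equivalences while you spell out the two implications separately, but the content is identical.
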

\begin{proof}
\begin{align*}
	&\oA_{s_0 , s_1} \pp \oA_{t_0 , t_1} \\
	&\iff \exists p,p^\prime , q, q^\prime \in [0, \infty) \quad \mathrm{s.t.} \quad 
	\begin{cases}
	p+p^\prime = q+ q^\prime =1, \\
	(s_0, s_1) = p(t_0 , t_1) + q (1-t_0 ,1-t_1), \\
	(1-s_0,1- s_1) = p^\prime (t_0 , t_1) + q^\prime (1-t_0 ,1-t_1)
	\end{cases}
	\\
	&\iff \exists p,q \in [0, 1] \quad \mathrm{s.t.} \quad 
	\begin{cases}
	(s_0, s_1) = p(t_0 , t_1) + q (1-t_0 ,1-t_1), \\
	(1-s_0,1- s_1) = (1-p) (t_0 , t_1) + (1-q) (1-t_0 ,1-t_1)
	\end{cases}
	\\
	&\iff \exists p,q \in [0, 1] \quad \mathrm{s.t.} \quad 
	(s_0, s_1) = p(t_0 , t_1) + q (1-t_0 ,1-t_1),
\end{align*}
where the final equivalence holds since the second condition 
\[
(1-s_0,1- s_1) = (1-p) (t_0 , t_1) + (1-q) (1-t_0 ,1-t_1)
\]
follows from the first condition 
\[
(s_0, s_1) = p(t_0 , t_1) + q (1-t_0 ,1-t_1). \qedhere
\]
\end{proof}

\begin{lemm} \label{lemm:parabola}
Let $0 < s < t < 1$. Then $\oA_{s, s^2}$ and $\oA_{t,t^2}$ are incomparable, i.e.\ neither $\oA_{s, s^2} \pp \oA_{t,t^2}$ nor $\oA_{t, t^2} \pp \oA_{s,s^2}$ holds.
\end{lemm}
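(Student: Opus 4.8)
The plan is to invoke Lemma~\ref{lemm:para} twice, translating each of the two candidate relations $\oA_{s,s^2} \pp \oA_{t,t^2}$ and $\oA_{t,t^2} \pp \oA_{s,s^2}$ into the corresponding parallelogram membership condition, and then to derive a contradiction in each case by applying one well-chosen linear functional to the resulting vector identity and tracking signs using $0 < s < t < 1$. Geometrically this amounts to the fact that the strictly convex parabola $\{(x,x^2) : x \in [0,1]\}$ cannot have one of its points sit in the parallelogram determined by two others.

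First I would treat $\oA_{s,s^2} \pp \oA_{t,t^2}$. By Lemma~\ref{lemm:para} this would give $p, q \in [0,1]$ with $(s, s^2) = p(t,t^2) + q(1-t, 1-t^2)$. Applying the linear functional $(x,y) \mapsto y - tx$, which annihilates $(t,t^2)$ and sends $(1-t,1-t^2)$ to $1-t$, the equation collapses to $s^2 - ts = q(1-t)$, i.e. $s(s-t) = q(1-t)$. The left-hand side is strictly negative since $0 < s < t$, while the right-hand side is $\geq 0$ since $q \geq 0$ and $t < 1$ --- a contradiction. Hence $\oA_{s,s^2} \not\pp \oA_{t,t^2}$.

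Next I would treat $\oA_{t,t^2} \pp \oA_{s,s^2}$. Again by Lemma~\ref{lemm:para} we would get $p, q \in [0,1]$ with $(t,t^2) = p(s,s^2) + q(1-s,1-s^2)$. This time I apply the functional $(x,y) \mapsto y - (1+s)x$, which annihilates $(1-s, 1-s^2) = (1-s)(1,1+s)$ and sends $(s,s^2)$ to $-s$; the equation becomes $t^2 - (1+s)t = -ps$, whence $p = t(1+s-t)/s$. But then $p - 1 = \bigl(t(1+s-t) - s\bigr)/s = (1-t)(t-s)/s > 0$, since $1-t > 0$ and $t - s > 0$, contradicting $p \leq 1$. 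Hence $\oA_{t,t^2} \not\pp \oA_{s,s^2}$, and the two EVMs are incomparable.

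There is no serious obstacle here: the argument reduces to two one-line functional evaluations plus elementary sign bookkeeping. The only points requiring a little care are choosing, in each case, the functional that kills exactly one of the two generating vectors $(t,t^2),(1-t,1-t^2)$ (respectively $(s,s^2),(1-s,1-s^2)$) so that one of $p,q$ drops out, and --- in the second case --- using the full strength of Lemma~\ref{lemm:para}, namely that $p$ must lie in $[0,1]$ rather than merely being nonnegative, which is precisely the inequality the computation contradicts.
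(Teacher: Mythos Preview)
Your proposal is correct. Both your proof and the paper's reduce to Lemma~\ref{lemm:para} and then check that the relevant point lies outside the parallelogram; the difference is purely in how that check is carried out. The paper argues geometrically from the strict convexity of $x\mapsto x^2$: $(s,s^2)$ lies strictly below the chord from $(0,0)$ to $(t,t^2)$, and $(t,t^2)$ lies strictly below the chord from $(s,s^2)$ to $(1,1)$, so in each case the point is outside the required parallelogram (the paper supports this with a figure). Your version makes the same content explicit and algebraic, applying in each case a linear functional that annihilates one generating vector so as to isolate a single coefficient, and then deriving a sign (or range) contradiction. The two arguments are equivalent; yours trades the picture for a short computation and avoids the small auxiliary step ``below this chord $\Rightarrow$ outside that parallelogram.'' It is also worth noting, as you do, that the second case genuinely needs $p\le 1$ (not just $p\ge 0$); this is exactly the feature of Lemma~\ref{lemm:para} being exploited.
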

\begin{proof}
By the strict convexity of the function $f(x) = x^2 ,$ the points $(s,s^2)$ and $(t,t^2)$ are respectively below the line segment $(0,0)$-$(t,t^2)$ and $(s,s^2)$-$(1,1)$. 
Therefore points $(s,s^2)$ and $(t,t^2)$ are respectively outside the parallelograms $(0,0)$-$(t, t^2)$-$(1,1)$-$(1-t ,1-t^2)$ and $(0,0)$-$(s, s^2)$-$(1,1)$-$(1-s ,1-s^2)$ (Figure~\ref{fig:para}).
Hence the claim follows from Lemma~\ref{lemm:para}.
\begin{figure}
\centering
\begin{tikzpicture}[domain=0:3.5,samples=200,>=stealth]
\node at (2 , -1) {(a)};
\draw[->] (0,0) -- (3.5,0) node[right] {$x$};
\draw[->] (0,0) -- (0,3.5) node[above] {$y$};
\draw plot (\x, {\x * \x / 3}) node[below right] {{\footnotesize$y=x^2$}};
\draw plot (\x , {3 - (3-\x) * (3-\x) /3} ) node[below right] {{\footnotesize$y= 1 - (1-x)^2$}};
\node[circle,fill=black,draw=black,inner sep=0pt,minimum size=3pt,label=above left:{\footnotesize$(1,1)$}] (u) at (3,3) {};
\node[circle,fill=black,draw=black,inner sep=0pt,minimum size=3pt,label=below left:{\footnotesize$(0,0)$}] (o) at (0,0) {};
\node[circle,fill=black,draw=black,inner sep=0pt,minimum size=3pt,label=below right:{\footnotesize$(s,s^2)$}] (s) at (1.3,1.69/3) {};
\node[circle,fill=black,draw=black,inner sep=0pt,minimum size=3pt,label=below right:{\footnotesize$(t,t^2)$}] (t) at (2,4/3) {};
\node[circle,fill=black,draw=black,inner sep=0pt,minimum size=3pt] (tp) at (3-2,3-4/3) {};
\node at (1.1, 3) {\footnotesize$(1-t,1-t^2)$};
\draw[->] (1,2.8) to [in=135,out=225]  (0.9,3-4/3);
\draw (o)--(t)--(u)--(tp)--(o);
\end{tikzpicture}
\begin{tikzpicture}[domain=0:3.5,samples=200,>=stealth]
\node at (2 , -1) {(b)};
\draw[->] (0,0) -- (3.5,0) node[right] {$x$};
\draw[->] (0,0) -- (0,3.5) node[above] {$y$};
\draw plot (\x, {\x * \x / 3}) node[below right] {{\footnotesize$y=x^2$}};
\draw plot (\x , {3 - (3-\x) * (3-\x) /3} ) node[below right] {{\footnotesize$y= 1 - (1-x)^2$}};
\node[circle,fill=black,draw=black,inner sep=0pt,minimum size=3pt,label=above left:{\footnotesize$(1,1)$}] (u) at (3,3) {};
\node[circle,fill=black,draw=black,inner sep=0pt,minimum size=3pt,label=below left:{\footnotesize$(0,0)$}] (o) at (0,0) {};
\node[circle,fill=black,draw=black,inner sep=0pt,minimum size=3pt,label=below right:{\footnotesize$(s,s^2)$}] (s) at (1.3,1.69/3) {};
\node[circle,fill=black,draw=black,inner sep=0pt,minimum size=3pt,label=below right:{\footnotesize$(t,t^2)$}] (t) at (2,4/3) {};
\node[circle,fill=black,draw=black,inner sep=0pt,minimum size=3pt] (sp) at (3-1.3, 3-1.69/3) {};
\node at (1.1, 3) {\footnotesize$(1-s,1-s^2)$};
\draw (o)--(s)--(u)--(sp)--(o);
\end{tikzpicture}
\caption{The points $(s,s^2)$ and $(t,t^2)$ are respectively outside the parallelograms $(0,0)$-$(t, t^2)$-$(1,1)$-$(1-t ,1-t^2)$ (a) and $(0,0)$-$(s, s^2)$-$(1,1)$-$(1-s ,1-s^2)$ (b).}
\label{fig:para}
\end{figure}
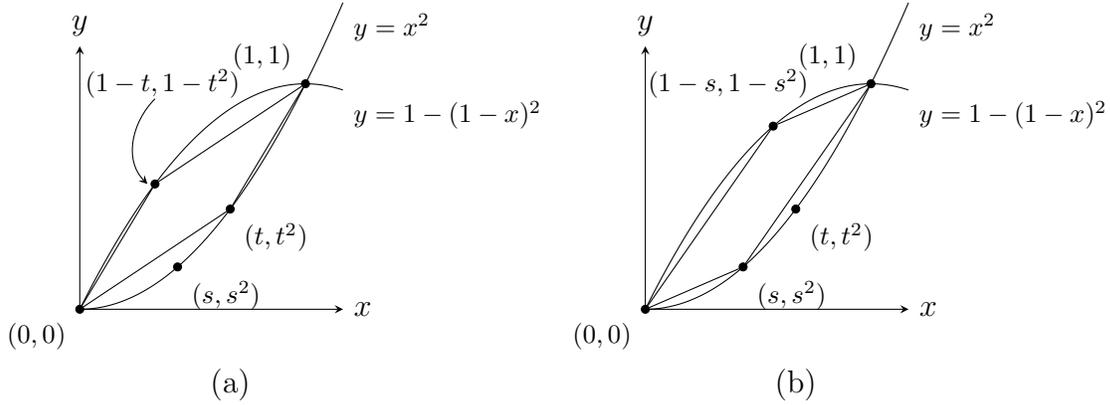
\end{proof}

We are now in a position to prove the following crucial lemma.

\begin{lemm} \label{lemm:main1}
$(S_n , \ordpreceq_n)$ is embeddable into $(\MfOc, \ordpp)$ for each natural number $n \geq 3$.
\end{lemm}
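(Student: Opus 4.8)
The plan is to construct an explicit order embedding $\phi$ of the standard example $(S_n , \ordpreceq_n)$ into $(\Mfin (\mathcal{P}(\natn_2)) , \ordpp)$. First I would fix real numbers $0 < s_0 < s_1 < \dots < s_{n-1} < 1$ and write $\oA^{(i)} := \oA_{s_i , s_i^2}$ for the associated parabola $2$-outcome EVMs; by Lemma~\ref{lemm:parabola} these are pairwise incomparable, which makes them the natural candidates for the minimal elements $a_i$, so I set $\phi(a_i) := [\oA^{(i)}]$. For the maximal elements $b_j$ I would take $\oM_j$ to be the EVM whose outcome effects are, listed in increasing order of slope, the edge vectors of the convex polygonal path in $[0,1]^2$ from $(0,0)$ to $(1,1)$ that passes through the points $(s_i , s_i^2)$ for $i \in \Nn \setminus \{j\}$; its successive slopes are $s_0 , \, s_0 + s_1 , \dots$, with the two slopes flanking the omitted index $j$ replaced by the single slope $s_{j-1} + s_{j+1}$ (with the obvious modification when $j \in \{0 , n-1\}$), and these are strictly increasing, so the edge vectors are nonnegative and sum to $u_{\mathcal{P}(\natn_2)}$ and $\oM_j$ is a bona fide finite-outcome EVM. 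For $n \geq 3$ this path has $n \geq 3$ edges, which is precisely where that hypothesis enters (for $n = 2$ the path collapses to a single parabola point and $\phi$ would fail to be injective). Then $\phi(b_j) := [\oM_j]$.

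The computational heart of the verification is a mild extension of Lemma~\ref{lemm:para}: for a $2$-outcome EVM $\oA_{t_0 , t_1}$ and an arbitrary finite-outcome EVM $\oN = (\oN(l))_{l=0}^{r-1}$ on $\mathcal{P}(\natn_2)$ one has $\oA_{t_0 , t_1} \pp \oN$ if and only if $(t_0 , t_1)$ lies in the (centrally symmetric) zonotope $Z(\oN) := \{\, \sum_l \lambda_l \oN(l) : \lambda_l \in [0,1] \,\}$; the forward implication is immediate from $(t_0 , t_1) = \sum_l p(0|l) \oN(l)$ with $p(0|l) \in [0,1]$, and the converse is witnessed by the Markov matrix $p(0|l) = \lambda_l$, $p(1|l) = 1 - \lambda_l$, using $\sum_l \oN(l) = u_{\mathcal{P}(\natn_2)}$. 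Applied to $\oN = \oM_j$: the polygonal path of the first paragraph is exactly the lower boundary of $Z(\oM_j)$, and since every $(s_l , s_l^2)$ lies below the diagonal — hence below the upper boundary of $Z(\oM_j)$ — the point $(s_l , s_l^2)$ belongs to $Z(\oM_j)$ exactly when it lies weakly above that path, which, by the strict convexity of $t \mapsto t^2$ (placing $(s_j , s_j^2)$ strictly below the chord joining its two parabola neighbours, resp.\ below the first or last edge when $j$ is an endpoint), happens exactly when $l \neq j$. Thus $\oA^{(i)} \pp \oM_j \iff i \neq j$, matching $a_i \preceq_n b_j \iff i \neq j$.

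The remaining relations should then come almost for free. Within the $a$'s, incomparability is Lemma~\ref{lemm:parabola}. For $\oM_j \not\pp \oA^{(k)}$ I would use that any post-processing of $\oA^{(k)}$ has all of its outcome effects in the cone spanned by $(s_k , s_k^2)$ and $(1 - s_k , 1 - s_k^2)$, i.e.\ of slope in $[s_k , 1 + s_k]$; but among the edge vectors of $\oM_j$ the first has slope $s_0$ (or $s_1$ if $j = 0$) and the last has slope $1 + s_{n-1}$ (or $1 + s_{n-2}$ if $j = n-1$), and a short case analysis according to whether $j = 0$, $j = n-1$, or $0 < j < n-1$ shows that for $n \geq 3$ at least one of these lies outside $[s_k , 1 + s_k]$ for every $k \in \Nn$. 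For $\oM_j \not\pp \oM_k$ with $j \neq k$ I would argue by transitivity: $\oA^{(k)} \pp \oM_j$ (since $k \neq j$) while $\oA^{(k)} \not\pp \oM_k$, so $\oM_j \pp \oM_k$ is impossible. Running through all pairs of elements of $S_n$ then shows $\phi$ is an order embedding (in particular injective). The step I expect to be the real obstacle is precisely the design of the $b_j$: a naive direct mixture $\bigoplus_{i} w_i \oA^{(i)}$ is doomed, because at an ensemble for which $\oA^{(l)}$ is strictly more discriminating than every other summand the affine formula of Lemma~\ref{lemm:affinity}, together with the explicit expression~\eqref{eq:mle} and Proposition~\ref{prop:bss}, forces $\oA^{(l)} \not\pp \bigoplus_i w_i \oA^{(i)}$ whenever some $w_i < 1$; one is therefore pushed to realize $\oM_j$ as essentially a minimal upper bound of $\{\oA^{(i)} : i \neq j\}$, and the Lorenz-curve/zonotope description above is exactly what simultaneously keeps $\oM_j$ above all $\oA^{(i)}$ with $i \neq j$ yet incomparable with $\oA^{(j)}$ and with every $\oA^{(k)}$.
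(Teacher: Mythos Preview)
Your construction is correct and takes a genuinely different route from the paper's. The paper stays entirely in the world of direct mixtures: it sets $f(a_j) = \bigl[\tfrac{1}{n}\oA_{s_j,s_j^2}\oplus\tfrac{n-1}{n}\oU\bigr]$ and $f(b_j) = \bigl[\tfrac{1}{n}\oU\oplus\bigoplus_{k\neq j}\tfrac{1}{n}\oA_{s_k,s_k^2}\bigr]$, and then verifies every relation via Proposition~\ref{prop:bss} together with the affinity of $\Pg$ under direct mixtures (Lemma~\ref{lemm:affinity}); after cancellation each comparison reduces to a comparison of single parabola EVMs. The delicate step there is the incomparability of $\oA^{(j)}$ and $\oB^{(j)}$, which in the paper forces the specific geometric choice $s_j=3^{j-n}$ and a page-long conic-combination argument. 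Your zonotope construction is more elementary in several respects: it works for \emph{any} $0<s_0<\dots<s_{n-1}<1$, it never invokes BSS (all four verifications go through the $Z(\oN)$ criterion or the slope-cone argument directly), and the crucial $\oA^{(j)}\not\pp\oM_j$ is an immediate consequence of strict convexity. What the paper's approach buys is economy of tools: the same BSS-plus-affinity machinery is reused throughout the paper, so no new lemma is introduced for this step. One small correction to your closing paragraph: direct mixtures are not in fact ``doomed'' --- the paper shows they succeed once one also dilutes the $a$-images by $\tfrac{n-1}{n}\oU$; your obstruction applies only to the naive variant that leaves $\phi(a_i)=[\oA^{(i)}]$ undiluted.
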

\begin{proof}
We write as $\oU := (u_\cbit) $, which is a trivial EVM on $\Ocbit$.
We define a map $f\colon S_n \to \MfOc$ by
\begin{gather*}
	s_j := 3^{j-n} , \\
	f(a_j) := [\oA^{(j)}], 
	\quad \oA^{(j)} := \frac{1}{n} \oA_{s_j , s_j^2} \oplus \frac{n-1}{n} \oU , \\
	f(b_j) :=[\oB^{(j)}] , \quad 
	\oB^{(j)} :=  \frac{1}{n} \oU \oplus \bigoplus_{k \in \Nn \setminus \{ j \}} \frac{1}{n} \oA_{s_k, s_k^2}
\end{gather*}
$(0 \leq j \leq n-1)$.
We establish the lemma by demonstrating that $f$ is an order embedding.
For this we have only to prove the following assertions:
for every $j, k \in \Nn$ with $j \neq k$
\begin{enumerate}
\item \label{it:lm1}
$\oA^{(j)}$ and $\oA^{(k)}$ are incomparable;
\item \label{it:lm2} 
$\oB^{(j)}$ and $\oB^{(k)}$ are incomparable;
\item \label{it:lm3}
$\oA^{(j)} \pp \oB^{(k)}$;
\item \label{it:lm4}
$\oB^{(k)} \pp \oA^{(j)}$ does not hold;
\item \label{it:lm5}
$\oA^{(j)}$ and $\oB^{(j)}$ are incomparable.
\end{enumerate}

\noindent
\textit{Proof of \eqref{it:lm1}.}
From Proposition~\ref{prop:bss} and Lemma~\ref{lemm:affinity} we have the following equivalences:
\begin{align*}
	&\oAj \pp \oAk  \\
	&\iff  \Pg (\E ; \oAj) \leq \Pg (\E ; \oAk) \quad (\text{$\forall \E$: ensemble})
	\\
	&\iff
	\frac{1}{n}\Pg (\E ; \oA_{s_j , s_j^2}) + \frac{n-1}{n} \Pg (\E; \oU)
	\leq 
	\frac{1}{n}\Pg (\E ; \oA_{s_k , s_k^2}) + \frac{n-1}{n} \Pg (\E; \oU) 
	\quad (\text{$\forall \E$: ensemble})
	\\
	&\iff 
	\Pg (\E ; \oA_{s_j , s_j^2}) \leq \Pg (\E ; \oA_{s_k , s_k^2}) \quad (\text{$\forall \E$: ensemble})
	\\ 
	&\iff  \oA_{s_j , s_j^2} \pp \oA_{s_k , s_k^2},
\end{align*}
but the last condition does not hold by Lemma~\ref{lemm:parabola}.
Hence $\oAj \pp \oAk$ does not hold.
Similarly $\oAk \pp \oAj$ does not hold.

\noindent
\textit{Proof of \eqref{it:lm2}.}
By using Proposition~\ref{prop:bss} and Lemma~\ref{lemm:affinity} we have
\begin{align*}
	&\oBj \pp \oBk  \\
	&\iff  \Pg (\E ; \oBj) \leq \Pg (\E ; \oBk) \quad (\text{$\forall \E$: ensemble})
	\\
	&\iff
	\frac{1}{n} \Pg (\E; \oU)+ \sum_{l \in \Nn \setminus\{ j\}}\frac{1}{n}\Pg (\E ; \oA_{s_l , s_l^2}) 
	\leq 
	\frac{1}{n} \Pg (\E; \oU)+ \sum_{l \in \Nn \setminus\{ k \}}\frac{1}{n}\Pg (\E ; \oA_{s_l , s_l^2}) 
	\quad (\text{$\forall \E$: ensemble})
	\\
	&\iff 
	\Pg (\E ; \oA_{s_k , s_k^2}) \leq \Pg (\E ; \oA_{s_j , s_j^2}) \quad (\text{$\forall \E$: ensemble})
	\\ 
	&\iff  \oA_{s_k , s_k^2} \pp \oA_{s_j , s_j^2},
\end{align*}
but the last condition does not hold by Lemma~\ref{lemm:parabola}.
Hence $\oBj \pp \oBk$ does not hold.
Similarly $\oBk \pp \oBj$ does not hold.

\noindent
\textit{Proof of \eqref{it:lm3}.}
By using Proposition~\ref{prop:bss} and Lemma~\ref{lemm:affinity} we have
\begin{align*}
	&\oAj \pp \oBk \\
	&\iff  \Pg (\E ; \oAj) \leq \Pg (\E ; \oBk) \quad (\text{$\forall \E$: ensemble})
	\\
	&\iff
	\frac{1}{n}\Pg (\E ; \oA_{s_j , s_j^2}) + \frac{n-1}{n} \Pg (\E; \oU)
	\leq \frac{1}{n} \Pg (\E; \oU)+ \sum_{l \in \Nn \setminus\{ k \}}\frac{1}{n}\Pg (\E ; \oA_{s_l , s_l^2}) 
	\quad (\text{$\forall \E$: ensemble})
	\\
	&\iff
	\sum_{l \in \Nn \setminus \{ j,k \}} (\Pg (\E ; \oA_{s_l , s_l^2}) - \Pg (\E; \oU))  \geq 0
	\quad (\text{$\forall \E$: ensemble}).
\end{align*}
The last condition holds because $\oU \pp \oM$ for every EVM $\oM$.
Thus $\oAj \pp \oBk$ holds.

\noindent
\textit{Proof of \eqref{it:lm4}.}
By using Proposition~\ref{prop:bss} and Lemma~\ref{lemm:affinity} we have
\begin{align*}
	&\oBk \pp \oAj \\
	&\iff  \Pg (\E ; \oBk ) \leq \Pg (\E ; \oAj) \quad (\text{$\forall \E$: ensemble})
	\\
	&\iff
	\frac{1}{n} \Pg (\E; \oU)+ \sum_{l \in \Nn \setminus\{ k \}}\frac{1}{n}\Pg (\E ; \oA_{s_l , s_l^2}) \leq \frac{1}{n}\Pg (\E ; \oA_{s_j , s_j^2}) + \frac{n-1}{n} \Pg (\E; \oU)
	\quad (\text{$\forall \E$: ensemble})
	\\
	&\iff
	\sum_{l \in \Nn \setminus \{ j,k \}} (\Pg (\E ; \oA_{s_l , s_l^2}) - \Pg (\E; \oU))  \leq 0
	\quad (\text{$\forall \E$: ensemble}).
\end{align*}
Since $\Pg (\E ; \oA_{s_l , s_l^2}) - \Pg (\E; \oU) \geq 0$, the last condition is equivalent to 
\begin{equation*}
	\Pg (\E ; \oA_{s_l , s_l^2}) = \Pg (\E ; \oU) \quad (\text{$\forall l \in \Nn \setminus \{ j,k \}$; $\forall \E$: ensemble}),
\end{equation*}
and therefore equivalent to $\oA_{s_l , s_l^2} \ppeq \oU $ $(\forall l \in \Nn \setminus \{ j, k \})$.
The EVM $\oA_{1,1}$ is trivial and hence post-processing equivalent to $\oU$.
Since $(s_l , s_l^2)$ is outside the line segment $(0,0)$-$(1,1)$, Lemma~\ref{lemm:para} implies that $\oA_{s_l , s_l^2} \pp \oA_{1,1}$, or equivalently $\oA_{s_l , s_l^2} \pp \oU$, does not hold.
Therefore $\oBk \pp \oAj$ does not hold.

\noindent
\textit{Proof of \eqref{it:lm5}.}
We first assume $\oAj \pp \oBj$ and derive a contradiction.
From the definition of the post-processing relation, the component $ (s_j/n , s_j^2/n)$ of the EVM $\oAj$ can be written as the following conic combination of the components of $\oBj$:
\begin{equation}
	\frac{1}{n}(s_j , s_j^2) 
	= \frac{1}{n}\sum_{l \in \Nn \setminus \{ j \} }  \left[ q_l (s_l  , s_l^2  )  + r_l (1-s_l , 1-s_l^2) \right]
	+\frac{r}{n} (1,1),
	\notag
\end{equation}
where $q_l, r_l , r \in [0,1]$.
This implies
\begin{equation}
	(s_j, s_j^2) 
	= \vec{u} + \vec{v},
	\label{eq:conic1} 
\end{equation}
where 
\begin{gather}
	\vec{u} := \sum_{l : \, 0 \leq l \leq j-1} q_l (s_l , 0), \notag
	\\
	\vec{v} := \sum_{l : \, 0 \leq l \leq j-1} q_l (0,s_l^2) + \sum_{l : \, j+1 \leq l \leq n-1} q_l (s_l , s_l^2)  + \sum_{l \in \Nn \setminus \{ j \}}  r_l (1-s_l , 1-s_l^2) + r (1,1) .
	\label{eq:vdef}
\end{gather}
Since we have
\begin{gather*}
	0 \leq \sum_{l : \, 0 \leq l \leq j-1} q_l s_l 
	\leq \sum_{l=0}^{j-1} s_l
	= \frac{3^{-n} (3^j -1)}{2}
	< \frac{3^{j - n} }{2} = \frac{s_j}{2}, 
\end{gather*}
$\vec{u}$ is in the line segment $L_j := \set{(t,0) | 0 \leq t \leq s_j /2} .$
On the other hand, $\vec{v}$ is in the convex cone 
\begin{align*}
	C_j &:= \set{ (\alpha s_{j+1} , \alpha s_{j+1}^2 + \beta) | \alpha , \beta \in [0,\infty)} \\
	&= \set{(x,y) \in \realn^2 | x \geq 0 , \, y \geq s_{j+1} x}
\end{align*}
generated by $(s_{j+1} , s_{j+1}^2)$ and $(0,1)$, where we put $s_n := 3^{n-n} = 1$.
This can be seen from that all the terms on the RHS of \eqref{eq:vdef} are in $C_j$
and $C_j$ is closed under conic combinations.
Therefore \eqref{eq:conic1} implies that 
\begin{equation}
	(s_j , s_j^2) \in L_j + C_j  = \set{ (x + t , y) | x \geq 0 , \, y \geq s_{j+1}x , \, 0 \leq t \leq s_j /2},
	\notag
\end{equation}
and hence there exists $ 0 \leq t \leq s_j/2$ such that
\begin{equation*}
	s_j^2 \geq s_{j+1} (s_j -t) .
\end{equation*}
This implies 
\begin{equation*}
	\frac{s_j}{2} \geq t \geq  (1 -s_j s_{j+1}^{-1})s_j = (1 - 3^{j-n - (j+1) + n}) s_j = \frac{2}{3}s_j >0,
\end{equation*}
which is a contradiction.
Thus $\oAj \pp \oBj$ does not hold.

We next assume $\oBj \pp \oAj$.
Then from \eqref{it:lm3} we have $\oAk \pp \oBj \pp \oAj$ and hence $\oAk \pp \oAj$.
This contradicts the incomparability of $\oAj$ and $\oAk$ which we have already proved in \eqref{it:lm1}.
Therefore $\oBj \pp \oAj$ does not hold.
\end{proof}

\noindent
\textit{Proof of Theorem~\ref{thm:main1}.\ref{it:thm1.1}.}
From Lemma~\ref{lemm:leq} we have
\begin{equation}
	\dimord (\MfinO , \ordpp) \leq \dimordR (\MfinO , \ordpp).
	\label{eq:mp1}
\end{equation}
From Lemma~\ref{lemm:embedding},
the order embedding results in Lemmas~\ref{lemm:cbit} and \ref{lemm:main1} imply
\begin{equation}
	n = \dimord (S_n , \ordpreceq_n) \leq \dimord (\MfOc, \ordpp) \leq \dimord (\MfinO, \ordpp)
	\label{eq:mp2}
\end{equation}
for each natural number $n \geq 3$.
Then the claim follows from \eqref{eq:mp1} and \eqref{eq:mp2}.
\qed

\subsection{Proof of Theorem~\ref{thm:main1}.\ref{it:thm1.2}} \label{subsec:m1.2}
Now we assume that $V$ is separable in the norm topology.
\begin{lemm} \label{lemm:countable}
There exists a countable family of order monotones that characterizes $(\MfinO , \ordpp) .$
\end{lemm}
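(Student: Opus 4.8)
The plan is to thin out the complete family $\{\Pg(\E;\cdot)\mid \E\text{ an ensemble on }\Omega\}$ furnished by the BSS theorem (Proposition~\ref{prop:bss}) down to a \emph{countable} subfamily, exploiting the fact that for a fixed pair of EVMs the state discrimination probability depends continuously on the ensemble, together with the separability of $V$.

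First I would fix $N\in\natn$ and regard the set of $N$-outcome ensembles, $\mathcal{E}_N := \{ (\rho_k)_{k=0}^{N-1} \in V_+^N \mid \sum_{k=0}^{N-1}\braket{u_\Omega,\rho_k} = 1 \}$, as a metric subspace of $(V^N,\|\cdot\|_\infty)$ with $\|(\rho_k)_k\|_\infty := \max_k\|\rho_k\|$. Since $V$ is norm-separable, so is $V^N$ and hence so is its subspace $\mathcal{E}_N$; I would fix a countable dense subset $\mathcal{D}_N\subseteq\mathcal{E}_N$, put $\mathcal{D}:=\bigcup_{N\in\natn}\mathcal{D}_N$ (countable), and set $\F := \{\Pg(\E;\cdot)\mid \E\in\mathcal{D}\}$. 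Each $\Pg(\E;\cdot)$ descends to a well-defined order monotone on $\MfinO$: if $\oM\pp\oN$ via a Markov matrix, composing that matrix with the one optimized over in \eqref{eq:Pgdef} shows $\Pg(\E;\oM)\le\Pg(\E;\oN)$.

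The one analytic ingredient is a Lipschitz estimate. Using \eqref{eq:mle}, the bound $\|\oM(j)\|\le 1$ (valid since $0\le\oM(j)\le u_\Omega$ and the dual norm on $V^\ast$ is the order-unit norm of $u_\Omega$), and $|\max_k a_k-\max_k b_k|\le\max_k|a_k-b_k|$, I would obtain, for any EVM $\oM=\oMseq$ and $\E,\E'\in\mathcal{E}_N$,
\[
	|\Pg(\E;\oM)-\Pg(\E';\oM)| \le \sum_{j=0}^{m-1}\max_{k}|\braket{\oM(j),\rho_k-\rho'_k}| \le m\,\|\E-\E'\|_\infty ,
\]
so that $\E\mapsto\Pg(\E;\oM)$ is continuous on $\mathcal{E}_N$ for each fixed $\oM$.

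Finally I would check that $\F$ characterizes $(\MfinO,\ordpp)$. If $\Pg(\E;\oM)\le\Pg(\E;\oN)$ holds for all $\E\in\mathcal{D}$, then given an arbitrary ensemble $\E\in\mathcal{E}_N$ I would pick $\E_i\in\mathcal{D}_N$ with $\E_i\to\E$ and pass to the limit using the continuity estimate for both $\oM$ and $\oN$, getting $\Pg(\E;\oM)\le\Pg(\E;\oN)$; since $N$ is arbitrary this holds for every ensemble, hence $\oM\pp\oN$ by Proposition~\ref{prop:bss}. Together with the monotonicity of the members of $\F$ this produces the desired countable complete family. I expect no real obstacle: the separability of $\mathcal{E}_N$ is just hereditary separability of metric spaces and the Lipschitz bound is routine, so the content of the lemma is precisely the combination ``BSS theorem + continuity in the ensemble + density.'' (Combined with Theorem~\ref{thm:main1}.\ref{it:thm1.1} and Lemma~\ref{lemm:leq}, this also yields \eqref{eq:main1}.)
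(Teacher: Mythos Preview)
Your proposal is correct and follows essentially the same route as the paper: pick a countable norm-dense family of ensembles (using separability of $V$), show that $\Pg(\E;\oM)$ depends continuously on $\E$, and invoke the BSS theorem. The only cosmetic differences are that the paper works with the $\ell^1$-type norm $\sum_j\|\rho_j-\rho_j'\|$ on ensembles and derives the continuity estimate from the supremum form~\eqref{eq:Pgdef} (obtaining a bound independent of the number of outcomes of $\oM$), whereas you use the $\ell^\infty$-norm together with the explicit formula~\eqref{eq:mle} (obtaining the $m$-dependent Lipschitz constant, which is harmless since $\oM,\oN$ are fixed in the limiting argument).
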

\begin{proof}
For the proof we explicitly construct a countable family of order monotones that characterizes the post-processing order.
By the separability of $V$, the set of ensembles on $\Omega$ is also separable in the norm topology, which means that
we can take a sequence $\E^{(i)} = (\rho^{(i)}_j)_{j=0}^{N_i-1}$ $(i\in \natn)$ of ensembles on $\Omega$ such that for every ensemble $\E = (\rho_j)_{j=0}^{N-1}$ and every $\epsilon > 0$ there exists some $i \in \natn$ satisfying $N_i = N$ and
\begin{equation}
	\| \mathcal{E} - \E^{(i)} \| := \sum_{j=0}^{N-1} \| \rho_j - \rho_j^{(i)}  \| < \epsilon .
	\notag
\end{equation}
We now prove that the countable family $\{\Pg (\E^{(i)} ; \cdot)\}_{i \in \natn}$ of order monotones characterizes $(\MfinO , \ordpp) .$
For this, from Proposition~\ref{prop:bss}, we have only to prove that for every EVMs $\oM = \oMseq$ and $\oN = \oNseq$ on $\Omega$,
\begin{equation}
	\Pg (\E^{(i)} ; \oM) \leq \Pg (\E^{(i)} ; \oN ) 
	\quad (\forall i \in \natn)
	\label{eq:as1}
\end{equation}
implies
\begin{equation}
	\Pg (\E; \oM) \leq \Pg (\E ; \oN ) 
	\quad (\text{$\forall \E$: ensemble}) .
	\label{eq:con1}
\end{equation}
Assume \eqref{eq:as1}. 
We take an arbitrary ensemble $\E = (\rho_j)_{j=0}^{N-1}$ on $\Omega$ and $\epsilon >0 .$ 
By the density of $\{ \E^{(i)} \}_{i \in \natn}$ there exists some $i \in \natn $ such that $N_i = N$ and $\|\E - \E^{(i)} \| < \epsilon .$
Then from the definitions of $\Pg (\mathcal{E};\oM)$ and $\ordpp$ we have
\begin{align}
	\Pg (\E ; \oM ) 
	&=\sup_{\oA \in \evm (N; \oM )}
	\sum_{j=0}^{N-1} \braket{\oA(j) , \rho_j}
	\notag \\
	&=\sup_{\oA \in \evm (N; \oM )}
	\sum_{j=0}^{N-1} \left( \braket{\oA(j) , \rho_j^{(i)}} + \braket{\oA(j), \rho_j - \rho_j^{(i)}} \right)
	\notag \\
	& \leq \sup_{\oA \in \evm (N; \oM )}
	\sum_{j=0}^{N-1} \left(\braket{\oA(j), \rho_j^{(i)}} + \|\rho_j - \rho_j^{(i)} \| \right)
	\label{eq:der1}
	\\
	&\leq \sup_{\oA \in \evm (N; \oM )}
	\left( 
	\sum_{j=1}^N \braket{\oA(j), \rho_j^{(i)}} 
	+ \epsilon
	\right)
	\notag \\
	&=  \Pg (\E^{(i)}; \oM) + \epsilon  , \label{eq:ineq1}
\end{align}
where in deriving \eqref{eq:der1} we used the inequality
\[
	|\braket{f,x}| \leq \| f \| \|x\| 
	\quad (\text{$f \in V^\ast ,$ $x \in V$}) 
\]
and $\| \oA(j) \| \leq 1 .$
By replacing $\oM,$ $\E, $ and $\E^{(i)}$ in the above argument with $\oN,$ $\E^{(i)} ,$ and $\E,$ respectively, we also obtain
\begin{equation}
	\Pg (\E^{(i)} ; \oN) \leq \Pg (\E ; \oN) +  \epsilon.
	\label{eq:ineq2}
\end{equation}
From \eqref{eq:as1}, \eqref{eq:ineq1}, and \eqref{eq:ineq2}, we have
\begin{equation}
	\Pg (\E ; \oM) \leq \Pg(\E ; \oN) + 2 \epsilon .
	\label{eq:Pgineq}
\end{equation}
Since $\epsilon >0$ is arbitrary, \eqref{eq:Pgineq} implies $\Pg (\E ; \oM) \leq \Pg (\E ; \oN) ,$ which completes the proof. 
\end{proof}
\noindent 
\textit{Proof of Theorem~\ref{thm:main1}.\ref{it:thm1.2}.}
From \eqref{eq:mp1} and Theorem~\ref{thm:main1}.\ref{it:thm1.1}, we have
\begin{equation}
	\aleph_0 \leq \dimord (\MfinO, \ordpp) \leq \dimordR (\MfinO, \ordpp).
	\label{eq:mp3}
\end{equation}
On the other hand, Lemma~\ref{lemm:countable} implies
\begin{equation}
	\dimordR (\MfinO, \ordpp) \leq \aleph_0 .
	\label{eq:mp4}
\end{equation}
Then \eqref{eq:main1} follows from \eqref{eq:mp3} and \eqref{eq:mp4}. \qed

\subsection{Proof of Theorem~\ref{thm:main2}} \label{subsec:m2}
Now we assume that $\cH$ is a complex separable Hilbert space.
\begin{lemm} \label{lemm:embedding2}
$(\Mfin (\DeH{}) , \ordpp )$ is embeddable into $(\fCH , \ordCPpp)$.
\end{lemm}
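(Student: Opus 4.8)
The plan is to represent each finite-outcome measurement on $\DeH$ as a quantum-to-classical channel, with the classical output system realized inside a matrix algebra via the diagonal embedding, and then to check that this turns the post-processing order of measurements into the post-processing order of channels. For an $m$-outcome EVM $\oM = \oMseq$ on $\DeH$, write $a_{jk}$ $(j,k \in \Nm)$ for the matrix entries of $a \in \BCm$ with respect to the standard basis of $\cmplx^m$, and define
\[
	\Gamma_\oM \colon \BCm \to \BH , \qquad \Gamma_\oM (a) := \sum_{j \in \Nm} a_{jj}\, \oM(j) .
\]
Then $\Gamma_\oM$ is unital since $\sum_{j \in \Nm}\oM(j) = u_{\DeH} = \unit_\cH$, and it is completely positive since each summand $a \mapsto a_{jj}\oM(j)$ is a composition of the CP maps $a\mapsto a_{jj}$ and $t\mapsto t\,\oM(j)$; as $\BCm$ is finite-dimensional, $\Gamma_\oM$ is automatically normal. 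Hence $\Gamma_\oM$ is a normal channel with input $\cH$ and output $\cmplx^m$, so it determines an element $[\Gamma_\oM]\in\fCH$, and I claim that $[\oM] \mapsto [\Gamma_\oM]$ is the required order embedding.

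The heart of the proof is the equivalence
\[
	\oM \pp \oN \iff \Gamma_\oM \CPpp \Gamma_\oN
\]
for all finite-outcome EVMs $\oM = \oMseq$ and $\oN = \oNseq$ on $\DeH$. For $\Rightarrow$, choose a Markov matrix $(p(j|k))_{j\in\Nm,\,k\in\Nn}$ with $\oM(j) = \sum_{k\in\Nn}p(j|k)\oN(k)$ and let $\Psi\colon\BCm\to\BCn$ send $a$ to the diagonal matrix whose $k$-th entry is $\sum_{j\in\Nm}p(j|k)\,a_{jj}$. This $\Psi$ factors as the diagonal pinching $\BCm\to\cmplx^m$, the classical (hence CP) channel $\cmplx^m\to\cmplx^n$, $(f_j)_j\mapsto(\sum_j p(j|k)f_j)_k$, and the diagonal $\ast$-embedding $\cmplx^n\hookrightarrow\BCn$, so $\Psi$ is a normal channel, and a direct substitution gives $\Gamma_\oN\circ\Psi = \Gamma_\oM$, i.e.\ $\Gamma_\oM\CPpp\Gamma_\oN$. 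For $\Leftarrow$, take a channel $\Phi\colon\BCm\to\BCn$ with $\Gamma_\oN\circ\Phi=\Gamma_\oM$ (which exists by the definition of $\CPpp$; in finite dimensions normality is automatic), put $p(j|k):=\Phi(E_{jj})_{kk}$ where $E_{jj}\in\BCm$ is the $j$-th diagonal matrix unit, and evaluate $\Gamma_\oN\circ\Phi=\Gamma_\oM$ at $E_{jj}$ to get $\oM(j)=\sum_{k\in\Nn}p(j|k)\oN(k)$. Positivity of $\Phi$ yields $p(j|k)\geq 0$, and $\sum_{j\in\Nm}E_{jj}=\unit_{\cmplx^m}$ together with unitality of $\Phi$ yields $\sum_{j\in\Nm}p(j|k)=1$; thus $(p(j|k))$ is a Markov matrix and $\oM\pp\oN$.

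Granting the equivalence, $\oM\ppeq\oN$ implies $\Gamma_\oM\CPeq\Gamma_\oN$, so $[\oM]\mapsto[\Gamma_\oM]$ is a well-defined map $\Mfin(\DeH)\to\fCH$, and the equivalence further shows $[\oM]\pp[\oN]\iff[\Gamma_\oM]\CPpp[\Gamma_\oN]$; hence this map is an order embedding (in particular injective), which is the assertion of the lemma. I expect the only real care needed to be bookkeeping of conventions: arranging the input/output and Heisenberg/Schr\"odinger choices so that ``$\oM$ is a post-processing of $\oN$'' corresponds to ``$\Gamma_\oM\CPpp\Gamma_\oN$'', and keeping in mind that the channel output algebra is the full matrix algebra $\BCm$, not the commutative $\cmplx^m$ — which is exactly why the diagonal pinching has to be built into the definitions of both $\Gamma_\oM$ and $\Psi$. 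The complete positivity and normality checks are then routine, since everything on the output side is finite-dimensional.
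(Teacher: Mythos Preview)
Your proof is correct and follows essentially the same route as the paper: both associate to each EVM its quantum--classical channel $\Gamma_\oM(a)=\sum_j a_{jj}\oM(j)$ and use the equivalence $\oM\pp\oN\iff\Gamma_\oM\CPpp\Gamma_\oN$ to conclude that $[\oM]\mapsto[\Gamma_\oM]$ is an order embedding. The only difference is that the paper cites this equivalence from the literature, whereas you supply a direct (and correct) verification of both directions.
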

\begin{proof}
For each EVM $\oM = \oMseq \in \evmfH $ we define a normal channel (called the quantum-classical channel) $\Gamma^\oM \colon \mathbf{B}(\cmplx^m) \to \BH$ by
\begin{equation*}
	\Gamma^\oM (a) := \sum_{j=0}^{m-1} \braket{\xi_j^{(m)} | a \xi_j^{(m)}} \oM(j) 
	\quad (a \in \mathbf{B}(\cmplx^m)) ,
\end{equation*}
where $(\xi_j^{(m)})_{j=0}^{m-1} $ is an orthonormal basis of $\cmplx^m .$
It is known that 
\begin{equation}
	\oM \pp \oN \iff \Gamma^\oM \CPpp \Gamma^\oN 
	\label{eq:iff3}
\end{equation}
holds for every finite-outcome EVMs $\oM $ and $\oN  $ on $\DeH $ (\cite{1751-8121-50-13-135302}, Proposition~1).
From \eqref{eq:iff3} it readily follows that the map
\begin{equation*}
	\Mfin (\DeH{}) \ni [\oM] \mapsto [\Gamma^\oM] \in \fCH
\end{equation*}
is a well-defined order embedding, which proves the claim.
\end{proof}

The proof of the following lemma is almost parallel to that of Lemma~\ref{lemm:countable}.

\begin{lemm} \label{lemm:countable2}
There exists a countable family of order monotones that characterizes $(\fCH , \ordCPpp) .$
\end{lemm}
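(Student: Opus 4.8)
The plan is to adapt the proof of Lemma~\ref{lemm:countable} to the channel setting, using Proposition~\ref{prop:qbss} in place of Proposition~\ref{prop:bss}. The order monotones that characterize $(\fCH , \ordCPpp)$ are the functions $\Pg^{(n)} (\E ; \cdot) = \Pg(\E ; \cdot \otimes \id_n)$ indexed by $n \in \natn$ and ensembles $\E$ on $\mathbf{D}(\cH \otimes \cmplx^n)$. The obstacle compared to the EVM case is that there is an extra countable index $n$, and, more seriously, that for each fixed $n$ the space of ensembles lives on $\mathbf{D}(\cH \otimes \cmplx^n)$, whose ambient trace-class space $\TcH \otimes \cmplx^{n\times n}$ we need to know is separable in trace norm. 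Since $\cH$ is separable, $\cH \otimes \cmplx^n$ is separable, hence $\mathbf{T}(\cH \otimes \cmplx^n)$ is separable in the trace norm; this is the key input that lets the density argument go through for each $n$.

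First I would fix, for each $n \in \natn$, a sequence $\E^{(n,i)} = (\rho^{(n,i)}_j)_{j=0}^{N_{n,i}-1}$ $(i \in \natn)$ of ensembles on $\mathbf{D}(\cH \otimes \cmplx^n)$ that is norm-dense in the sense used in Lemma~\ref{lemm:countable}: for every ensemble $\E = (\rho_j)_{j=0}^{N-1}$ on $\mathbf{D}(\cH\otimes\cmplx^n)$ and every $\epsilon > 0$ there is some $i$ with $N_{n,i} = N$ and $\sum_{j} \| \rho_j - \rho^{(n,i)}_j \|_1 < \epsilon$. The existence of such a sequence follows from separability of $\mathbf{T}(\cH\otimes\cmplx^n)$. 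Then I would claim that the countable family $\{ \Pg^{(n)}(\E^{(n,i)} ; \cdot) \}_{n, i \in \natn}$ characterizes $(\fCH , \ordCPpp)$. By Proposition~\ref{prop:qbss} it suffices to show that, for normal channels $\Gamma \colon \BK \to \BH$ and $\Lambda \colon \BJ \to \BH$, the inequalities $\Pg^{(n)}(\E^{(n,i)} ; [\Gamma]) \leq \Pg^{(n)}(\E^{(n,i)} ; [\Lambda])$ for all $n, i$ imply $\Pg(\E ; \Gamma \otimes \id_n) \leq \Pg(\E ; \Lambda \otimes \id_n)$ for all $n$ and all ensembles $\E$ on $\mathbf{D}(\cH\otimes\cmplx^n)$.

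The core estimate is the same continuity argument as in Lemma~\ref{lemm:countable}: fixing $n$, an ensemble $\E = (\rho_j)_{j=0}^{N-1}$, and $\epsilon > 0$, pick $i$ with $N_{n,i} = N$ and $\sum_j \| \rho_j - \rho^{(n,i)}_j \|_1 < \epsilon$; then, writing out the definition \eqref{eq:Pgch} of $\Pg(\E ; \Gamma\otimes\id_n)$ as a supremum over measurements $(\oM(j))_{j=0}^{N-1} \in \evm_N(\mathbf{D}(\mathcal{K}\otimes\cmplx^n))$ of $\sum_j \braket{\oM(j) , (\Gamma\otimes\id_n)_\ast \rho_j}$, and using $\| \oM(j) \| \leq 1$ together with $|\tr(\oM(j) \sigma)| \leq \| \sigma \|_1$, one gets $\Pg(\E ; \Gamma\otimes\id_n) \leq \Pg(\E^{(n,i)} ; \Gamma\otimes\id_n) + \epsilon$ and symmetrically $\Pg(\E^{(n,i)} ; \Lambda\otimes\id_n) \leq \Pg(\E ; \Lambda\otimes\id_n) + \epsilon$. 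Combining with the hypothesis $\Pg(\E^{(n,i)} ; \Gamma\otimes\id_n) \leq \Pg(\E^{(n,i)} ; \Lambda\otimes\id_n)$ gives $\Pg(\E ; \Gamma\otimes\id_n) \leq \Pg(\E ; \Lambda\otimes\id_n) + 2\epsilon$, and letting $\epsilon \to 0$ finishes it. One small point to handle carefully is that in \eqref{eq:Pgch} the relevant quantity depends only on the action of $\Gamma\otimes\id_n$ on effects (equivalently, on the predual applied to the $\rho_j$), so the estimate is insensitive to which representative of $[\Gamma]$ is chosen — i.e. $\Pg^{(n)}(\E ; [\Gamma])$ is well-defined, which is already guaranteed by Proposition~\ref{prop:qbss}. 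The main obstacle, then, is not the analysis but making sure the double-indexed density family is legitimately countable and that separability of $\mathbf{T}(\cH\otimes\cmplx^n)$ is invoked correctly; once that is in place the argument is a routine repetition of Lemma~\ref{lemm:countable}.
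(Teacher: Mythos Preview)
Your proposal is correct and follows essentially the same route as the paper: choose for each $n$ a trace-norm dense sequence of ensembles on $\mathbf{D}(\cH\otimes\cmplx^n)$ using separability, form the countable family $\{\Pg^{(n)}(\E^{(n,i)};\cdot)\}_{n,i\in\natn}$, and run the same $\epsilon$-continuity estimate as in Lemma~\ref{lemm:countable} together with Proposition~\ref{prop:qbss}. The only cosmetic difference is that the paper bounds $\bigl|\braket{(\Gamma\otimes\id_n)(\oM(k)),\rho_k-\rho_k^{(n,i)}}\bigr|$ via $\|(\Gamma\otimes\id_n)(\oM(k))\|\leq 1$ rather than passing to the predual, which is the same inequality read from the other side.
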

\begin{proof}
Since $\cH \otimes \cmplx^n$ is separable for each $n \in \natn$, we can take a dense countable family $\E^{(n,i)} = (\rho_k^{(n, i)})_{k=0}^{N_{n,i}-1}$ $(i\in \natn)$ of ensembles on $\mathbf{D}(\cH \otimes \cmplx^n)$ in the sense that for every ensemble $\E = (\rho_k)_{k=0}^{m-1}$ on $\mathbf{D}(\cH \otimes \cmplx^n)$ and every $\epsilon >0$ there exists $i \in \natn$ such that $N_{n,i} = m$ and 
\begin{equation}
	\| \E - \E^{(n,i)} \|_1
	:=\sum_{k=0}^{m-1} \|  \rho_k  - \rho_k^{(n,i)} \|_1 < \epsilon .
	\notag
\end{equation}
We establish the lemma by demonstrating that the countable family $\{\Pg^{(n)} (\E^{(n,i)} ; \cdot )  \}_{n,i\in \natn}$ characterizes $(\fCH , \ordCPpp) .$
From Proposition~\ref{prop:qbss}, it suffices to show that 
\begin{equation}
	\Pg (\E^{(n,i)} ; \Gamma  \otimes \id_n) \leq \Pg (\E^{(n,i)} ; \Lambda \otimes \id_n )
	\quad (\forall i \in \natn)
	\label{eq:c1}
\end{equation}
implies
\begin{equation}
	\Pg (\E  ; \Gamma \otimes \id_n ) \leq \Pg (\E  ; \Lambda \otimes \id_n )
	\quad (\text{$\forall \E$: ensemble on $\mathbf{D}(\cH \otimes \cmplx^n)$})
	\label{eq:c2}
\end{equation}
for every $n \in \natn $ and every channels $\Gamma \colon \BK \to \BH$ and $\Lambda \colon \BJ \to \BH $.

Assume \eqref{eq:c1}.
Let $\E = (\rho_k)_{k=0}^{m-1}$ be an arbitrary ensemble on $\mathbf{D}(\cH \otimes \cmplx^n)$.
For any $\epsilon >0$ we can take $i\in \natn$ such that $N_{n,i} = m$ and $\| \E - \E^{(n,i)} \|_1 < \epsilon $.
Then we have
\begin{align}
	&\Pg (\E ; \Gamma \otimes \id_n) \notag \\
	&= \sup_{\oM \in \evm_m (\mathbf{D}(\cK \otimes \cmplx^n))} 
	\sum_{k=0}^{m-1} \braket{ (\Gamma \otimes \id_n) (\oM (k)) , \rho_k }
	\notag \\
	&=\sup_{\oM \in \evm_m (\mathbf{D}(\cK \otimes \cmplx^n))} 
	\sum_{k=0}^{m-1} \left( \braket{ (\Gamma \otimes \id_n) (\oM (k)) , \rho_k^{(n,i)}}
	+ \braket{ (\Gamma \otimes \id_n) (\oM(k)), \rho_k - \rho_k^{(n,i)}}
	\right)
	\notag \\
	&\leq 
	\sup_{\oM \in \evm_m (\mathbf{D}(\cK \otimes \cmplx^n))} 
	\sum_{k=0}^{m-1} \left( \braket{ (\Gamma \otimes \id_n) (\oM (k)), \rho_k^{(n,i)}}	+ \| \rho_k - \rho_k^{(n,i)} \|_1 \| (\Gamma \otimes \id_n) (\oM(k)) \|  
	\right)
	\notag \\
	&\leq 
	\sup_{\oM \in \evm_m (\mathbf{D}(\cK \otimes \cmplx^n))} 
	\left( \sum_{k=0}^{m-1}  \braket{ (\Gamma \otimes \id_n) (\oM (k)), \rho_k^{(n,i)}}	
	\right) + \epsilon  
	\label{eq:c2.5} \\
	&=
	\Pg (\E^{(n,i)} ; \Gamma \otimes \id_n) + \epsilon ,
	\label{eq:c3}
\end{align}
where we used $ \| (\Gamma \otimes \id_n) (\oM(k)) \| \leq 1$ in deriving \eqref{eq:c2.5}.
By replacing $\E $, $\E^{(n,i)}$, and $\Gamma$ in the above argument with $\E^{(n,i)}$, $\E$, and $\Lambda$, respectively, we also obtain 
\begin{equation}
	\Pg (\E^{(n,i)} ; \Lambda \otimes \id_n) \leq \Pg (\E ; \Lambda \otimes \id_n) + \epsilon .
	\label{eq:c4}
\end{equation}
From \eqref{eq:c1}, \eqref{eq:c3}, and \eqref{eq:c4}, we have
\begin{equation*}
	\Pg (\E ; \Gamma \otimes \id_n) \leq \Pg (\E ; \Lambda \otimes \id_n) + 2\epsilon .
\end{equation*}
Since $\epsilon >0$ is arbitrary, this implies $\Pg (\E ; \Gamma \otimes \id_n) \leq \Pg (\E ; \Lambda \otimes \id_n),$ which completes the proof of \eqref{eq:c1}$\implies$\eqref{eq:c2}.
\end{proof}
\noindent 
\textit{Proof of Theorem~\ref{thm:main2}.}
Since $\TsaH$ is separable in the trace norm topology and $\dim \TsaH \geq 4$, the first claim~\eqref{eq:main2-1} follows from Theorem~\ref{thm:main1}.\ref{it:thm1.2}.

Now we prove the second claim~\eqref{eq:main2-2}.
Lemma~\ref{lemm:leq} implies
\begin{equation}
	\dimord (\fCH , \ordCPpp) \leq \dimordR (\fCH , \ordCPpp) 
	\label{eq:prf2} .
\end{equation}
From Lemmas~\ref{lemm:embedding} and \ref{lemm:embedding2} and \eqref{eq:main2-1} we have
\begin{equation}
	\aleph_0 = \dimord (\Mfin (\DeH) , \ordpp) \leq \dimord (\fCH, \ordCPpp ).
	\label{eq:prf3} 
\end{equation}
From Lemma~\ref{lemm:countable2} we also have
\begin{equation}
	\dimordR  (\fCH , \ordCPpp) \leq \aleph_0 .
	\label{eq:prf2-2}
\end{equation}
Then \eqref{eq:main2-2} follows from \eqref{eq:prf2}, \eqref{eq:prf3}, and \eqref{eq:prf2-2}.
\qed

\section{Conclusion} \label{sec:conclusion}
In this paper we have evaluated the order and order monotone dimensions of the post-processing orders of measurements on an arbitrary non-trivial GPT $\Omega$ (Theorem~\ref{thm:main1}) and of quantum channels with a fixed input Hilbert space (Theorem~\ref{thm:main2}).
We found that all of these order dimensions are infinite.
Our results reveal that the post-processing order of measurements or quantum channels is qualitatively more complex than any order with a finite dimension, such as the adiabatic accessibility relation in thermodynamics or the LOCC convertibility relation of finite-dimensional bipartite pure states.

In the crucial step of the proof, we have explicitly constructed an order embedding from the standard example $(S_n , \ordpreceq_n)$ of an $n$-dimensional poset into the poset $(\MfOc , \ordpp)$ of the equivalence classes of finite-outcome EVMs on the classical bit space $\Ocbit$ for every $n \geq 3$ (Lemma~\ref{lemm:main1}).
We also note that the BSS-type theorems (Propositions~\ref{prop:bss} and \ref{prop:qbss}) played important role in the proofs of Lemmas~\ref{lemm:main1}, \ref{lemm:countable}, and \ref{lemm:countable2}.

As mentioned in the introduction, we can find many other important non-total orders in physics and quantum information, especially in quantum resource theories.
The present work is just the first step to evaluate the dimensions of these kinds of orders and it would be an interesting future work to investigate other orders in quantum information theory from the standpoint of the order dimension.

\ack
This work was supported by JSPS KAKENHI Grant No.~JP22K13977.

\appendix

\section{Proof of Proposition~\ref{prop:bss}} \label{app:bssevm}
In this appendix, we prove the BSS theorem for EVMs (Proposition~\ref{prop:bss}).

For the proof we use the following weak$\ast$ topology of $(V^\ast)^n$.
\begin{defi} \label{def:wstop}
Let $(V, \| \cdot \| )$ be a Banach space and let $n \in \natn$ be a natural number.
The product linear space $V^n$ equipped with the norm
\begin{equation*}
	\| (x_k)_{k=0}^{n-1} \| := \sum_{k=0}^{n-1} \|x_k \| 
	\quad ((x_k)_{k=0}^{n-1} \in V^n)
\end{equation*}
is a Banach space.
We can and do identify the continuous dual $(V^n)^\ast$ with the product linear space $(V^\ast)^n$ by the duality 
\begin{equation*}
	\braket{(f_k)_{k=0}^{n-1} , (x_k)_{k=0}^{n-1}} 
	:= \sum_{k=0}^{n-1} \braket{f_k , x_k}
	\quad ((x_k)_{k=0}^{n-1} \in V^n ,\, (f_k)_{k=0}^{n-1} \in (V^\ast)^n).
\end{equation*}
By this identification the uniform norm on $(V^\ast)^n$ is given by 
\begin{equation*}
	\| (f_k)_{k=0}^{n-1} \| := \max_{0 \leq k \leq n-1} \| f_k \| 
	\quad ((f_k)_{k=0}^{n-1} \in (V^\ast)^n).
\end{equation*}
The weak$\ast$ topology~\cite{dunfordschwartzvol1,schaefer1999topological} $\sigma ((V^\ast)^n, V^n)$ on $(V^\ast)^n$ is the weakest topology on $(V^\ast)^n$ such that the linear functional $(V^\ast)^n \ni (f_k)_{k=0}^{n-1} \mapsto \braket{(f_k)_{k=0}^{n-1} , (x_k)_{k=0}^{n-1}} \in \realn$ is continuous for all $(x_k)_{k=0}^{n-1} \in V^n$.
The weak$\ast$ topology is a locally convex Hausdorff topology~\cite{dunfordschwartzvol1,schaefer1999topological} on $(V^\ast)^n$.
For every weakly$\ast$ continuous linear functional $\xi \colon (V^\ast)^n \to \realn$ there exists a unique $(x_k)_{k=0}^{n-1} \in V^n$ such that 
\begin{equation*}
	\xi ((f_k)_{k=0}^{n-1}) = \braket{(f_k)_{k=0}^{n-1} , (x_k)_{k=0}^{n-1}}  
	\quad ((f_k)_{k=0}^{n-1} \in (V^\ast)^n).
\end{equation*}
\end{defi}

\noindent
\textit{Proof of Proposition~\ref{prop:bss}.}
We write as $\oM = \oMseq$ and $\oN = \oNseq$.
First assume $\oM \pp \oN$.
Then by the transitivity of $\pp$ we have $\evm (N; \oM) \subseteq \evm (N ; \oN)$ for every natural number $N$.
Therefore for every ensemble $\E = (\rho_l)_{l=0}^{N-1}$ we have
\begin{equation*}
	\Pg (\E ; \oM) 
	= \sup_{\oA \in \evm (N;\oM)} \sum_{l=0}^{N-1} \braket{\oA(l) , \rho_l}
	\leq \sup_{\oA \in \evm (N;\oN)} \sum_{l=0}^{N-1} \braket{\oA(l) , \rho_l}
	= \Pg (\E; \oN),
\end{equation*}
which proves the \lq\lq{}only if\rq\rq{} part of the claim.

To show the converse implication, we assume that $\oM \pp \oN$ does not hold and find an ensemble $\E$ satisfying $\Pg (\E ; \oM) > \Pg (\E;\oN)$.
From the definition of $\evm (m ; \oN)$ the assumption implies $\oM \not\in \evm (m ; \oN)$.
Since $\evm (m; \oN)$ is the image of the weakly$\ast$ continuous affine map
\begin{equation*}
	\mathbf{Markov}(m,n) \ni (p(j|k))_{j\in \Nm , k \in \Nn} \mapsto 
	\left( \sum_{k=0}^{n-1} p(j|k) \oN(k) \right)_{j=0}^{m-1} 
	\in (V^\ast)^m,
\end{equation*}
where $\mathbf{Markov}(m,n) \subseteq \realn^{m\times n}$ is the set of $(m\times n)$-Markov matrices, and $\mathbf{Markov}(m,n)$ is a compact convex set, 
$\evm (m;\oN)$ is a weakly$\ast$ compact convex subset of $(V^\ast)^m$.
Therefore by the Hahn-Banach separation theorem~\cite{dunfordschwartzvol1,schaefer1999topological} there exists an element $(v_j)_{j=0}^{m-1} \in V^m $ such that 
\begin{equation}
	\sum_{j=0}^{m-1} \braket{\oM(j) , v_j} 
	> \sup_{\oA \in \evm (m;\oN)} \sum_{j=0}^{m-1} \braket{\oA(j) , v_j}.
	\label{eq:sep1}
\end{equation}
Since the positive cone $V_+$ generates $V$, for each $0 \leq j \leq m-1$ we may write as $v_j = v_j^+ - v_j^-$ for some $v_j^\pm \in V_+$.
We define an ensemble $\E = (\rho_j)_{j=0}^{m-1}$ by 
\begin{gather*}
	v^- := \sum_{j=0}^{m-1} v_j^-, \\
	c:= \sum_{j=0}^{m-1}\braket{u_\Omega , v_j +v^-}
	= \sum_{j=0}^{m-1} \| v_j^+ \| + (m-1) \sum_{j=0}^{m-1}\| v_j^-\| ,
	\\
	\rho_j := c^{-1} \left(v_j + v^-   \right) =c^{-1} \left(v_j^+ + \sum_{k \in \Nm \setminus \{ j\} } v^-_k \right) \in V_+.
\end{gather*}
Here $c>0$ holds since \eqref{eq:sep1} implies that at least one $v_j^\pm$ is non-zero.
Note also that $m \geq 2$ holds since $m=1$ implies $\oM = (u_\Omega)  \pp \oN$.
Then we have 
\begin{align*}
	\Pg (\E;\oM) &\geq \sum_{j=0}^{m-1} \braket{\oM(j) , \rho_j} 
	\quad (\because \oM \in \evm (m;\oM))
	\\
	&= c^{-1} \sum_{j=0}^{m-1} \braket{\oM(j) , v_j} 
	+c^{-1} \sum_{j=0}^{m-1} \braket{\oM(j) , v^-}
	\\
	&= c^{-1} \sum_{j=0}^{m-1} \braket{\oM(j) , v_j} 
	+c^{-1} \braket{u_\Omega , v^-}
	\\
	&> c^{-1} \sup_{\oA \in \evm(m;\oN)} \sum_{j=0}^{m-1} \braket{\oA(j) , v_j} 
	+c^{-1} \braket{u_\Omega , v^-} \quad (\because \eqref{eq:sep1})
	\\
	&= \sup_{\oA \in \evm (m;\oN)} \sum_{j=0}^{m-1} \braket{\oA(j) , \rho_j} 
	\\
	&= \Pg (\E ; \oN),
\end{align*}
which completes the proof. \qed

\begin{rem} \label{rem:5}
As we can see from the construction of $\E$ in the \lq\lq{}if\rq\rq{} part of the above proof, for $m$-outcome EVM $\oM$, $\oM \pp \oN$ holds if and only if $\Pg (\E; \oM) \leq \Pg (\E ; \oN)$ for every $m$-element ensemble $\E$ \cite{Guff_2021}.
\end{rem}

\section{Proof of Proposition~\ref{prop:qbss}} \label{app:bss}
In this appendix we prove Proposition~\ref{prop:qbss}.
The proof is based on the following BSS-type theorem for channels.

\begin{lemm}[\cite{luczak2019}, Proposition~2] \label{lemm:luczak}
Let $\Gamma \colon \BK \to \BH$ and $\Lambda \colon \BJ \to \BH$ be normal unital positive maps.
Then the following conditions are equivalent.
\begin{enumerate}[(i)]
\item
$\Pg (\E ; \Gamma) \leq \Pg (\E ; \Lambda) $ holds for every ensemble $\E$ on $\DeH{}$.
\item
For every $m \in \natn$ and every EVM $\oM = \oMseq$ on $\DeK{} $, there exists an EVM $\oN = (\oN(j))_{j=0}^{m-1}$ on $\mathbf{D} (\cJ)$ such that
\begin{equation*}
	\Gamma (\oM(j)) = \Lambda (\oN(j)) \quad (j \in \Nm) .
\end{equation*}
\end{enumerate}
\end{lemm}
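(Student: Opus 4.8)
I would prove the equivalence as a Blackwell--Sherman--Stein (randomization) criterion, establishing the trivial implication (ii)$\Rightarrow$(i) directly and the substantive implication (i)$\Rightarrow$(ii) by a weak-$\ast$ separation argument. For (ii)$\Rightarrow$(i), fix an ensemble $\E=(\rho_k)_{k=0}^{n-1}$ on $\DeH$ and an arbitrary EVM $\oM\in\evm_n(\DeK)$. Condition~(ii) supplies an EVM $\oN\in\evm_n(\mathbf{D}(\cJ))$ with $\Gamma(\oM(k))=\Lambda(\oN(k))$ for all $k$, so that $\sum_{k}\tr(\rho_k\Gamma(\oM(k)))=\sum_k\tr(\rho_k\Lambda(\oN(k)))\leq\Pg(\E;\Lambda)$. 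Taking the supremum over $\oM$ gives $\Pg(\E;\Gamma)\leq\Pg(\E;\Lambda)$, which is~(i).

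The main work is (i)$\Rightarrow$(ii). Fix $m$ and an EVM $\oM=(\oM(j))_{j=0}^{m-1}$ on $\cK$ and set $b_j:=\Gamma(\oM(j))$; since $\Gamma$ is unital and positive, $(b_j)_{j=0}^{m-1}$ is an EVM on $\cH$. I would introduce the convex set
\[
C:=\left\{(\Lambda(\oN(j)))_{j=0}^{m-1}\;\middle|\;\oN\in\evm_m(\mathbf{D}(\cJ))\right\}\subseteq(\BsaH)^m,
\]
and observe that condition~(ii) for this $\oM$ is precisely the assertion $(b_j)_{j=0}^{m-1}\in C$. The crucial structural point is that $C$ is compact in the product of weak-$\ast$ (ultraweak) topologies: the effect interval $\{a\mid 0\le a\le\unit_\cJ\}$ is weak-$\ast$ compact by Banach--Alaoglu, the normalization $\sum_j\oN(j)=\unit_\cJ$ is a weak-$\ast$ closed constraint, so $\evm_m(\mathbf{D}(\cJ))$ is weak-$\ast$ compact, and normality of $\Lambda$ (ultraweak continuity) makes its componentwise image $C$ compact, hence closed. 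This is where normality is indispensable, since $C$ need not be norm-closed.

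Assume $(b_j)\notin C$. Separating the compact convex singleton $\{(b_j)\}$ from the closed convex set $C$ by an ultraweakly continuous functional, and using that the weak-$\ast$ dual of $\BsaH$ is $\TsaH$, I obtain $(\rho_j)_{j=0}^{m-1}\in(\TsaH)^m$ with
\[
\sum_j\tr(\rho_j b_j)>\sup_{\oN\in\evm_m(\mathbf{D}(\cJ))}\sum_j\tr(\rho_j\Lambda(\oN(j))).
\]
Since the $\rho_j$ are only self-adjoint, the remaining step is to turn them into an ensemble. Put $\omega:=\sum_l\rho_l^-$, the sum of negative parts, a positive trace-class operator; then $\rho_j+\omega\ge0$ for every $j$, and because both $\Gamma$ and $\Lambda$ are unital while $(b_j)$ and each $\oN$ are EVMs, the substitution $\rho_j\mapsto\rho_j+\omega$ shifts both sides of the displayed inequality by the \emph{same} constant $\tr(\omega)$, so strictness persists. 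Normalizing by $T:=\sum_j\tr(\rho_j+\omega)$, which is strictly positive since otherwise both sides would vanish, yields an ensemble $\E=(\sigma_j)$ with $\sigma_j:=(\rho_j+\omega)/T$ and $\Pg(\E;\Gamma)\ge\sum_j\tr(\sigma_j b_j)>\sup_{\oN}\sum_j\tr(\sigma_j\Lambda(\oN(j)))=\Pg(\E;\Lambda)$, contradicting~(i). Hence $(b_j)\in C$, which is exactly~(ii).

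The hard part is the infinite-dimensional geometry of the converse. One must work in the weak-$\ast$ topology so that normality of $\Lambda$ delivers compactness and hence closedness of $C$ (which fails in norm), and then carry out the positivity reduction so that an arbitrary self-adjoint separating functional becomes a normalized ensemble. Unitality of both maps is exactly what makes the correcting shift by the negative parts cost-free on both sides of the separation inequality; in finite dimensions these points are routine, but in the present normal, possibly infinite-dimensional setting they constitute the substance of the proof.
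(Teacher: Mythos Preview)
The paper does not actually prove this lemma: it is quoted verbatim from \L{}uczak's paper (Proposition~2 therein) and used as a black box in the proof of Proposition~\ref{prop:qbss}. So there is no ``paper's own proof'' to compare against.

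That said, your argument is correct and is the standard Blackwell--Sherman--Stein separation route. The easy direction (ii)$\Rightarrow$(i) is fine. For (i)$\Rightarrow$(ii), your key ingredients are all in order: ultraweak compactness of $\evm_m(\mathbf{D}(\cJ))$ via Banach--Alaoglu plus the closed normalization constraint; normality of $\Lambda$ pushing this forward to compactness (hence closedness) of $C$; Hahn--Banach separation in the locally convex space $(\BsaH)^m$ with the product weak-$\ast$ topology, whose continuous dual is $(\TsaH)^m$; and the shift $\rho_j\mapsto\rho_j+\omega$ with $\omega=\sum_l\rho_l^-$, which indeed adds the same constant $\tr(\omega)$ to both sides precisely because $\Gamma$ and $\Lambda$ are unital and $(b_j)$, $(\oN(j))$ are EVMs. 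The positivity check $\rho_j+\omega=\rho_j^+ +\sum_{l\neq j}\rho_l^-\ge 0$ is correct, and $T>0$ follows as you say since $T=0$ would force the shifted inequality to read $0>0$. This is essentially the argument one finds in \L{}uczak's paper and in the classical statistical-experiment literature, so you have reconstructed the intended proof.
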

\noindent
\textit{Proof of Proposition~\ref{prop:qbss}.}
Assume \eqref{it:qbss1} and take a normal channel $\Psi \colon \BK \to \BJ$ such that $\Gamma = \Lambda \circ \Psi .$
Then for every $n \in \natn$ and every ensemble $\E = (\rho_j)_{j=0}^{m-1}$ on $\mathbf{D} (\cH \otimes \cmplx^n)$ we have
\begin{align*}
	\Pg (\E ; \Gamma \otimes \id_n )
	&= \sup_{\oM \in  \evm_m (\mathbf{D} (\cK \otimes \cmplx^n) )} \sum_{j=0}^{m-1} \tr [ \rho_j (\Gamma \otimes \id_n ) (\oM(j))  ] 
	\\
	&= \sup_{\oM \in  \evm_m (\mathbf{D} (\cK \otimes \cmplx^n) )} \sum_{j=0}^{m-1} \tr [ \rho_j (\Lambda \circ \Psi \otimes \id_n ) (\oM(j))  ]
	\\
	&=\sup_{\oM \in  \evm_m (\mathbf{D} (\cK \otimes \cmplx^n) )} \sum_{j=0}^{m-1} \tr [ \rho_j (\Lambda  \otimes \id_n ) \circ (\Psi \otimes \id_n) (\oM(j))  ]
	\\
	& \leq \Pg (\E ; \Lambda \otimes \id_n) ,
\end{align*}
where the last inequality follows because $((\Psi \otimes \id_n) (\oM(j)) )_{j=0}^{m-1}$ is an EVM on $\mathbf{D} (\cJ \otimes \cmplx^n )$.
Thus \eqref{it:qbss2} holds.

Conversely assume \eqref{it:qbss2}.
We first prove $\Gamma \CPpp \Lambda$ when $d := \dim \cK $ is finite.
We take orthonormal bases $(\xi_j)_{j\in \Nd}$ and $(e_j)_{j\in \Nd}$ of $\cK$ and  $\cmplx^d$, respectively.
We define the following maximally entangled vectors
\begin{equation}
	\eta_{k,m} := \frac{1}{\sqrt{d}}\sum_{j \in \Nd} e^{\frac{2\pi i}{d}jm}  \xi_j \otimes e_{j+k}
	\quad (k, m \in \Nd) ,
	\label{eq:bell}
\end{equation}
where the term $j + k$ means the sum modulo $d$.
Then $(\eta_{k,m})_{k,m \in \Nd}$ is an orthonormal basis of $\cK \otimes \cmplx^d$
and therefore $(\ket{\eta_{k,m}} \bra{\eta_{k,m}})_{k,m \in \Nd}$ is an EVM on $\mathbf{D} (\cK \otimes \cmplx^d )$.
From the assumption \eqref{it:qbss2} and Lemma~\ref{lemm:luczak}, there exists an EVM 
$(\widetilde{\oM}(k,m))_{k,m \in \Nd}$ on $\mathbf{D} ( \cJ \otimes \cmplx^d)$ such that
\begin{equation}
	(\Gamma \otimes \id_d ) (\ket{\eta_{k,m}} \bra{\eta_{k,m}}) 
	= (\Lambda \otimes \id_d) (\widetilde{\oM}(k,m))
	\quad 
	(k, m \in \Nd) .
	\label{eq:qbss1}
\end{equation}
From \eqref{eq:bell}, the LHS of \eqref{eq:qbss1} is evaluated to be
\begin{equation*}
	\frac{1}{d} \sum_{j,j^\prime \in \Nd}
	e^{\frac{2\pi i}{d} (j-j^\prime)m} \Gamma (\ket{\xi_j} \bra{\xi_{j^\prime}})
	\otimes \ket{e_{j+k}} \bra{e_{j^\prime + k}} .
\end{equation*}
On the other hand, if we write as 
\begin{equation*}
	\widetilde{\oM}(k,m) 
	=: \sum_{j, j^\prime \in \Nd} M^{(k,m)}_{j, j^\prime} \otimes \ket{e_j} \bra{e_{j^\prime}},
\end{equation*}
then the RHS of \eqref{eq:qbss1} is given by
\begin{equation*}
	\sum_{j, j^\prime \in \Nd} \Lambda (M^{(k,m)}_{j,j^\prime}) \otimes  \ket{e_j} \bra{e_{j^\prime}} .
\end{equation*}
Therefore \eqref{eq:qbss1} implies
\begin{equation}
	\frac{1}{d^2} \Gamma (\ket{\xi_{j}} \bra{\xi_{j^\prime}}) 
	=\frac{1}{d} 
	e^{\frac{2\pi i}{d}(j^\prime -j)m} \Lambda (M^{(k,m)}_{j+k, j^\prime +k})
	\quad
	(j,j^\prime ,k,m \in \Nd).
	\label{eq:qbss2}
\end{equation}
By taking the summation over $k,m \in \Nd$ in \eqref{eq:qbss2}, we obtain 
\begin{equation}
	\Gamma (\ket{\xi_{j}} \bra{\xi_{j^\prime}})  
	= \Lambda \circ \Phi (\ket{\xi_{j}} \bra{\xi_{j^\prime}}) ,
	\label{eq:qbss3}
\end{equation}
where $\Phi \colon \BK \to \BJ$ is a linear map defined by
\begin{equation}
	\Phi (\ket{\xi_{j}} \bra{\xi_{j^\prime}})
	:=\frac{1}{d} \sum_{k,m \in \Nd} e^{\frac{2\pi i}{d}(j^\prime -j)m} M^{(k,m)}_{j+k, j^\prime +k} .
	\label{eq:Phidef}
\end{equation}
We show that $\Phi$ is a unital CP map, from which $\Gamma  =\Lambda \circ \Phi \CPpp \Lambda$ follows.
From the completeness condition $\sum_{k,m\in \Nd} \widetilde{\oM}(k,m)  = \unit_{\cJ \otimes \cmplx^d} $ we have
\begin{equation}
	\sum_{k,m\in \Nd} M^{(k,m)}_{j,j^\prime} = \delta_{j,j^\prime} \unit_{\cJ}
	\quad (j,j^\prime \in \Nd),
	\label{eq:qbsssum}
\end{equation}
where $\delta$ denotes the Kronecker delta.
From \eqref{eq:Phidef} and \eqref{eq:qbsssum} we obtain
\begin{align*}
	\Phi (\unit_{\cK} )
	&= \sum_{j\in \Nd} \Phi (\ket{\xi_j} \bra{\xi_j})
	\\
	&=
	\frac{1}{d} 
	\sum_{j, k , m \in \Nd} M^{(k,m)}_{j+k, j +k}
	\\
	&=
	\frac{1}{d} 
	\sum_{j^\prime, k , m \in \Nd} M^{(k,m)}_{j^\prime, j^\prime}
	\\
	&=
	\frac{1}{d} 
	\sum_{j^\prime \in \Nd} \unit_{\cJ}
	\\
	&= \unit_{\cJ} .
\end{align*}
Thus $\Phi$ is unital.
From the positive semi-definiteness of $\widetilde{\oM}(k,m) ,$ the matrix
$( e^{\frac{2\pi i}{d}(j^\prime -j)m} M^{(k,m)}_{j+k, j^\prime +k})_{j,j^\prime \in \Nd}$ is positive semi-definite and hence the Choi matrix $(\Phi(\ket{\xi_j} \bra{\xi_{j^\prime}}))_{j,j^\prime \in \Nd}$ of $\Phi$ is also positive semi-definite by \eqref{eq:Phidef}.
Thus $\Phi$ is CP, which completes the proof of \eqref{it:qbss2}$\implies$\eqref{it:qbss1} when $ \cK$ is finite-dimensional.

Now we consider general $\cK .$
We take an orthonormal basis $(\zeta_i)_{i \in I}$ of $\cK$ and denote by $\FI$ the family of finite subsets of the index set $I$, which is directed by the set inclusion relation.
For each finite subset $F \in \FI $ we define projections
\begin{equation*}
	P_F := \sum_{i \in F} \ket{\zeta_i} \bra{\zeta_i} , \quad P_F^\perp := \unit_{\cK} - P_F
\end{equation*}
and a channel
\begin{equation*}
	\Gamma_F \colon \BKF \ni a \mapsto \Gamma (a + \tr (\rho_F a)  P_F^\perp ) \in \BH ,
\end{equation*}
where $\rho_F$ is a fixed density operator on $P_F \cK .$
Since 
\begin{equation*}
	\BKF \ni a \mapsto a + \tr (\rho_F a)  P_F^\perp \in \BK
\end{equation*}
is a channel, we have $\Gamma_F \CPpp \Gamma .$
Therefore, from the implication \eqref{it:qbss1}$\implies$\eqref{it:qbss2}, 
for every $n \in \natn $ and every ensemble $\E$ on $\mathbf{D} (\cH \otimes \cmplx^n)$ we obtain
\begin{equation}
	\Pg (\E ; \Gamma_F \otimes \id_n) \leq \Pg (\E ; \Gamma \otimes \id_n).
	\label{eq:qbssleq1}
\end{equation}
From the assumption \eqref{it:qbss2} we also have
\begin{equation}
	\Pg (\E ; \Gamma  \otimes \id_n) \leq \Pg (\E ; \Lambda \otimes \id_n) .
	\label{eq:qbssleq2}
\end{equation}
From \eqref{eq:qbssleq1} and \eqref{eq:qbssleq2} we obtain
\begin{equation*}
	\Pg (\E ; \Gamma_F \otimes \id_n) \leq
	 \Pg (\E ; \Lambda \otimes \id_n) .
\end{equation*}
Since $P_F \cK$ is finite-dimensional, from what we have shown in the last paragraph, the relation $\Gamma_F \CPpp \Lambda$ holds.
Thus for each $F\in \FI $ there exists a channel $\Xi_F \colon \BKF \to \BJ$ such that $\Gamma_F = \Lambda \circ \Xi_F$.
Then we have
\begin{equation}
	\Gamma_F  (P_F a P_F ) = \Lambda \circ \Xi_F (P_Fa P_F) 
	\label{eq:qbss4}
\end{equation}
for each $a\in \BH$.
Since $P_F a P_F \xrightarrow[F\in \FI]{\mathrm{uw}}  a$, $P_F^\perp  \xrightarrow[F\in \FI]{\mathrm{uw}} 0$, and $|\tr(\rho_FP_F a P_F )| \leq \| a \| $,
the normality of $\Gamma$ implies
\begin{align}
	\Gamma_F (P_F a P_F ) 
	&= \Gamma (P_F a P_F  )+ \tr(\rho_F P_F a P_F )\Gamma ( P_F^\perp) 
	\notag \\
	&\xrightarrow[F\in \FI]{\mathrm{uw}} \Gamma (a). 
	\label{eq:limit}
\end{align}
On the other hand, since $\| \Xi_F (P_F a P_F) \| \leq \| a \| , $
from the ultraweak compactness of a closed ball of $\BJ$ (Banach-Alaoglu theorem) and Tychonoff\rq{}s theorem, there exists a subnet $(\Xi_{F(\alpha)})_{\alpha \in A}$ such that the ultraweak limit 
\begin{equation}
	\uwlim_{\alpha \in A} \Xi_{F(\alpha)} (P_{F(\alpha)} a P_{F(\alpha)}   ) =: \Xi (a)
	\notag
\end{equation} 
exists for every $a\in \BK .$
Since the map
\begin{equation*}
	\BK \ni a \mapsto \Xi_{F } (P_{F } a P_{F }   ) \in \BJ
\end{equation*}
is a channel for each $F \in \FI$, so is $\Xi \colon \BK \to \BJ .$
Thus from \eqref{eq:qbss4} and \eqref{eq:limit}, for each $a\in \BK$ we obtain
\begin{align*}
	\Gamma (a)
	&= \uwlim_{\alpha \in A} \Gamma_{F(\alpha)} (P_{F(\alpha)}a P_{F(\alpha)})
	\\
	&=
	\uwlim_{\alpha \in A} \Lambda \circ \Xi_{F(\alpha)}  (P_{F(\alpha)}  a P_{F(\alpha)} )
	\\
	&= \Lambda \circ \Xi (a) ,
\end{align*}
where the last equality follows from the normality of $\Lambda .$
Therefore we have $\Gamma = \Lambda \circ \Xi \CPpp \Lambda , $ which completes the proof of \eqref{it:qbss2}$\implies$\eqref{it:qbss1}. \qed

\section{Compactness principle for order dimension} \label{app:compact}
In this appendix, we prove the following proposition.
\begin{prop}[\cite{harzheim1970}] \label{prop:compact}
Let $(S, \ordpreceq)$ be a poset and let $\mathbb{F}(S)$ denote the family of finite subsets of $S$.
Then the following assertions hold.
\begin{enumerate}[1.]
\item \label{it:compact1}
For every $n\in \natn ,$ the equivalence
\begin{equation}
	\dimord (S, \ordpreceq) \leq n
	\iff
	[\dimord (F, \ordpreceq \rvert_F) \leq n  
	\quad (\forall F \in \mathbb{F}(S)) ]
	\label{eq:compact1}
\end{equation}
holds.
\item \label{it:compact2}
If $ \dimord (S, \ordpreceq)$ is finite, then there exists a finite subset $F \in \mathbb{F}(S)$ such that  $\dimord (F, \ordpreceq \rvert_F) = \dimord (S, \ordpreceq)$.
\item \label{it:compact3}
$\dimord (S, \ordpreceq)$ is infinite if and only if for every $n \in \natn$ there exists a finite subset $F\in \mathbb{F}(S)$ such that $\dimord (F, \ordpreceq \rvert_F) > n .$
\end{enumerate}
\end{prop}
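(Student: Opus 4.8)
The plan is to establish part~\ref{it:compact1} by a compactness argument and then read off parts~\ref{it:compact2} and~\ref{it:compact3} from it. In part~\ref{it:compact1} the implication ``$\Rightarrow$'' is routine: if $\mathcal{L}$ is a family of at most $n$ linear extensions of $\preceq$ with $\bigcap \mathcal{L} = \preceq$, then $\{ L\rvert_F : L \in \mathcal{L} \}$ is a family of at most $n$ linear extensions of $\preceq\rvert_F$ whose intersection is $\preceq\rvert_F$, so $\dimord(F, \preceq\rvert_F) \leq n$. The substance is the converse.

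For the converse I would equip $\{0,1\}^{S\times S}$ with the product topology, so that $X := \big(\{0,1\}^{S\times S}\big)^n$ — identified with the set of $n$-tuples $(R_1,\dots,R_n)$ of binary relations on $S$ — is compact by Tychonoff's theorem. For each finite tuple of elements of $S$, the requirement that it witnesses one instance of ``$R_i$ reflexive / antisymmetric / transitive / total'', of ``$R_i \supseteq \preceq$'', or of ``$\bigcap_i R_i \subseteq \preceq$'' constrains only finitely many coordinates of $X$ and hence defines a clopen subset of $X$. Let $\mathcal{C}$ be the collection of all these clopen sets; a point of $\bigcap \mathcal{C}$ is precisely an $n$-tuple of linear extensions of $\preceq$ realizing $\preceq$, which would give $\dimord(S,\preceq) \leq n$. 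By compactness it suffices to verify that $\mathcal{C}$ has the finite intersection property. Given finitely many members of $\mathcal{C}$ (the empty case being trivial), let $F \in \mathbb{F}(S)$ be the finite set of all elements of $S$ occurring in their defining constraints. By hypothesis $\dimord(F,\preceq\rvert_F) \leq n$, so choose linear extensions $M_1,\dots,M_n$ of $\preceq\rvert_F$ on $F$ with $\bigcap_i M_i = \preceq\rvert_F$ (repeating an extension if fewer than $n$ are needed), and set $\tilde M_i := M_i \cup \big((S\times S)\setminus(F\times F)\big)$, so that $\tilde M_i\rvert_F = M_i$. Since every chosen constraint refers only to elements of $F$, the tuple $(\tilde M_1,\dots,\tilde M_n)$ satisfies all of them; this gives the finite intersection property and completes part~\ref{it:compact1}.

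For part~\ref{it:compact2}, put $d := \dimord(S,\preceq) < \infty$. By ``$\Rightarrow$'' of part~\ref{it:compact1} every finite $F$ satisfies $\dimord(F,\preceq\rvert_F) \leq d$; and if every finite $F$ satisfied $\dimord(F,\preceq\rvert_F) \leq d-1$, the converse direction would force $\dimord(S,\preceq)\leq d-1$, a contradiction. Hence some finite $F$ has $\dimord(F,\preceq\rvert_F) = d$ (the small cases $d \leq 1$, which force $|S| \leq 1$ or $\preceq$ total, are immediate). For part~\ref{it:compact3}, $\dimord(S,\preceq)$ is infinite iff it fails to be $\leq n$ for every $n \in \natn$, which by part~\ref{it:compact1} is exactly the assertion that for every $n$ some finite $F$ has $\dimord(F,\preceq\rvert_F) > n$; alternatively, the ``if'' direction follows at once from Lemma~\ref{lemm:embedding}, since each $(F,\preceq\rvert_F)$ embeds into $(S,\preceq)$ by inclusion.

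The main obstacle is the converse of part~\ref{it:compact1}: one must check that each defining condition depends on only finitely many coordinates (so that it is clopen, hence closed), and that the witness $(\tilde M_1,\dots,\tilde M_n)$ built from a realizer of $(F,\preceq\rvert_F)$ really satisfies every selected constraint — in particular the instances of ``$\bigcap_i R_i \subseteq \preceq$'', which use $\bigcap_i M_i = \preceq\rvert_F$ together with the fact that all elements involved lie in $F$.
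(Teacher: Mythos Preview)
Your argument is correct and rests on the same idea as the paper's: Tychonoff compactness of (a power of) $\{0,1\}^{S\times S}$ is used to pass from realizers on finite subsets to a global realizer, and parts~\ref{it:compact2} and~\ref{it:compact3} are then read off from part~\ref{it:compact1} exactly as you do. The only difference is in how compactness is cashed out. The paper fixes, for each $F\in\mathbb{F}(S)$, a realizer $(L_F^{(k)})_{k=0}^{n-1}$ of $\preceq\rvert_F$, views this as a net on $\mathfrak{P}(S\times S)^n$ indexed by $(\mathbb{F}(S),\subseteq)$, extracts a convergent subnet, and then checks by hand that the limit tuple $(L^{(k)})_{k=0}^{n-1}$ satisfies reflexivity, antisymmetry, transitivity, totality, and $\bigcap_k L^{(k)}=\mathord\preceq$. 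Your finite-intersection-property formulation encodes these same conditions as clopen constraints up front and lets compactness deliver a point satisfying all of them simultaneously; this is slightly cleaner in that it avoids the explicit limit verifications, at the cost of having to exhibit the witness $(\tilde M_1,\dots,\tilde M_n)$ for the FIP. Either packaging is standard, and the mathematical content is the same.
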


For the proof of this proposition, we use the following product topology on the power set.
\begin{defi} \label{def:topo}
Let $X$ be a set and let $\mathfrak{P}(X)$ denote the power set (i.e.\ the family of all the subsets of $X$).
As usual we identify $\mathfrak{P}(X)$ with the product set $\natn_2^X$ by the correspondence 
\begin{equation}
	\mathfrak{P}(X) \ni A \mapsto (1_A (x))_{x\in X} \in \natn_2^X,
	\label{eq:surj}
\end{equation}
where 
\begin{equation*}
	1_A (x) := 
	\begin{cases}
		1 & \text{if $x\in A$;} \\
		0 & \text{if $x\not\in A.$}
	\end{cases}
\end{equation*}
From the product topology on $\natn_2^X$ of the discrete topology on $\natn_2$, the bijection \eqref{eq:surj} induces a topology on $\mathfrak{P}(X) ,$ which we just call the product topology on $\mathfrak{P}(X) .$
By Tychonoff\rq{}s theorem, the product topology on $\mathfrak{P}(X)$ is a compact Hausdorff topology.
A net $(A_i)_{i\in I}$ on $\mathfrak{P}(X)$ converges to $A \in \mathfrak{P}(X)$ in the product topology if and only if $1_{A_i}(x) \to 1_A (x)$ for all $x\in X, $
or equivalently, if and only if 
\begin{equation}
	\limsup_{i \in I} A_i = \liminf_{i \in I} A_i = A,
	\notag
\end{equation}
where 
\begin{equation*}
	\limsup_{i \in I} A_i := \bigcap_{i \in I} \bigcup_{j \in I : j \geq i} A_j ,
	\quad
	\liminf_{i \in I} A_i := \bigcup_{i \in I} \bigcap_{j \in I : j \geq i} A_j .
\end{equation*} 
From this we can see that a net $(A_i)_{i\in I}$ on $\mathfrak{P}(X)$ is convergent if and only if $\limsup_{i \in I} A_i = \liminf_{i \in I} A_i .$
\end{defi}

\noindent
\textit{Proof of Proposition~\ref{prop:compact}.}
\begin{enumerate}[1.]
\item
The \lq\lq{}$\implies$\rq\rq{} part of the claim~\eqref{eq:compact1} is obvious from Lemma~\ref{lemm:embedding}.
To prove the converse implication, take arbitrary $n \in \natn$ and assume that $\dimord (F , \ordpreceq \rvert_F) \leq n$ for all $F \in \mathbb{F}(S) .$
Then for every $F \in \mathbb{F}(S) $ there exists a sequence $(L^{(k)}_F)_{k=0}^{n-1}$ of total orders on $F$ such that $\preceq \rvert_F$ is realized by $\{L^{(k)}_F\}_{k=0}^{n-1}$. (The sequence $( L^{(k)}_F)_{k=0}^{n-1}$ can contain identical elements if $\dimord (F , \ordpreceq \rvert_F) < n$.)
Since $L^{(k)}_F \subseteq F \times F \subseteq S \times S$, we may regard $(L^{(k)}_F)_{F \in \mathbb{F} (S)}$ as a net on $\mathfrak{P} (S\times S) .$
Then by the compactness of the product topology on $\mathfrak{P}(S \times S)$, there exist subnets $(L^{(k)}_{F(i)})_{i \in I}$ $(k \in \Nn)$ such that 
\[
	\limsup_{i \in I} L^{(k)}_{F(i)}
	= \liminf_{i \in I} L^{(k)}_{F(i)}
	=: L^{(k)} \in \mathfrak{P} (S\times S) 
	\quad
	(k\in \Nn) .
\]
We now show that $\{L^{(k)}\}_{k=0}^{n-1}$ is a family of total orders on $S$ that realizes $\preceq ,$ which proves the \lq\lq{}$\impliedby$\rq\rq{} part of the claim~\eqref{eq:compact1}.

We first prove that each $L^{(k)}$ is a total order on $S$.
Let $x,y,z \in S$ be arbitrary elements.
\begin{description}
\item[(Reflexivity).]
If $\{ x \} \subseteq F$, we have $(x,x) \in  L^{(k)}_F$ and therefore $(x,x) \in L^{(k)}_{F(i)}$ eventually holds. 
Thus $(x,x) \in L^{(k)} .$
\item[(Antisymmetry).]
Assume $(x,y) , (y,x) \in L^{(k)}$. 
Then $(x,y) , (y,x) \in L^{(k)}_{F(i)}$ for some $i \in I .$ 
By the antisymmetry of $L^{(k)}_{F(i)}$, this implies $x=y .$
\item[(Transitivity).]
By the transitivity of $L^{(k)}_F$ we have 
\begin{equation}
	1_{L^{(k)}_F}(x,y) 1_{L^{(k)}_F}(y,z) \leq 1_{L^{(k)}_F}(x,z) 
	\notag
\end{equation}
when $\{x,y,z \} \subseteq F$.
Since $\{x,y,z \} \subseteq F(i)$ eventually holds for $i \in I$, we have
\begin{align*}
	1_{L^{(k)}}(x,y) 1_{L^{(k)}}(y,z) 
	&= \lim_{i \in I} 1_{L^{(k)}_{F(i)}}(x,y) 1_{L^{(k)}_{F(i)}}(y,z) \\
	&\leq \lim_{i \in I} 1_{L^{(k)}_{F(i)}}(x,z)
	\\
	&= 1_{L^{(k)}} (x,z),
\end{align*}
which implies the transitivity of $L^{(k)}$.
\item[(Totality).]
From the totality of $L^{(k)}_F$ we have
\[
	1_{L^{(k)}_F} (x,y) + 1_{L^{(k)}_F} (y,x) \geq 1 
\]
when $\{ x,y\} \subseteq F .$
Therefore we have 
\begin{equation*}
	1_{L^{(k)} } (x,y) + 1_{L^{(k)} } (y,x)
	= \lim_{i \in I}[1_{L^{(k)}_{F(i)}} (x,y) + 1_{L^{(k)}_{F(i)}} (y,x)]
	\geq 1 ,
\end{equation*}
which implies the totality of $L^{(k)}$.
\end{description}
Thus we have shown that $L^{(k)}$ is a total order on $S$.

We now show that $\{L^{(k)}\}_{k=0}^{n-1}$ realizes $\ordpreceq$.
For this, we have only to prove 
\begin{equation}
	x \preceq y \iff [(x, y)  \in  L^{(k)} \quad (\forall k \in \Nn)]  
	\label{eq:compact3}
\end{equation}
for arbitrary $x,y \in S.$
Since $\{ L^{(k)}_F \}_{k=0}^{n-1}$ realizes $\preceq \rvert_F ,$ we have 
\begin{equation}
	1_{\ordpreceq}(x,y) = 1_{\ordpreceq \rvert_F}(x,y)
	= \prod_{k=0}^{n-1} 1_{L^{(k)}_F} (x,y)
	\notag 
\end{equation}
when $\{ x, y \} \subseteq F .$
Therefore 
\begin{align*}
	\prod_{k=0}^{n-1} 1_{L^{(k)}} (x,y) 
	= \lim_{i \in I} \prod_{k=0}^{n-1} 1_{L^{(k)}_{F(i)}} (x,y)
	= 1_{\ordpreceq} (x,y) ,
\end{align*}
which implies \eqref{eq:compact3}.
Thus  $\{L^{(k)}\}_{k=0}^{n-1}$ realizes $\ordpreceq$.

\item
From Lemma~\ref{lemm:embedding}, we have $\dimord (F , \ordpreceq \rvert_F) \leq \dimord (S, \ordpreceq) =: d \in \natn$ for every $F \in \mathbb{F}(S) .$
If there exists no $F\in \mathbb{F}(S) $ such that $\dimord (F , \ordpreceq \rvert_F) = d ,$ then $\dimord (F , \ordpreceq \rvert_F) \leq d -1$ for all $F \in \FI$ and hence \eqref{eq:compact1} implies $d = \dimord (S ,\ordpreceq) \leq d-1 , $ which is a contradiction.
Thus there exists $F\in \mathbb{F}(S) $ such that $\dimord (F , \ordpreceq \rvert_F) = d .$
\item
The claim is proved as 
\begin{align*}
	\text{$\dimord  (S ,\ordpreceq)$ is infinite}
	& \iff \forall n \in \natn , \, [ \dimord  (S ,\ordpreceq)  > n ] \\
	& \iff \forall n \in \natn , \, \exists F \in \mathbb{F} (S) \, [ \dimord  (F ,\ordpreceq \rvert_F)  > n ] ,
\end{align*}
where the last equivalence follows from \eqref{eq:compact1}.
\qed
\end{enumerate}

\section*{References}
\providecommand{\newblock}{}

\end{document}